\newtheorem{theorem}{Theorem}[section]
\newtheorem{assumption}{Assumption}[section]
\newtheorem{corollary}{Corollary}[theorem]
\newtheorem{lemma}[theorem]{Lemma}
\theoremstyle{definition}
\newtheorem*{definition}{Definition}
\newtheorem{property}{Property}
\newcommand{\tr}{\mathbf{tr}}
\newcommand{\vol}{\mathbf{vol}}
\newcommand{\var}{\mathbf{var}}
\newcommand{\diag}{\mathbf{diag}}
\newcommand{\argmax}[1]{\underset{#1}{\mathrm{arg\,max}}}
\newcommand{\red}[1]{{\color{red}#1}}
\newcommand{\mI}{\mathcal I}
\newcommand{\mL}{\mathcal L}
\newcommand{\mA}{\mathcal A}
\newcommand{\mS}{\mathcal S}
\newcommand{\mG}{\mathcal G}
\newcommand{\mE}{\mathcal E}
\newcommand{\mR}{\mathcal R}
\newcommand{\mD}{\mathcal D} 
\newcommand{\mQ}{\mathcal Q}
\newcommand{\bE}{\mathbb E}
\newcommand{\mb}{\mathbf}
\newcommand{\pr}{\mathbf{Pr}}
\newcommand{\vr}{\mathbf{r}}
\newcommand{\vz}{\mathbf{z}}
\newcommand{\vx}{\mathbf{x}}
\newcommand{\ve}{\mathbf{e}}
\newcommand{\vb}{\mathbf{b}}
\newcommand{\supp}{{\mathbf{supp}}}
\newcommand{\regret}{{\mathrm{regret}}}
\newcommand{\subgauss}{{\mathbf{subgauss}}}
\def\1{\bm{1}}
\def\rmL{{\mathbf{L}}}
\def\vb{{\bm{b}}}
\def\ve{{\bm{e}}}
\def\vr{{\bm{r}}}
\def\vs{{\bm{s}}}
\def\vw{{\bm{w}}}
\def\vx{{\bm{x}}}
\def\vy{{\bm{y}}}
\def\vz{{\bm{z}}}
\def\mA{{\bm{A}}}
\def\mD{{\bm{D}}}
\def\mE{{\bm{E}}}
\def\mG{{\bm{G}}}
\def\mI{{\bm{I}}}
\def\mL{{\bm{L}}}
\def\mM{{\bm{M}}}
\def\mQ{{\bm{Q}}}
\def\mR{{\bm{R}}}
\def\mS{{\bm{S}}}
\def\mT{{\bm{T}}}
\def\mZ{{\bm{Z}}}
\DeclareMathAlphabet{\mathsfit}{\encodingdefault}{\sfdefault}{m}{sl}
\SetMathAlphabet{\mathsfit}{bold}{\encodingdefault}{\sfdefault}{bx}{n}
\def\gA{{\mathcal{A}}}
\def\gC{{\mathcal{C}}}
\def\gE{{\mathcal{E}}}
\def\gF{{\mathcal{F}}}
\def\gG{{\mathcal{G}}}
\def\gN{{\mathcal{N}}}
\def\gS{{\mathcal{S}}}
\def\gU{{\mathcal{U}}}
\def\gV{{\mathcal{V}}}
\newcommand{\R}{\mathbb{R}}
\DeclareMathOperator{\sign}{sign}
\title{Fast online node labeling with graph subsampling}
\author{\name Yushen Huang \email yushen.huang@stonybrook.edu \\
      \addr Department of Computer Science\\
      Stony Brook University
      \AND
      \name Ertai Luo \email ertai.luo@stonybroook.edu \\
      \addr Department of Computer Science\\
      Stony Brook University
      \AND
      \name Reza Babenezhad \email babanezhad@gmail.com \\
      \addr MILA \\ 
      \AND
      \name Yifan Sun \email yifan.sun@stonybrook.edu\\
      \addr Department of Computer Science\\
      Stony Brook University}
\begin{document}

\maketitle

\begin{abstract}
Large data applications rely on storing data in massive, sparse graphs with millions to trillions of nodes. Graph-based methods, such as node prediction, aim for computational efficiency regardless of graph size. Techniques like localized approximate personalized page rank (APPR) solve sparse linear systems with complexity independent of graph size, but is in terms of the maximum node degree, which can be much larger in practice than the average node degree for real-world large graphs. In this paper, we consider an \emph{online subsampled APPR method}, where messages are intentionally dropped at random.   We use tools from graph sparsifiers and matrix linear algebra to give approximation bounds on  the graph's spectral properties ($O(1/\epsilon^2)$ edges), and    node classification performance (added $O(n\epsilon)$ overhead).
\end{abstract}

\section{Introduction}
Large data applications like search and recommendation systems, rely on data stored in the form of very large, sparse, irregular graphs, where the number of nodes   can be on the order millions, billions, or even trillions. In such cases, graph-based methods, such as node prediction, must accomplish their tasks \emph{exclusively using local operations}, e.g. where the memory complexity is independent of graph size.  This is the intention of methods like the localized approximate Personalized Page Rank method (APPR) \citep{andersen2006local,page1999PageRank}, which approximates solving a sparse linear system by truncating messages whenever the residual of that coordinate is small. However, the complexity for these methods is often in terms of the maximum node degree, and their benefits are tied to the assumption that the graph node degrees are relatively uniform.

In this work, we consider a simple solution to this problem, where high-degree nodes subsample their neighbors in message-passing methods. 
This solution can save considerable memory overhead when the graph's node degree distribution is heavily skewed. 
 (See, for example, Figure \ref{fig:degree_distr}.)
 However, the disadvantage of this strategy is that the stochasticity leads a high variance between each trial;  thus, we implement  a mechanism for grounding the residual at each iteration, to reduce this variance.
%
%

We then evaluate this method on two APPR downstream tasks, one supervised and one unsupervised. 
In online node labeling, future node labels are predicted using the revealed labels of the past; the APPR method is used here to solve a linear system, with a carefully tuned right-hand-side as motivated by the graph regularization method of \cite{belkin2004regularization}, and the relaxation method of \cite{rakhlin2012relax}.
In unsupervised clustering, we use the APPR method to acquire an improved similarity matrix, which is  then clustered using nearest neighbors.

\paragraph{Contributions.} In this paper, we extend the work of  \citep{andersen2006local,zhou2023fast} to graphs with unfavorable node degrees,   using edge subsampling. 
 We propose a  simple approach: for a threshold $\bar q$, we identify all nodes with degree exceeding $\bar q$, and subsample their neighboring edges until they have $\leq \bar q$ neighbors. The remaining edges are reweighted such that the expected edge weights are held consistent. In offline graphs, sparsifications of this kind require at least one full sweep through the graph, and grows with $n$, where $n$ is the number of nodes. However, online sparsification reduces this  dependency on $n$, depending solely on the neighborhood structure of the query node. Our contributions are

\begin{itemize}\itemsep 0pt
\item We give a variance reduced subsampling APPR strategy which incrementally updates the primal variable (such as in iterative refinement), producing a stable higher-accuracy estimate for very little overhead. 
    \item We give concentration bounds on the learning performance in  offline sparsification, which are comparable to that of optimal sparsifiers in previous literature.
 
    \item In the case of online sparsification, we give high probability guarantees that the method will not stop early,  and show a $O(1/T)$ convergence rate overall and a linear convergence rate in expectation.  

    \item We show superior numerical performance of online node labeling and graph clustering when subsampled APPR is integrated.

    \end{itemize}

\subsection{Related works}

\subsubsection{Applications.} We investigate two primary node labeling applications. In (supervised) online node labeling, the nodes are visited one at a time, and at time $t$, one infers the $t$th node using the revealed labels of nodes $y_1,...,y_{t-1}$. (This models interactive applications such as online purchasing or web browsing.) In (unsupervised)  graph clustering, the graph nodes are preemptively grouped, using some standard clustering method over node embeddings learned through APPR. 

\paragraph{Online node prediction.} This  problem category
has been investigated by \citet{belkin2004regularization,zhu2003semi,herbster2006prediction,rakhlin2017efficient}, and many others. Using $\vz_t$ as a vector containing the already revealed label, then the solution to the APPR system is an approximation of those offered in  \citet{belkin2004regularization,zhu2003semi}, for choices of  regularization and interpolation. 
Relatedly, node prediction via approximate Laplacian inverse is also related to mean field estimation using truncated discounted random walks \citep{li2019optimizing}, with known performance guarantees. 
The series of papers \citep{herbster2006prediction,herbster2005online} directly attack the suboptimality in online learning when applied to graphs with large diameters. They show that for this class of problems, using direct interpolation, the worst-case rate cannot be sublinear, and suggest additional graph structures to assist in this regime. 
Finally, \citet{rakhlin2017efficient} tackled the problem of computing an online learning bound by offering a method, which can be seen as a modified right-hand-side of the linear system in \cite{belkin2003laplacian}. The advantage of \cite{rakhlin2017efficient} is that it provides a means of computing a learning regret bound. In \cite{zhou2023fast}, the analysis was tightened to $O(\sqrt{n})$ sublinear regret bounds, under the appropriate choice of kernel.

\paragraph{Node embeddings and graph clustering.}  The idea of learning node embeddings over large graphs has now been well studied, with tools like node2vec \citep{grover2016node2vec} and DeepWalk
 \citep{perozzi2014deepwalk}; see also 
 \cite{garcia2017learning}. Using the Laplacian as an eigenmap is also classical 
\citep{belkin2003laplacian,weinberger2004learning}.
It is also the use of PPR vectors as node embeddings that drove the original APPR method \citep{andersen2006local}, and has been extended to more applications such as clustering  \citep{guo2024fast} and improving GNNs \citep{bojchevski2020scaling}.

\subsubsection{Primary tool.}
In both applications, the primary tool  is to quickly and efficiently  solve a linear system where the primary matrix is the sparse graph Laplacian. 
Specifically, in many past works, the cost of solving this linear system is not accounted for in the method's complexity analysis, with the justification that there are other existing methods for an offline linear system solve; for example, the combination of offline sparsifiers  \citep{spielman2008graph} and fast iterative methods \citep{saad1980rates}.
Such a method reduces the $O(n^3)$ cost of direct linear systems solve to $O(n\log(n)/\epsilon^2) +\tilde O(n) $, which is a significant reduction. (Here,  $n$ is the number of nodes in the graph). \emph{However, for large enough $n$, any dependency on $n$ makes the method intractable.}

\paragraph{Local methods for linear systems.} 
 While \citet{page1999PageRank} presented the infamous Personalized PageRank (PPR) method, \citet{andersen2006local} provided the analysis that showed that, for linear systems involving graph Laplacians, adding mild thresholding and a specific choice of weighting would ensure a bounded sparsity on \emph{all} the intermediate variables used in computation; moreover, this bound was independent of graph size (e.g. number of nodes or edges), and depended only on node degree. \citet{fountoulakis2019variational} also connected this method with $\ell_1$ penalization of a quadratic minimization problem, which offered a variational perspective on the sparsifying properties of APPR. The analysis was also   applied to node prediction in \citet{zhou2023fast}. However, in all cases, complexity bounds depend on node degree, and are only optimal when node degree is independent of graph size.


\paragraph{Subsampling and sparsification.}
Our main contribution is in the further acceleration of APPR through ``influencer subsampling'', where high degree nodes are targetted for subsampling in order to reduce the memory complexity of single-step message propagation. The analysis of this follows from prior work in  graph sparsification. Specifically, 
existing offline graph sparsification methods include  \citet{karger1994random}, which showed that random subsampling maintained cut bounds, with complexity $\tilde O(m + n/\epsilon^3)$; \citet{benczur2015randomized}, who reduced this to $\tilde O(m\log^2(n))$ complexity and $O(n\log n/\epsilon^2)$ edges by subsampling less edges with estimated smaller cut values; and \citet{spielman2008graph,spielman2010algorithms,spielman2014nearly} who used the principle of effective resistance to define subsampling weights to obtain optimal spectral bounds.   In practice, the last two approaches are not feasible without additional randomized approaches, as computing cuts and effective resistances exactly also involve computing a large matrix pseudoinverse. 
More recently, \citet{saito2023multi} investigated estimating this resistance by creating a coordinate spanning set, which is closely related to our proposed scheme, and applied it to spectral clustering.

\section{Using graphs for node labeling}
\label{sec:nodelabeling}
\vspace{-2ex}
\paragraph{Notation.} Denote a graph as $\gG(\gV,\gE, \mA)$ where $\gV = \{1,...,n\}$ are the $n$ nodes, $\gE\subset \gV\times \gV$ is the set of edges in $\gG$, and $\mA$ is the adjacency matrix for the unweighted graph (e.g. $\mA_{u,v} > 0$ contain the edge weights whenever $(u,v)\in \gE$). Denote $n= |\gV|$ the number of nodes and $m= |\gE|$ the number of edges. Denote $d_u = \sum_v \mA_{u,v}$ the degree of node $u$ and $\mD = \diag(d)$, and the neighbors of a node $u$ as $\gN(u) = \{v : (u,v)\in \gE\}$, and for a subset of nodes $\gS\subseteq \gV$, we express its volume as $\vol(\gS) = \sum_{u\in \gS} d_u$. For vectors, $\supp(\vx)$ is the set of indices $i$ where $x_i\neq 0$.

Now assume that each node has a ground truth binary label $y_i\in \{1,-1\}$, for $i\in \gV$. For instance, nodes may represent customers, node label whether or not they will purchase a specific product. Consequently, the edges could reflect similarities or shared purchasing behaviors between customers. Thus, the likelihood of two connected nodes sharing the same label is high, making it possible to infer the label of an unobserved node based significantly on its   neighbors.

\vspace{-2ex}
\paragraph{Simple baseline: Weighted majority algorithm.}
Suppose that a small subset of the node labels are revealed, e.g. in a vector $ {\tilde \vy}=(\tilde y_1,...,\tilde y_n)\in \R^n$ where $\tilde y_t\in \{-1,1\}$ if node $t$'s label is revealed , and 0 otherwise.
The revealed labels can be equal to, or approximate, the ground truth labels $\vy = (y_1,...,y_n)\in \R^n$, $y_t\in \{-1,1\}$.
Then, a simple method for inferring the label of an \emph{unseen} node $t$ is to take the weighted average of its neighbors, e.g. 
\vspace{-2ex}
\[
\hat y_{t} =   \sign\left(\frac{\sum_{i\in \gN(t)}\mA_{i,t}\tilde y_i}{\sum_{j\in \gN(t)}\mA_{j,t}}\right) = \sign\left(\ve_t^T\mA\mD^{-1} \tilde \vy\right).
\vspace{-2ex}
\]

\paragraph{Message passing and graph Laplacians.}
We may generalize this further by expanding to the $K$-hop neighbors of $t$, using a discount factor $\beta^k$ where $k$ is the neighbor's hop distance  to $t$, e.g. for $\hat y_t = \sign(\hat y_{t,\mathrm{soft}})$, where for a transition matrix $\mT = \mA \mD^{-1}$
\vspace{-2ex}
\begin{equation}
\hat y_{t,\mathrm{soft}} = \sum_{k=0}^K  \sum_{i\in \gN(t)}(\beta^{k-1}\mT^k)_{i,t}\tilde y_i= \beta^{-1} \sum_{k=0}^K\ve_t^T(\beta \mA\mD^{-1} )^k\tilde \vy \overset{K\to\infty }{=} \beta^{-1}  \ve_t^T(\mI-\beta\mA\mD^{-1} )^{-1}\tilde \vy.
\vspace{-2ex}
\label{eq:wma_star}
\end{equation}
(Here, we take the sum from $k=0$ without impunity since $\tilde \vy_t = 0$.)
That is to say, a simple baseline motivates that a \emph{globally informative, discounted estimate} of the node labels can be written as the solution to the linear system 
\[
\beta^{-1} (\mI-\beta\mA\mD^{-1} ) \hat \vy =     \tilde \vy.\tag{PPR}
\]
This linear system is equivalent to the well-studied linear system of the personal page-rank (PPR) vectors, which first debued in web search engines \citep{page1999PageRank}. 
A symmetrized version of the PPR system 
\[
 \underbrace{(\mI-\beta\mD^{-1/2}\mA\mD^{-1/2} )}_{=:\mL} \hat \vy =     \tilde \vy.\tag{PPR-symm}
\]
is more commonly studied in machine learning works \citep{rakhlin2012relax,belkin2004regularization}. Here, the weighted Laplacian matrix $\mL$  is symmetric positive semidefinite. Since the symmetrized linear system and the original PPR system are equivalent under a rescaling of the vectors $\hat \vy$ and $\tilde \vy$ by $\mD^{1/2}$, one can view this symmetrized version as a further weighting of the seen labels, so that high-degree nodes have a tapered influence. In this work, we focus on solving this (PPR-symm) system.

\subsection{Approximate Personalized Page Rank}
\vspace{-2ex}
 The APPR method  \citep{andersen2006local}, (Alg. \ref{algo:appr}) is a memory-preserving approximation of PPR. Specifically, by having proper termination steps in the \textsc{Push} substep, the method offers a better tradeoff between learnability and memory locality. 
It solves the linear system equivalent using operations similar to that of the Power method, to that of (PPR-symm), with $\beta = \frac{1-\alpha}{1+\alpha}\in (0,1) $ as the \emph{teleportation parameter.}  
\begin{equation}
\label{equ:Qx=b-symm}
\underbrace{ \left(\mI -\frac{1-\alpha}{1+\alpha}   ( \mD^{-1/2}\mA\mD^{-1/2}) \right) }_{\mQ}\vx =\underbrace{ \frac{2 \alpha }{1+\alpha}\mD^{-1/2}\ve_s}_{\vb}, \qquad 
\vz = \frac{1+\alpha}{2\alpha} \mD^{1/2}( \vb-\mQ \vx).
\end{equation}
where, given the solution $\vx$, the variable $\pi =\mD\vx$ is the desired PPR vector in web applications. 
The matrix $\mQ$ is symmetric positive semidefinite, with eigenvalues in the range $[2\alpha/(1+\alpha),1]$.
Note that Alg. \ref{algo:appr} assumes that the right-hand side is $\ve_s$; therefore solving a full linear system with a dense right-hand side requires $n$ APPR calls, accumulated through linearity.

\begin{wrapfigure}{r}{.5\textwidth}
\begin{minipage}{\linewidth}
\begin{algorithm}[H]
\caption{APPR($\gG$, $s$, $\epsilon$)\citep{andersen2006local}}
\label{algo:appr}
\begin{algorithmic}[1]
\Require  $ \gG = (\gV,\gE,\mA)$,  starting node $s$, tolerance $\epsilon$
\State \textbf{Init}: $\vx^{(0)} \leftarrow \mathbf{0}$, $\vz^{(0)} \leftarrow \ve_ s$, $t \leftarrow 1$, $ \gS^{(0)} = \{s\}$
\While{$\gS^{(t)}  \neq \emptyset$}
\State Pick $u\in \gS^{(t-1)}$
    \State $\vx^{(t)}_u \leftarrow \vx^{(t-1)}_u + \alpha  \cdot \frac{\vz^{(t-1)}_u}{\sqrt{d_u}}$ \label{step:push1}

    \For{$v \in \gN(u)$}

    \State $\vz^{(t)}_v \leftarrow \vz^{(t-1)}_v +  \frac{(1-\alpha)\mA_{u,v}}{2d_u} \vz^{(t-1)}_u$
    \EndFor
    \State $\vz^{(t)}_u \leftarrow \frac{(1-\alpha)}{2} \vz^{(t-1)}_u $\label{step:push2}
    \State $ \gS^{(t)} = \{v : v\in \gS^{(t-1)} \cup \gN(u), \; |\vz_v|\geq d_v \epsilon \}$
    \State $t \leftarrow t + 1$
\EndWhile
\State \textbf{Return} $\vx^{(t)}$
\end{algorithmic}
\end{algorithm}
\end{minipage}
\end{wrapfigure}

In Alg. \ref{algo:appr}, Steps \ref{step:push1} to \ref{step:push2} can be summarized as a \textsc{Push} operation, because of its effect in ``pushing'' mass from the residual to the variables. 
The residual vector $\vz = \frac{1+\alpha}{2\alpha} \mD^{1/2} (\vb-\mQ\vx^{(t)})$ is used to identify which node messages to ``push forward", and which to terminate. In \cite{fountoulakis2019variational}, this method is shown to be closely related to the Fenchel dual of minimizing 
\[
f(\vx) = \frac{1}{2}\vx^T\mQ\vx + \vx^T\vb + \epsilon \|\vx\|_1;
\]
thus the steps of APPR can be viewed as directly promoting $\ell_1$-regularized sparsity (or, in the language of large graphs, memory locality).

The following Lemma from \citet{andersen2006local} form the basis of the memory locality guarantee of this method. The steps within the while loop are often referred to collectively as the \textsc{Push} operation.

\begin{lemma}[Monotonicity and conservation \citep{andersen2006local}]
\label{lem:conservation_monotonicity}
    For all $t$, $\vx^{(t)} \geq 0$,  $\vr^{(t)}\geq 0$. 
    Moreover, 
    \begin{equation}
   \|\vx^{(t+1)} \|_1 \geq \|\vx^{(t)}\|_1, \quad 
    \|\vr^{(t+1)} \|_1 \leq \|\vr^{(t)}\|_1,  
    \|\vz^{(t)}\|_1 + \| \mD^{1/2} \vx^{(t)}\|_1 = 1
    \label{eq:monotonicity_conservation}
    \end{equation}
    for $\vz^{(t)} = \frac{1+\alpha}{2\alpha} \mD^{1/2}\vr^{(t)}$.
And,
    denoting $\supp(\vx)$ as the support of $\vx$ (e.g. the set of indices $i$ where $x_i\neq 0$),
    \[
    \supp(\vx^{(t)}) \subseteq \supp(\vx^{(t+1)}) \subseteq \supp(\vx^*).
    \]
\end{lemma}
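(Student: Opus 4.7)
The plan is to prove all four assertions by induction on $t$. The base case is immediate: $\vx^{(0)}=\mathbf{0}\geq 0$, $\vz^{(0)}=\ve_s\geq 0$, and the conservation identity reads $\|\ve_s\|_1+0=1$. For the inductive step, assume all stated properties hold at iteration $t$, and consider the push at the selected node $u\in \gS^{(t)}$. Membership in $\gS^{(t)}$ gives $|\vz_u^{(t)}|\geq d_u\epsilon>0$, and the inductive non-negativity hypothesis then forces $\vz_u^{(t)}>0$ strictly. From this, each Push line can be inspected coordinate by coordinate: $\vx_u^{(t+1)}=\vx_u^{(t)}+\alpha\vz_u^{(t)}/\sqrt{d_u}$ strictly increases a non-negative entry; $\vz_v^{(t+1)}$ for $v\in\gN(u)$ is the sum of a non-negative term and a positive increment; and $\vz_u^{(t+1)}=\frac{1-\alpha}{2}\vz_u^{(t)}$ remains positive. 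This gives both $\vx^{(t+1)}\geq 0$ and $\vz^{(t+1)}\geq 0$, and entrywise $\vx^{(t+1)}\geq \vx^{(t)}$.

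Next I would explicitly compute the signed changes in the two $\ell_1$ norms, exploiting non-negativity to collapse $\|\cdot\|_1$ to a sum. The change in $\|\vz\|_1$ decomposes into a loss at $u$ and a gain spread across $\gN(u)$:
\[
\|\vz^{(t+1)}\|_1-\|\vz^{(t)}\|_1 = \Bigl(\tfrac{1-\alpha}{2}-1\Bigr)\vz_u^{(t)} + \sum_{v\in\gN(u)}\frac{(1-\alpha)\mA_{u,v}}{2d_u}\vz_u^{(t)} = -\tfrac{1+\alpha}{2}\vz_u^{(t)} + \tfrac{1-\alpha}{2}\vz_u^{(t)} = -\alpha \vz_u^{(t)},
\]
using $\sum_v \mA_{u,v}=d_u$. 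Simultaneously, $\|\mD^{1/2}\vx^{(t+1)}\|_1-\|\mD^{1/2}\vx^{(t)}\|_1=\sqrt{d_u}\cdot\alpha\vz_u^{(t)}/\sqrt{d_u}=+\alpha\vz_u^{(t)}$. The two increments cancel, yielding $\|\vz^{(t+1)}\|_1+\|\mD^{1/2}\vx^{(t+1)}\|_1=\|\vz^{(t)}\|_1+\|\mD^{1/2}\vx^{(t)}\|_1=1$, and also the monotone inequalities $\|\vx^{(t+1)}\|_1\geq \|\vx^{(t)}\|_1$ and $\|\vz^{(t+1)}\|_1\leq \|\vz^{(t)}\|_1$, the latter of which transfers to $\|\vr^{(t+1)}\|_1\leq\|\vr^{(t)}\|_1$ via the diagonal rescaling $\vz=\frac{1+\alpha}{2\alpha}\mD^{1/2}\vr$.

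The left support inclusion $\supp(\vx^{(t)})\subseteq\supp(\vx^{(t+1)})$ is immediate from entrywise monotonicity $\vx^{(t+1)}\geq\vx^{(t)}\geq 0$. For the right inclusion $\supp(\vx^{(t+1)})\subseteq\supp(\vx^*)$, I would argue by viewing $\vx^*$ as the limit (or terminal iterate) of the same recursion initialized from $\vx^{(0)}=0$: because the entrywise sequence is monotone non-decreasing and bounded in $\ell_1$ (conservation plus non-negativity), it converges to some $\vx^*$ satisfying $\vx^*\geq \vx^{(t)}$ for all $t$, which gives the desired support containment. The main conceptual obstacle is this last item: one must ensure the sequence of partial iterates does converge to the same $\vx^*$ referenced in the statement, but this is a straightforward monotone-convergence argument once the $\ell_1$ bound from conservation is in hand. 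Everything else reduces to checking the arithmetic of a single Push using the degree identity $\sum_{v\in\gN(u)}\mA_{u,v}=d_u$.
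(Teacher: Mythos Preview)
Your argument is essentially the same as the paper's: induction on $t$, verifying non-negativity line-by-line from the \textsc{Push} updates, then computing $\|\vz^{(t+1)}\|_1-\|\vz^{(t)}\|_1=-\alpha\vz_u^{(t)}$ via $\sum_v\mA_{u,v}=d_u$ and matching it against $\|\mD^{1/2}\vx^{(t+1)}\|_1-\|\mD^{1/2}\vx^{(t)}\|_1=+\alpha\vz_u^{(t)}$. You go slightly further than the paper's appendix proof by also supplying the support-inclusion argument (entrywise monotonicity plus an $\ell_1$ bound from conservation giving $\vx^{(t)}\leq\vx^*$), which the paper states but does not prove.

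One genuine slip: the sentence ``transfers to $\|\vr^{(t+1)}\|_1\leq\|\vr^{(t)}\|_1$ via the diagonal rescaling'' is not valid. The rescaling $\vz=\frac{1+\alpha}{2\alpha}\mD^{1/2}\vr$ is by a non-constant diagonal, so monotonicity of $\|\vz\|_1$ does \emph{not} imply monotonicity of $\|\vr\|_1=\frac{2\alpha}{1+\alpha}\sum_i d_i^{-1/2}\vz_i$; indeed, pushing at a high-degree hub in a star graph can increase $\|\vr\|_1$ when $\alpha$ is small. The paper's own proof sidesteps this by restating and proving only the $\|\vz\|_1$ inequality (the appendix version of the lemma has $\|\vz^{(t+1)}\|_1\leq\|\vz^{(t)}\|_1$ in place of the $\vr$ claim), so you should do the same rather than attempt the transfer.
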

In other words, \emph{the memory complexity of the intermediate variables $\vx^{(t)}$ is bounded by that of its final solution $\vx^*$.} Note that this is not the typical case for sparse methods, such as the proximal gradient method, whose intermediate variables can be dense even if the final solution is sparse. Rather, it is the subtle interplay of $\alpha$, $\epsilon$, and the \textsc{Push} operation that maintains this quality.

\paragraph{Bound on $\supp(\vr^{(t)})$.} This auxiliary variable is also an important memory-using component. Because of the nature of the \textsc{Push} method, one can infer that 
\[
\supp(\vr^{(t)})\subseteq \supp(\vx^{(t)}) 
\cup \{v : v\in \gN(u),\; u\in \supp(\vx^{(t)})\}.
\]
In other words, for a graph with unweighted edges, the complexity of the auxiliary variable $\vr^{(t)}$ is bounded by the complexity of the main variable $\vx^{(t)}$
\[|\supp(\vr^{(t)})| \leq |\supp(\vx^{(t)})| +  \vol(\supp(\vx^{(t)})).
\]
However, this bound is sensitive to the node degrees, especially for those active in the main variable.

\section{Influencer-targeted sparsification}
\label{sec-sparsify}
\begin{wrapfigure}{r}{.3\linewidth}
    \includegraphics[width=\linewidth]{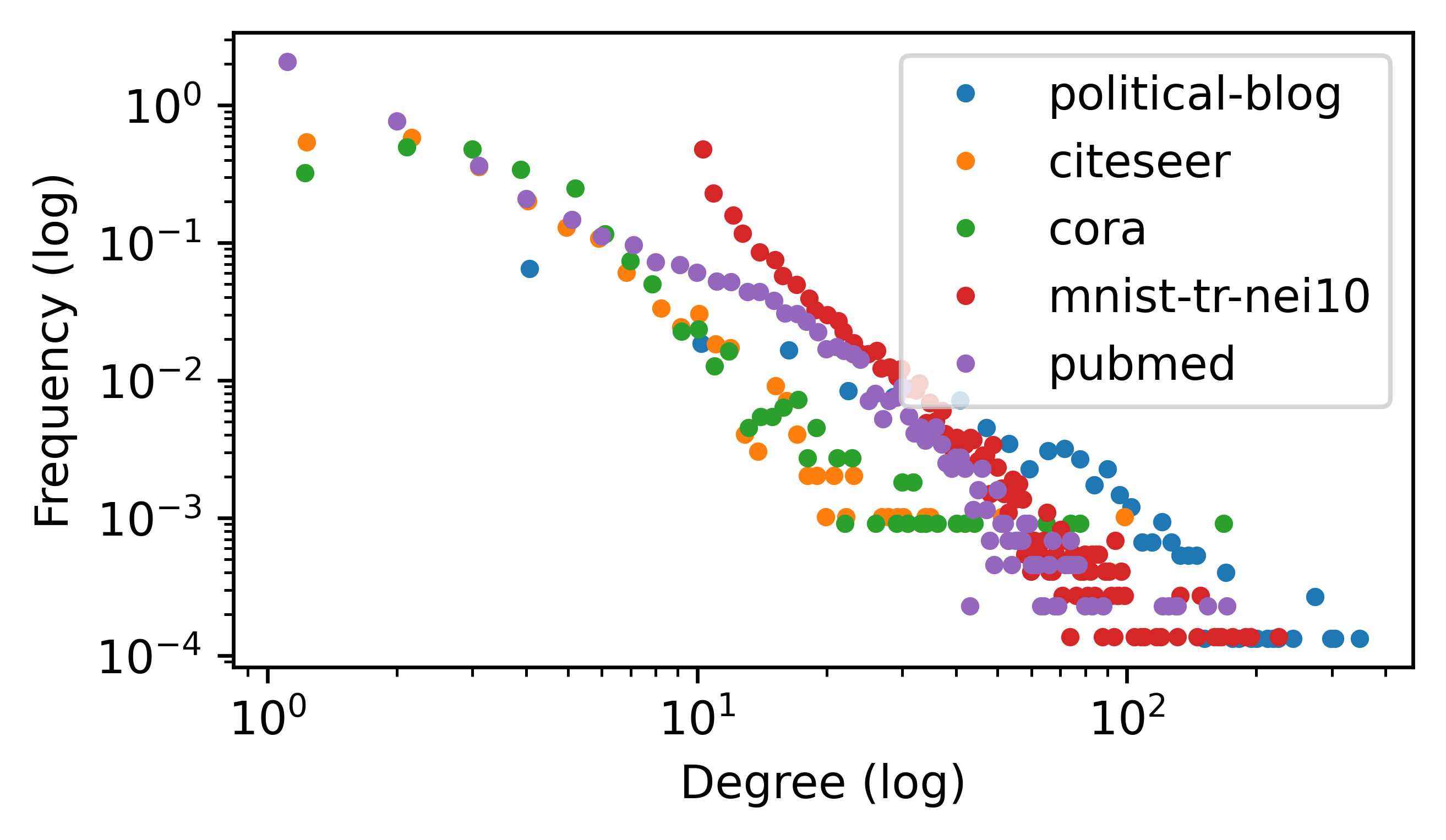}\\
    \includegraphics[width=\linewidth]{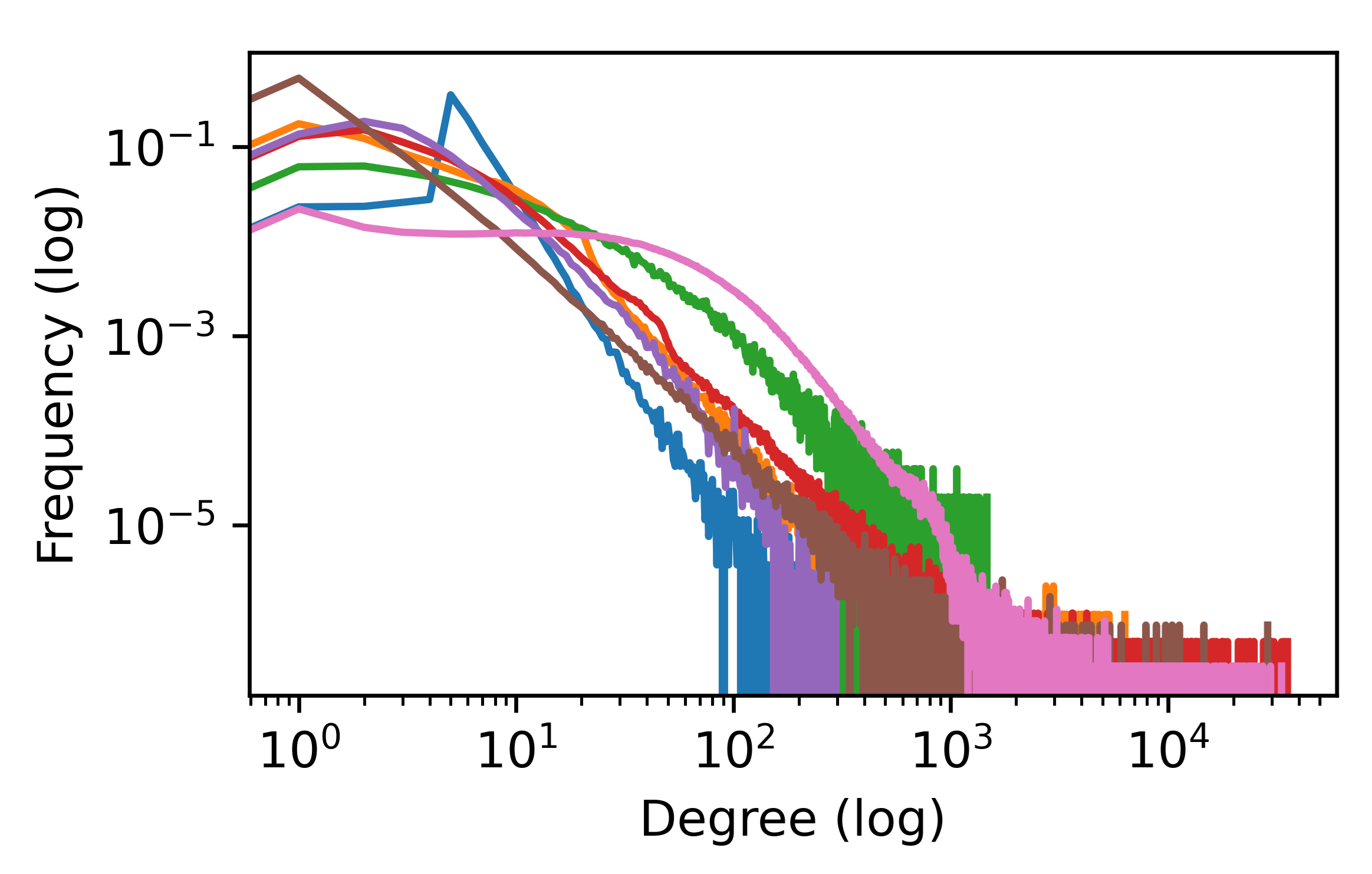}
    \caption{Degree distributions of small (top) vs  large (bottom) graphs}
    \label{fig:degree_distr}
\end{wrapfigure}
Degree distributions over real world graphs are often very heavy-tailed, as demonstrated in Figure \ref{fig:degree_distr}. 
For message passing methods, this presents a practical challenge: if a node transmits messages to all its neighbors at each step, truncating the number of steps might not substantially alleviate the computational burden. This is due to the significant memory requirements when propagating messages from ``influencer'' nodes, e.g. those with exceptionally high degrees, which serve as major hubs in the network.


To combat this, we propose two schemes: \emph{offline sparsification}, where a new, sparsified graph is produced and used in place of the original one (in similar spirit as \cite{karger1994random,spielman2008graph}, etc.); and \emph{online sparsification}, where the APPR method itself subsamples neighbors at each step. \textit{A major goal of this paper is to show this method's consistency in learning tasks, and improved computational efficiency in practice.}

First, we ask a question: is it better to remove the edge of an ``influencer'' (high-degree node), or will such an action disproportionately degrade performance?  To offer some preliminary intuition, we compare different sparsifiers in terms of correlation with  resistive distance, a measure proposed by \cite{spielman2008graph} for spectrally optimal graph sparsification; and by analyzing the edge ratios of sparsified graphs to assess their alignment with node learnability.

        \begin{figure}
            \centering
        \includegraphics[width=\linewidth]{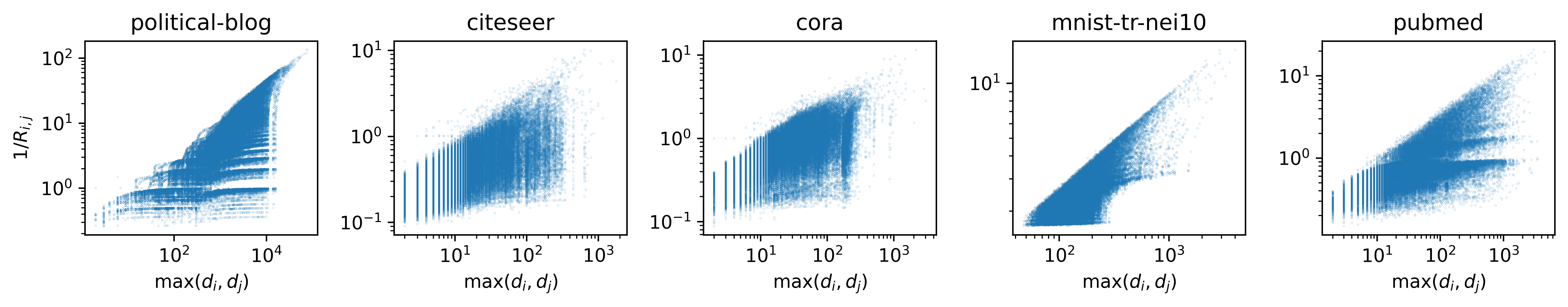}
        \caption{\textbf{Is it better to subsample influencers?} Correlation of inverse resistive distance (conductance, spectrally optimal) with high degree connecting edge is largely positive, across several graphs. 
        }
        \label{fig:resistive_influencer}
        \end{figure}
\subsection{Resistive distance}
For an unnormalized graph Laplacian $\mL_\mathrm{un} = \mD-\mA$, the resistive distance between nodes $i$ and $j$ is 
\[
\mR_{i,j} = \mL^\dagger_{i,i} + \mL^\dagger_{j,j} - 2\mL^{\dagger}_{i,j}.
\]
In electrical engineering, this measures the effective resistance between two junctions $i$ and $j$ in a network of resistors, whose weights form the weights of the graph ($\mA_{i,j} = 1/r_{i,j}$). Importantly,  
resistive distance is known to be the optimal metric for offline sparsification for preserving spectral properties \citep{spielman2008graph}. However, in general, it is challenging to compute, as it requires a full Laplacian pseudoinverse. 
Figure  \ref{fig:resistive_influencer} compares a graph edge's ``influencer status'' (e.g. highest degree connected node)  by comparing the correlation between an influencer-connecting edge, and the edge's inverse resistive distance (conductance). There is a positive correlation, hinting that indeed this is a good (and cheap) metric for subsampling. 

\paragraph{Non-spectral properties.} 
 Figure \ref{fig:cutvals_offline} investigates the performance of edge subsampling as it pertains to node labeling, our desired downstream task. 
 Specifically, it gives the proportion of edges that connect different-labeled nodes over same-labeled nodes. (Smaller is better.)   This suggests that influencer-based sparsity not only preserves but can sometimes improve the edge ratio despite a reduction in the number of edges (even moreso than resistive subsampling).

\subsection{Graph  sparsification}
We next consider a general (offline) graph sparsification scheme, where $\gG = (\gV,\gE,\mA)$ is sparsified into $\gG' = (\gV,\gE',\mA')$ where $\gE'\subset \gE$. This is given in Alg. \ref{alg:offline_sparsify}. 
We consider three main forms of sparsification: uniformly removing edges \citep{karger1994random}, removing edges based on resistive distance \citep{spielman2008graph}, and removing edges based on node degree (influencer). 
 We first give high-concentration bounds influencer-based sparsifications, to show asymptotic consistency.

\begin{tabular}{cc}
\begin{minipage}{.45\textwidth}
 \begin{algorithm}[H]
    \caption{General offline sparsifier }
     \begin{algorithmic}[1]
    \Require Graph $ \gG = (\gV,\gE,\mb A)$, \\edge probabilities $p_{u,v}$ for $(u,v)\in \mE$
    
    \State Initialize  $\tilde \gE = \emptyset$, $ \tilde{\mA} = \mb 0$, $ \tilde \gG = (\gV,\tilde  \gE,\tilde{\mA})$
    \For{$i = 1,...,m$}
    \State Randomly pick edge $(u,v)$. 
    \State With probability $p_{u,v}$, \\
    $\tilde{\mA}_{u,v} = \tilde{\mA}_{v,u}= 
    \frac{1}{p_{u,v}}, \; \tilde  \gE = \tilde  \gE \bigcup \{(u,v)\}$

    \EndFor
     
    \Return $\tilde \gG = (\gV,\tilde \gE,\tilde{\mb A})$
    \end{algorithmic}
    \label{alg:offline_sparsify}
    \end{algorithm}

\end{minipage}
&
\begin{minipage}{.5\textwidth}
 
     \begin{center}
     \includegraphics[width=.8\linewidth]{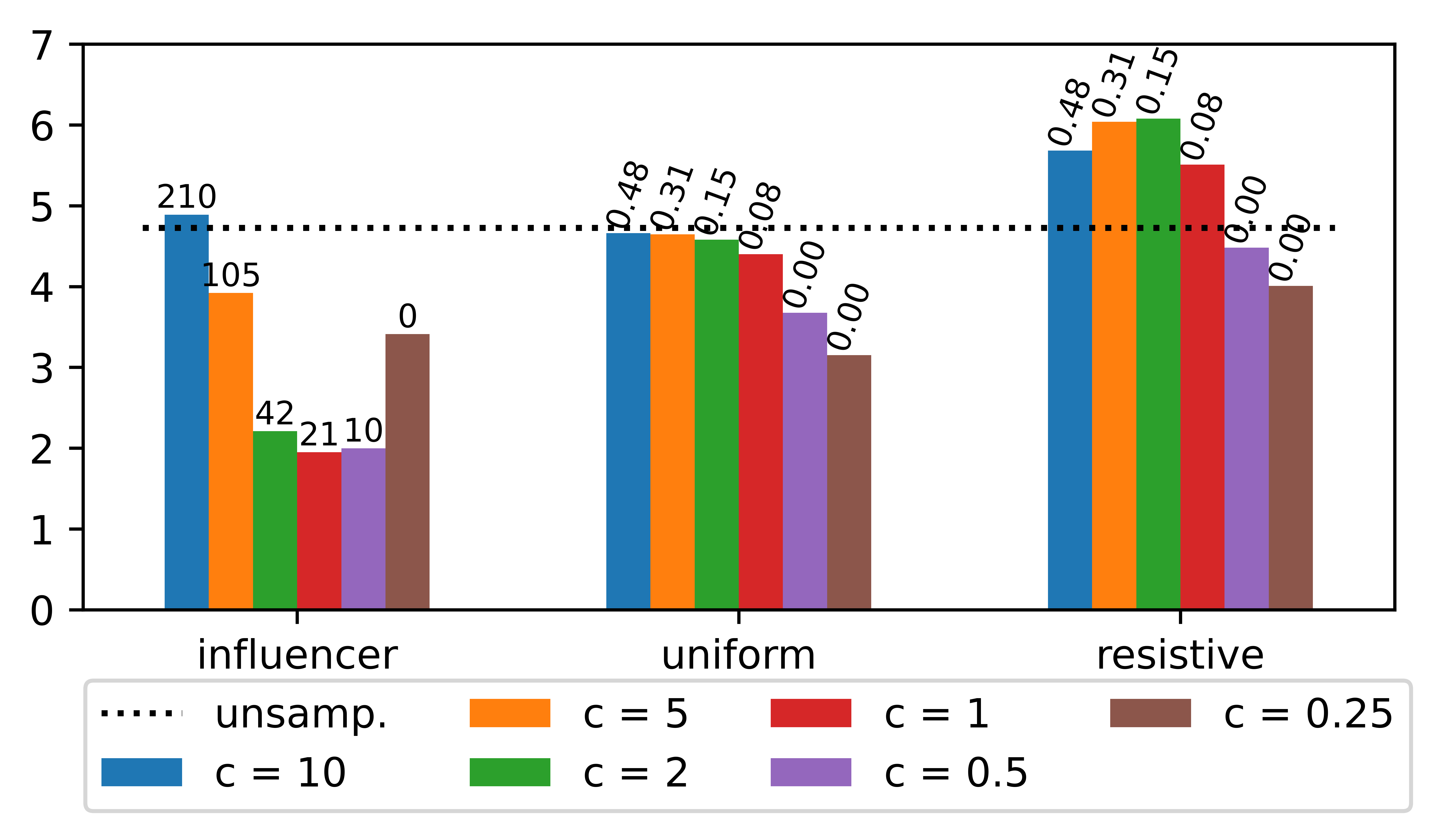}

 \end{center}
 \captionof{figure}{ 
 \textbf{Edge ratio for political-blog.}  Here, $\bar q = c\times$ median distribution for (inf.), and  the resulting sparsity level is used for $p$ in (res.) and (unif.).  }
     \label{fig:cutvals_offline}
 \end{minipage}

 \end{tabular}

\begin{theorem}[Offline sparsification]
\label{th:offline_sparse}
Consider $\mL = \mI - \mD^{-1/2}\mA\mD^{-1/2}$, $\tilde \mL$ the Laplacian matrices corresponding to a graph and its sparsified version, and $\bar q$ the subsampling threshold.  Then, for any $\vx\in \R^n$,  
\[
\pr(|\vx^T\tilde\mL_I\vx - \vx^T\mL_I\vx|\geq \epsilon)
\leq  2
\min\{e^{-\frac{ \epsilon^2 }{8\|\vx\|_\infty^2\|\vx\|_2^2}},e^{-\frac{ \epsilon^2 }{8\|\vx\|_\infty^4|\gS_I|}}\}.
\]
\end{theorem}
For a normalized $\|\vx\|_2=O(1)$, useful values of $\epsilon \in (0,2)$. The bound's usefulness also depends on the choice of $\vx$, and the ratio $\frac{\|x\|_2}{\|x\|_\infty}$. For a Gaussian random vector in $\R^n$, this ratio is about $O(\sqrt{n/\log(n)})$. Plugging in this trend, 
the right-hand side reduces to 
$
O(\min\{e^{-\frac{ \epsilon^2 n}{8\log(n)  }},e^{-\frac{ \epsilon^2 n^2}{8 \log(n)^2|\gS_I|}}\}).
$
Note that both terms are nontrivial for very large $n$, with stronger rates if $|\gS_I|< O(n/\log(n))$. 

\begin{corollary}
For $n$ large enough,
\[
\pr(\mL_I-\epsilon \mI \preceq \tilde \mL_I \preceq \mL_I + \epsilon \mI) \geq1-
12\exp(-\frac{\epsilon^2n^3}{32\log(n)}) 
-
 4\exp\left(-c'n\epsilon/2\right) 
\]
\end{corollary}

\section{APPR with online sparsification}
\label{sec:randomappr}

\vspace{-3ex}
    
\begin{tabular}[t]{ccc}

\begin{minipage}[t]{.5\textwidth}
 \begin{algorithm}[H]
    \caption{\textsc{DualCorrect}($\gG$,   $\vx$, $\bar q$)}
    \label{algo:dualcorrect}
\begin{algorithmic}[1]
    
    \State  Identify \vspace{-2ex}
    \[
    \gU\subseteq \supp( \vx), \qquad  |\gU|\leq \bar q.
    \vspace{-2ex}
\]
   \State $\tilde \vz = \mb 0$
\For{$u\subset \gU$}

  \State Sample neighbors $\gS\subset \gN(u)$
\State Update\vspace{-2ex}
 \[
\Delta {\vz} = ({x}_{u}  \cdot  \mA \mD^{-1}\ve_{k})_{(\gS)}
\vspace{-2ex}
\]
\State Push operation
\vspace{-2ex}
\begin{eqnarray*}
\tilde z &\leftarrow& \tilde z + \frac{(1-\alpha) \sqrt{d_u} \Delta \vz }{ 2\alpha }\\
\tilde z_u &\leftarrow& \tilde z_u -\frac{(1+\alpha) \sqrt{d_k} }{2 \alpha }
\vspace{-2ex}
\end{eqnarray*}

\EndFor


\State \textbf{Return} $\tilde{\vz} $
 
\end{algorithmic}
    \end{algorithm}
    
\end{minipage}
 &
    \begin{minipage}[t]{.5\textwidth}
 \begin{algorithm}[H]
    \caption{\textsc{PushAPPR}($\gG$, $\vx$, $\vz$, $\bar q$, $\gA$)}
    \label{algo:push-appr}
\begin{algorithmic}[1]

    \State Set $\vx^{(0)} = \vx$, $\vz^{(0)} = \vz$
    \For{$u_i\in \gA \neq \emptyset$}  
     \State Update $\vx^{(i+1)} \leftarrow \vx^{(i)} + \frac{\alpha}{\sqrt{d_{u_i}}} {\vz}^{(i)}_{u_i}\ve_{u_i}$ 
     
    \State Sample neighbors $\gS\subset \gN(u_i)$, $|\gS|\leq \bar q$

    \State Update 
    \[\Delta {\vz}^{(i)} = (\vz^{(i)}_{u_i} \cdot \mA\mD^{-1} \ve_{u_i})_{(\gS)}
    \]

\State Push operation
        \begin{eqnarray*}
        {\vz}^{(i+1)} &\leftarrow& {\vz}^{(i)}  + \tfrac{1-\alpha}{2}    \Delta {\vz}^{(i)},\\
        {\vz}_{u_i}^{(i+1)} &\leftarrow& \tfrac{1-\alpha}{2}{\vz}_{u_i}^{(i)} ,
        \end{eqnarray*}

    \EndFor
    \State \textbf{Return} $\vx^{(|\gS^{(t)}|)}$, $\vz^{(|\gS^{(t)}|)}$
    
\end{algorithmic}
    \end{algorithm}

 \end{minipage}

\end{tabular}

\begin{wrapfigure}{r}{.5\textwidth}

 \begin{minipage}{\linewidth}
 
    \begin{algorithm}[H]
    \caption{\textsc{RandomAPPR}($\gG$, $\vs$, $\epsilon$, $\bar q$) with variance reduced debiasing}
    \label{algo:random-appr}
\begin{algorithmic}[1]
\Require $c < 1$, tol. $\epsilon>0$,   parameter $\alpha$
\State \textbf{Init} $\bar \vx = \vx^{(0)} = 0$, $\bar \vz = \vz^{(0)} = \ve_s$
\For{$t = 1,\dots$}

     \If{Dual correct}
     
 \% Estimate unbiased $ \vz^{(t)} = \mD^{-1}\mQ \vx^{(t)}$
 
\State $\tilde \vz \leftarrow \textsc{DualCorrect}(\gG,   \bar q, \vx^{(t)} )$

    \State $\bar \vx \leftarrow \vx^{(t)}$, $\vx^{(t)} \leftarrow \mb 0$
    \State $\bar \vz \leftarrow \bar \vz - \tilde \vz$, $\vz^{(t)} \leftarrow \bar \vz$
\EndIf

    \State Find set $\gA^{(t)} = \{k : | {z}_k^{(t)}| > cd_k\epsilon \}$
\If{$\gA^{(t)} = \emptyset$}
        \State \textbf{break}
    \EndIf
    \State $ \vx^{(t+1)}, \vz^{(t+1)} =\textsc{PushAPPR}(\gG, \vx^{(t)}, \vz^{(t)}, \bar q, \gA^{(t)})$.
\EndFor
\State \textbf{Return} $\bar \vx + \vx^{(T)}$
\end{algorithmic}
    \end{algorithm}
    \end{minipage}
    \end{wrapfigure}

We now consider the problem of improving  Alg. \ref{algo:appr} via \emph{online sparsification}; e.g., the APPR method itself is modified to perform subsampling at each step. Here,  the notation  is defined for unbiased sampling
\[
({\vw}_{(\gS)})_i = \frac{\vw_i n}{|\gS|} \text{ if } i \in \gS, \qquad \tilde{\vw}_i = 0 \text{ otherwise}.
    \]
The \textsc{Random-APPR}   algorithm (Alg \ref{algo:random-appr}) aims to compute an approximation of the vector $\mathbf{x}$ for a given graph $\mathcal{G}$, accuracy $\epsilon$, and start node $s$.
Specifically, for a query node $u$, we subsample $\bar q$ of its neighbors $\gN(u)$ whenever $|\gN(u)|>\bar q$, for some threshold $\bar q$ we set. 
This is to eliminate any extra memory requirements in forming an offline sparsified graph. Additionally, an online implementation allows for more exploration. If the starting node $e_s$ is only connected to influencers, for example, an offline sparsification will likely disconnect it from the graph altogether; this is not necessarily true in online subsampling.


\paragraph{Resampling the dual variable.}
Consider Alg. \ref{algo:random-appr} in which the dual correction step is completely skipped. Then, note that the  \textsc{Push} steps in Alg. \ref{algo:random-appr} are the same as those in Alg. \ref{algo:appr}. While in expectation, the solution $\vx^{(T)}$ will be that of the original, unsampled solution, in each run the dual variable $\vz^{(t)}$ will drift, causing high variance in $\vx$. Therefore, the primary purpose of \textsc{DualCorrect} is  to provide an unbiased estimate of the dual variable $\vz^{(t)}$ at each step, offering an important dual correction. This is critical to achieve similar bounds despite subsampling. 

However, simply resampling the dual variable at each step is computationally expensive, since using a small sampling rate for this specific step  can result in mismatch between the  primal iterate $\vx^{(t)}$ and the dual $\vz^{(t)}$, which causes algorithmic instability.
 Therefore, we borrow ideas of ``iterative refinement'' from large-scale convex optimization solvers, and increment the primal and dual variables at each dual correction step. In other words, at each such step, the linear system is shifted from $\mQ \bar \vx = \vb$ to $\mQ \vx^{(t)} = \vr$ where $\vr$ is the current computed residual. Then by accumulating $\bar \vx = \vx^{(t_1)} +  \vx^{(t_2)} + \cdots$ at each correction step, we return $\bar \vx$ the sum of the incrementally computed primal variables. In practice, this method is much more stable, and achieves very little extra complexity overhead.



\subsection{Analysis}
All proofs are given in Appendix \ref{app:proofs:nodelabeling}, \ref{app:proofs:sparsify}, and \ref{app:proofs:randomappr}.
\begin{theorem}
    In online APPR, for all $t$, $i\in \gS_t$, $\bE[\bar{\vz}^{(t,i)} ] \geq 0$, $\bE[{\vx}^{(t,i)} ] \geq 0$ and
   \[
    \bE[\bar{\vz}^{(t,i+1)} ]\leq \bE[\bar{\vz}^{(t,i)} ] ,\;  \bE[\bar{\vz}^{(t+1)} ]\leq \bE[\bar{\vz}^{(t)} ], \quad 
   \bE[{\vx}^{(t,i+1)} ]\geq \bE[{\vx}^{(t,i)} ] ,\;  \bE[{\vx}^{(t+1)} ]\geq \bE[\bar{\vx}^{(t)} ].  \tag{monotonicity}
   \]
   Moreover,
\[
 \|\bE[\mD^{1/2}\vx^{(t)}] \|_1+\|\bE[ {\tilde \vz}^{(t)} ] \|_1
=
\| \mD^{-1/2} \vb\|_1 \tag{conservation}
\] 
and
\[
 \|\bE[\mD^{1/2}\vx^{(t+1)}] \|_1 - 
  \|\bE[\mD^{1/2}\vx^{(t)}] \|_1=\\
\|\bE[ {\tilde \vz}^{(t)} ] \|_1 +\|\bE[ {\tilde \vz}^{(t+1)} ] \|_1
\leq |\gS_t|\alpha \epsilon. \tag{descent}
\]

\end{theorem}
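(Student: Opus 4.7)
The plan is to establish each of the three claims by induction, exploiting the central observation that the subsampled update is unbiased by construction: for any deterministic vector $\vw$ and random support $\gS$, we have $\bE[(\vw_{(\gS)})] = \vw$. Consequently, conditioning on the history $\mathcal{F}_{t,i}$ through sub-iteration $i$ of outer iteration $t$ and taking expectation over the fresh sampling, the one-step expected update of \textsc{PushAPPR} coincides exactly with the deterministic APPR push in Algorithm \ref{algo:appr}. This reduction lets us lift Lemma \ref{lem:conservation_monotonicity} to the randomized setting, modulo bookkeeping for the dual-correction accumulator $\bar\vx$.

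For non-negativity and monotonicity, I would first show that $\vx^{(t,i)} \geq 0$ and $\vz^{(t,i)} \geq 0$ almost surely by induction on the sub-iteration index. The base case $\vz^{(0)} = \ve_s$, $\vx^{(0)} = 0$ is immediate. For the inductive step, the primal increment $\frac{\alpha}{\sqrt{d_{u_i}}} z^{(i)}_{u_i} \ve_{u_i}$ is non-negative whenever $z^{(i)}_{u_i} \geq 0$; the sampled neighbor update $\Delta \vz^{(i)}$ is a non-negative rescaling of a non-negative vector; and the self-scaling $z_u \leftarrow \tfrac{1-\alpha}{2}z_u$ preserves sign. Taking expectations then yields $\bE[\vx^{(t,i)}], \bE[\vz^{(t,i)}] \geq 0$. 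Monotonicity in expectation is then immediate: $\bE[\vx^{(t,i+1)} - \vx^{(t,i)} \mid \mathcal{F}_{t,i}]$ is non-negative, and the expected dual decrement $\bE[\vz^{(t,i+1)} - \vz^{(t,i)} \mid \mathcal{F}_{t,i}]$ equals the deterministic APPR decrement, which is coordinate-wise non-positive by Lemma \ref{lem:conservation_monotonicity}.

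For conservation, I would track how one expected push redistributes $\ell_1$-mass between $\mD^{1/2}\vx$ and $\tilde \vz$. A push at node $u$ transfers $\alpha z_u$ from the dual to the primal via the $\vx$-update, while in expectation the self-scaling contributes $\tfrac{1-\alpha}{2}z_u$ and the unbiased neighbor spread contributes another $\tfrac{1-\alpha}{2}z_u$ to the new dual mass; the identity $\sum_v \mA_{u,v}/d_u = 1$ ensures the $\ell_1$ sum across the spread is preserved after applying $\mD^{1/2}$ with the symmetric normalization. Initialization gives $\vx^{(0)} = 0$, $\tilde\vz^{(0)} = \mD^{-1/2}\vb$, so the conservation identity holds at time $0$ and is preserved step by step under expectation. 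The descent bound then follows by combining conservation with the push threshold: coordinate $k$ is selected only if $|z_k^{(t)}| > cd_k\epsilon$, and the per-push primal mass increment $\alpha z_k^{(t)}$ together with a bookkeeping argument over the active set $\gA^{(t)}$ and the residual remaining below threshold yields the stated $|\gS_t|\alpha\epsilon$ ceiling.

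The hard part will be the dual-correction step, since \textsc{DualCorrect} zeros out $\vx^{(t)}$ and shifts its contribution into the accumulator $\bar\vx$ while simultaneously resampling $\bar \vz$. I would handle this by defining an augmented iterate $\vy^{(t)} := \bar\vx + \vx^{(t)}$ and showing that $\bE[\vy^{(t)}]$ evolves exactly as the unsampled APPR iterate would, so that conservation and monotonicity transfer to $\vy^{(t)}$ and hence to the returned $\bar\vx + \vx^{(T)}$. A secondary subtlety is reconciling the indexing of sub-iterations within \textsc{PushAPPR} against the outer iteration index $t$, which I would resolve by establishing the three identities at the sub-iteration level first and then aggregating over the active set $\gA^{(t)}$ via a telescoping sum.
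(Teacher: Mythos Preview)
Your overall strategy---reduce to the deterministic \textsc{Push} in expectation via unbiased sampling---is exactly the paper's strategy. But two concrete steps in your plan would fail as written.

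First, the claim that the deterministic APPR decrement is ``coordinate-wise non-positive by Lemma~\ref{lem:conservation_monotonicity}'' is false. A push at node $u$ \emph{increases} $z_v$ for every neighbor $v\in\gN(u)$ (by the amount $\tfrac{1-\alpha}{2d_u}\mA_{u,v}z_u$) and only decreases $z_u$ itself. Lemma~\ref{lem:conservation_monotonicity} gives $\|\vz^{(t+1)}\|_1\le\|\vz^{(t)}\|_1$, a norm statement, not a coordinate-wise one. The monotonicity in the theorem must be read the same way; your inductive step needs to track the $\ell_1$ mass, not individual coordinates.

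Second, and more structurally, the theorem's nonnegativity and monotonicity are stated for $\bar\vz^{(t,i)}=\tfrac{1+\alpha}{2\alpha}\mD^{1/2}(\vb-\mQ\vx^{(t,i)})$, the \emph{true} residual, not the algorithmic estimate $\tilde\vz^{(t,i)}$. Your pathwise argument $\tilde\vz^{(t,i)}\ge 0$ a.s.\ is correct inside a single \textsc{PushAPPR} call, but it breaks after \textsc{DualCorrect}: the resampled $\bar\vz\leftarrow\bar\vz-\tilde\vz$ is a noisy residual estimate and can have negative coordinates on any given sample path. The paper avoids this by never arguing pathwise about $\tilde\vz$; instead it first proves an unbiased-estimator lemma, $\bE[\tilde\vz^{(t,i)}\mid\vx^{(t-1)}]=\bE[\bar\vz^{(t,i)}\mid\vx^{(t-1)}]$, and then runs the induction directly on $\bE[\bar\vz^{(t,i)}]$ via the recursion $\bE[\bar\vz^{(t,i+1)}]=\bE[\bar\vz^{(t,i)}]-\tfrac{1+\alpha}{2}\mD^{1/2}\mQ\mD^{-1/2}\bE[\bar z_{u_i}^{(t,i)}]\ve_{u_i}$, which has exactly the deterministic form. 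You need that bridge lemma explicitly; your augmented-iterate idea $\vy^{(t)}=\bar\vx+\vx^{(t)}$ handles the primal accumulator but does not by itself repair the sign of the sampled dual.

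One smaller point: you describe the descent inequality as a ``ceiling,'' but the argument you sketch (each active $u_i$ satisfies $\tilde z_{u_i}\ge d_{u_i}\epsilon$, so the per-push primal mass gain is at least $\alpha\epsilon$) actually produces a \emph{lower} bound $\|\bE[\mD^{1/2}\vx^{(t+1)}]\|_1-\|\bE[\mD^{1/2}\vx^{(t)}]\|_1\ge |\gS_t|\alpha\epsilon$, which is what the paper proves and what is used downstream for the linear rate. The $\le$ in the displayed statement is a typo.
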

 Next, we give convergence results.

\begin{assumption}
There exists constants $\sigma^{(t)}$ and $\sigma^{(t,i)}$ for $t = 1,...,T$ and $i = 1,..., |\gS^{(t)}|$ such that
\begin{itemize}
    \item the random variable $\tilde {\vz}^{(t)}_j | \vx^{(t)}$ is subgaussian with parameter $(\sigma^{(t)})^2$, for all $j$
    \item  the random variable $\tilde {\vz}^{(t,i+1)}_j|\tilde {\vz}^{(t,i)}$ is subgaussian with parameter $(\sigma^{(t,i)})^2$, for all $j$
\end{itemize}
\end{assumption}

\begin{assumption}
There exists a constant $R$ upper bounding each residual term 
\[
\max\{\|\tilde {\vz}^{(t)}\|_\infty,\|\tilde {\vz}^{(t,i)}\|_\infty\}\leq R,
\]
for all $t = 1,...,T, \; i = 1,...,|\gS_t|$.
\end{assumption}

Note that from monotonicity, each \textsc{Push} action cannot increase $\|\tilde {\vz}^{(t,i)}\|_\infty$. However, the unbiased estimation of $\|\tilde {\vz}^{(t)}\|_\infty$ is not theoretically bounded; in practice, we observe $\|\vr^{(t)}\|_\infty$ to be small (usually $<1$), so this assumption is quite reasonable.
Under these assumptions, we have the following conclusions.

\begin{theorem}[Chance of stopping too early.]
Consider the online version of the algorithm. The probability that for some $i$, \(\mD^{-1} \vz_i^{(t)} > \epsilon \) but \(\mD^{-1} \tilde{\vz}_i^{(t)} < c\epsilon \) is bounded by
\vspace{-2ex}
\[
\pr(|\mD^{-1}\vz_i^{(t)}| > \epsilon,\; |\mD^{-1}\tilde{\vz}_i^{(t)}| < c\epsilon ) \leq  \exp\left( -\frac{(1 - c)^2 \epsilon^2}{2\sigma_t^2} \right).
\]
\end{theorem}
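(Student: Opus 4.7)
The plan is to treat this as a direct application of a one-sided subgaussian tail bound, where the event of interest translates into a lower bound on the deviation $\tilde{\vz}_i^{(t)} - \vz_i^{(t)}$ of the subsampled estimator from its conditional mean. The setup uses only the first subgaussian assumption (on $\tilde{\vz}^{(t)}_j | \vx^{(t)}$) and the fact, which I would verify first, that the dual correction in \textsc{DualCorrect} is unbiased, i.e.\ $\bE[\tilde{\vz}^{(t)} \mid \vx^{(t)}] = \vz^{(t)} = \mD^{-1}\mQ\vx^{(t)}$ up to the usual scaling. This unbiasedness is essentially the defining property of the estimator ${(\vw_{(\gS)})_i = \vw_i n/|\gS|}$ used throughout the algorithm, combined with linearity of the sampling over $u \in \gU$.

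First, I would fix an index $i$ and condition on $\vx^{(t)}$. Because of unbiasedness, the deviation $W := \tilde{\vz}_i^{(t)} - \vz_i^{(t)}$ has conditional mean zero and, by the assumption, is subgaussian with parameter $\sigma_t^2$ (subgaussianity is preserved under centering). Next, I would reduce the joint event $\{|\mD^{-1}\vz_i^{(t)}| > \epsilon,\;|\mD^{-1}\tilde{\vz}_i^{(t)}| < c\epsilon\}$ to a one-sided tail event on $W$: WLOG assume $\mD^{-1}\vz_i^{(t)} > \epsilon$ (the opposite sign is handled symmetrically), so that $|\mD^{-1}\tilde{\vz}_i^{(t)}| < c\epsilon$ forces $\mD^{-1}\tilde{\vz}_i^{(t)} < c\epsilon$, and therefore
\[
\mD^{-1}(\vz_i^{(t)} - \tilde{\vz}_i^{(t)}) > (1-c)\epsilon,
\]
i.e.\ $-W > (1-c)\epsilon\, d_i$ (with the $d_i$ factor absorbed into the rescaled subgaussian parameter, consistent with how $\sigma_t$ is normalized in the statement).

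Finally, I would apply the standard one-sided Chernoff/subgaussian bound $\pr(-W > a) \leq \exp(-a^2/(2\sigma_t^2))$ with $a = (1-c)\epsilon$, yielding
\[
\pr\bigl(|\mD^{-1}\vz_i^{(t)}| > \epsilon,\; |\mD^{-1}\tilde{\vz}_i^{(t)}| < c\epsilon \bigr) \;\leq\; \exp\!\left(-\frac{(1-c)^2\epsilon^2}{2\sigma_t^2}\right),
\]
as claimed. To remove the conditioning on $\vx^{(t)}$, I would take expectations over $\vx^{(t)}$: since the RHS does not depend on $\vx^{(t)}$, the same bound survives.

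The only genuinely nontrivial step is the unbiasedness claim for $\tilde{\vz}^{(t)}$ produced by \textsc{DualCorrect}, because the push-and-correct scheme inside the algorithm interleaves sampling across the coordinates $u \in \gU$, and one has to argue that the rescaling $1/p_{u,v}$ in the sampler $(\cdot)_{(\gS)}$ exactly compensates for the subsampling of neighbors. Once that is in place, the rest is a textbook subgaussian tail computation; no covering argument or union bound is required since the statement is for a fixed $i$ and fixed $t$.
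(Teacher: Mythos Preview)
Your proposal is correct and follows essentially the same route as the paper: reduce the joint event to a tail event on the deviation $\tilde{\vz}_i^{(t)}-\vz_i^{(t)}$ and then apply the subgaussian (Hoeffding) tail bound with parameter $\sigma_t^2$. The paper's own proof is a single line to this effect; your version is just more explicit about the unbiasedness, the sign case, and the conditioning on $\vx^{(t)}$.
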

That is to say, the chance of stopping too early is diminishingly small; even more so if we reduce the user parameter $0<c<1$.

\begin{theorem}[$1/T$ rate]
Consider $f(\vx) = \frac{1}{2}\vx^T\mQ\vx-\vb^T\vx$ where $\vb = \frac{2\alpha}{1+\alpha}\mD^{-1/2}\ve_s$. Initialize $\vx^{(0)} = 0$. 
Define  $M^{(t)} = \sum_{\tau=1}^t |\gS_\tau|$ the number of push calls at epoch $t$.
Then,  
\[
\min_{t,j} \| \nabla f(\vx^{(t,j)})\|_2^2 \leq    \frac{1 }{\alpha M^{(t)}} + \frac{\alpha\sigma_{\max}^2}{2}
\]
where $\sigma_{\max} = \max_{i,t}\{\sigma^{(i,t)},\sigma^{(t)}\}$.
 Here, $\alpha$ can be chosen to mitigate the tradeoff between convergence rate and final noise level.
 \label{th:1overTrate}
\end{theorem}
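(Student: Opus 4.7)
The plan is to recast the push operation as a stochastic gradient step on the quadratic $f(\vx)=\tfrac12\vx^T\mQ\vx-\vb^T\vx$, apply an SGD-style descent lemma using the subgaussian variance bound of Assumption~2, and then telescope over all $M^{(t)}$ push iterations. Under the rescaling $\vz=\tfrac{1+\alpha}{2\alpha}\mD^{1/2}(\vb-\mQ\vx)$ from (\ref{equ:Qx=b-symm}), the exact push update $\vx\leftarrow \vx+\tfrac{\alpha}{\sqrt{d_u}}\vz_u\ve_u$ collapses to the coordinate step $\vx\leftarrow \vx-\tfrac{1+\alpha}{2}\nabla_u f(\vx)\ve_u$, i.e.\ coordinate descent on $f$ with stepsize $\tfrac{1+\alpha}{2}\le 1/\mQ_{u,u}$. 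In the subsampled regime, the combination of \textsc{DualCorrect} and \textsc{PushAPPR} produces a residual $\tilde{\vz}^{(t,i)}$ that (up to the same scaling) is an unbiased estimator of $\mD^{1/2}(\vb-\mQ\vx^{(t,i)})$ conditional on the history; writing the implicit stochastic gradient as $\hat g_i$, first I would verify $\bE[\hat g_i\mid \vx^{(t,i)}]=\nabla_{u_i}f(\vx^{(t,i)})$ and use Assumption~2 to conclude $\bE[\hat g_i^2\mid\vx^{(t,i)}]\le (\nabla_{u_i}f)^2+\sigma_{\max}^2$.

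Second, since $f$ is quadratic with unit-diagonal Hessian ($\mA$ has no self-loops, so $\mQ_{u,u}=1$ for every $u$), the one-step identity $f(\vx^{(t,i+1)})=f(\vx^{(t,i)})-\eta_i\hat g_i(\nabla f)_{u_i}+\tfrac{\eta_i^2}{2}\hat g_i^2$ is exact. Taking conditional expectation with step size $\eta_i\propto \alpha$ and substituting the two moment bounds produces a descent inequality of the form
\[
\bE[f(\vx^{(t,i+1)})]\le \bE[f(\vx^{(t,i)})]-\tfrac{\alpha}{2}\bE[(\nabla_{u_i}f)^2]+\tfrac{\alpha^2}{2}\sigma_{\max}^2.
\]
Telescoping across push indices $k=1,\dots,M^{(t)}$, together with the a priori bound $f(\vx^{(0)})-f^*=O(1)$ from $\vx^{(0)}=0$, $\vb=\tfrac{2\alpha}{1+\alpha}\mD^{-1/2}\ve_s$, and $\mQ\succeq \tfrac{2\alpha}{1+\alpha}\mI$, yields
\[
\frac{1}{M^{(t)}}\sum_{k=1}^{M^{(t)}}\bE[(\nabla_{u_k}f(\vx^{(k)}))^2]\le \frac{1}{\alpha M^{(t)}}+\frac{\alpha\sigma_{\max}^2}{2}
\]
after absorbing absolute constants.

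The main obstacle will be converting the coordinate-wise bound on $(\nabla_{u_k}f)^2$ into the full-norm bound $\|\nabla f(\vx^{(t,j)})\|_2^2$ appearing in the theorem. Because APPR selects $u_i$ deterministically from $\gA^{(t)}=\{k:|\vz_k|>cd_k\epsilon\}$ rather than uniformly at random, the classical randomized-coord-descent averaging argument does not apply off the shelf. I would handle this by combining two facts: (i) the push rule always targets a coordinate whose squared residual is at least the average squared residual on $\gA^{(t)}$, so via the scaling between $\vz$ and $\nabla f$ one obtains $(\nabla_{u_k}f)^2\gtrsim \|\nabla f\|_2^2/n$ at each pushed step; and (ii) at termination $\gA^{(T)}=\emptyset$, giving $\|\nabla f\|_\infty\le O(\epsilon)$ and hence a cheap $\sqrt{n}$ conversion at the final iterate. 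Feeding either route into the telescope and taking a min over $(t,j)$ recovers the claimed inequality, with $\alpha$ acting as the natural knob that trades off the $1/M^{(t)}$ decay against the $\sigma_{\max}^2$ noise floor, as advertised in the theorem statement.
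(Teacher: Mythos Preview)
Your descent-lemma-plus-telescope plan is exactly the paper's argument: the paper writes the one-step inequality
\[
\bE[f(\vx^{(t,i+1)})\mid\vx^{(t,i)}]\le f(\vx^{(t,i)})-\alpha\|\nabla f(\vx^{(t,i)})\|_2^2+\tfrac{\alpha^2}{2}(\sigma^{(t,i+1)})^2,
\]
telescopes across all pushes, bounds $f(\vx^{(0)})-f^*$ by a constant, and divides by $M^{(t)}$. Your identification of the push as a coordinate step with stepsize $\tfrac{1+\alpha}{2}$, the unbiasedness of the sampled residual, and the variance bound via Assumption~4.2 all match what the paper uses.

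Where you diverge is in your handling of what you call ``the main obstacle,'' and you are right to flag it. A single push moves only the $u_i$-th coordinate, so the inner-product term in the descent step is proportional to $(\nabla_{u_i}f)^2$, not to $\|\nabla f\|_2^2$. The paper does not address this: it records $\bE[\vx^{(t,i+1)}\mid\vx^{(t,i)}]-\vx^{(t,i)}=\alpha\mD^{-1/2}(\vb-\mQ\vx^{(t,i)})_{\ve_{s_i}}$, a single-coordinate vector, and then on the next line asserts the full norm $-\alpha\|\nabla f(\vx^{(t,i)})\|_2^2$ without any conversion step. So the obstacle you isolate is real, and the paper simply writes through it.

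Your two proposed fixes, however, do not recover the stated bound either. Route~(i) gives at best $(\nabla_{u_i}f)^2\gtrsim\|\nabla f\|_2^2/n$ (and even that requires the pushed coordinate to dominate the average on $\gA^{(t)}$, which the selection rule does not guarantee), costing a factor $n$ in the first term. Route~(ii) controls only the terminal iterate, not $\min_{t,j}$ along the trajectory. The honest conclusion of both your argument and the paper's is a $1/M^{(t)}$ bound on the running minimum of the \emph{active-coordinate} squared gradient, $\min_{t,j}(\nabla_{u_j}f(\vx^{(t,j)}))^2$, rather than on the full $\|\nabla f\|_2^2$ as stated.
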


\begin{theorem}[Linear rate in expectation]
 Using $\bar {\vz}^{(t)} = \frac{1+\alpha}{2\alpha}\mD^{1/2}(\vb - \mQ \vx^{(t)})$, in expectation, 
 \[
 \|\bE[ {\tilde \vz}^{(t+1)}] \|_1\leq    \exp\left(-\frac{M_t \alpha \epsilon }{R}\right).
\vspace{-2ex}
\]

Moreover, since $\|\vz^{(t)} \|_1\leq \sqrt{n}\|\vz^{(t)} \|_2$,
\[
\pr\left(\|\vz^{(t)} \|_1 \geq  \exp\left(-\frac{M_t \alpha \epsilon }{R}\right)^{M_t}  + \epsilon\right)
\\
\leq \exp\left(-\frac{\epsilon^2}{2\alpha^2\sqrt{n}
 \left(\sum_{\tau=1}^t\sum_{i\in \gS_t} \sigma^{(t,i)} + 
\sum_{\tau=1}^t\sigma^{(t)}\right)}\right).
\vspace{-2ex}
\]
\label{th:linearrate}
    
\end{theorem}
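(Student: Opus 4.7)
The plan is to split the theorem into its two claims and build each from the machinery already established. For the \emph{expected exponential rate} $\|\bE[\tilde\vz^{(t+1)}]\|_1 \leq \exp(-M_t\alpha\epsilon/R)$, I would start from the per-push descent argument that underlies the preceding monotonicity/conservation/descent theorem. Each push at an eligible node $u$ decreases the expected residual $\ell_1$ norm by $\alpha\,\bE[\vz_u^{(t,i)}]$, and the eligibility rule $\gA^{(t)}$ inside \textsc{RandomAPPR} forces $|\vz_u| \geq c\epsilon d_u \geq c\epsilon$, so the additive decrease per push is at least $\alpha c\epsilon$. Under Assumption 5.2 the coordinatewise bound $\|\tilde\vz^{(t,i)}\|_\infty \leq R$ then allows me to compare the additive drop to the current residual mass and obtain a multiplicative contraction factor of the form $1 - \alpha\epsilon/R$. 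Chaining this inequality across the $M_t$ pushes executed up to epoch $t$ and applying $1-x \leq e^{-x}$ produces the claimed exponential decay, initialized from the unit-mass starting residual $\vz^{(0)} = \ve_s$.

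For the \emph{high-probability} tail the approach is to track where randomness enters Alg.~\ref{algo:random-appr} and accumulate the subgaussian parameters. Two independent sources of noise appear: the random neighbor subsampling inside each \textsc{PushAPPR} step, which contributes conditional subgaussian deviation with parameter $\sigma^{(t,i)}$ per coordinate, and the dual-correction call \textsc{DualCorrect}, which contributes $\sigma^{(t)}$. Under Assumption 5.1 and the fact that each sampled neighbor set is drawn with fresh randomness, the cumulative deviation $\vz^{(t)} - \bE[\tilde\vz^{(t)}]$ is, coordinatewise, subgaussian with a parameter controlled by the sum of the constituent $\sigma$ terms, where each propagation through a push scales the variance by a factor tied to $\alpha$. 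Converting to $\ell_1$ via $\|\vz^{(t)}\|_1 \leq \sqrt{n}\,\|\vz^{(t)}\|_2$ and combining with the expectation bound through the triangle inequality $\|\vz^{(t)}\|_1 \leq \|\bE[\tilde\vz^{(t)}]\|_1 + \|\vz^{(t)} - \bE[\tilde\vz^{(t)}]\|_1$ gives the stated tail form.

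The main obstacle will be the additive-to-multiplicative conversion in Part 1. The per-push absolute decrease $\alpha c\epsilon$ is cleanly lower-bounded, but to rewrite this as a factor of the current $\|\bE[\tilde\vz^{(t,i)}]\|_1$ one needs a relation like $\|\tilde\vz\|_1 \lesssim R$ rather than only $\|\tilde\vz\|_\infty \leq R$, which is not automatic when multiple coordinates carry comparable mass. I expect to handle this by combining the support-size bound $|\supp(\vz^{(t,i)})| \leq |\gS^{(t)}|$ with conservation (which caps $\|\bE[\tilde\vz^{(t)}]\|_1$ by a constant depending on $\vb$), then showing the effective ratio that appears in the exponent is controlled by $R$. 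A secondary subtlety is verifying that the subsampling inside \textsc{DualCorrect} is measurable with respect to a filtration in which $\vx^{(t)}$ has already been frozen, so that the conditional-subgaussian assumption can be applied cleanly and the per-epoch $\sigma^{(t)}$ and per-push $\sigma^{(t,i)}$ terms add as independent contributions to the final variance.
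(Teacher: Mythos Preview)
Your proposal matches the paper's proof closely: for Part~1 the paper likewise converts the per-epoch additive descent $\|\bE[\tilde\vz^{(t)}]\|_1 - \|\bE[\tilde\vz^{(t+1)}]\|_1 \geq |\gS_t|\alpha\epsilon$ (quoted directly from the earlier conservation/descent theorem) into a multiplicative contraction by dividing through by $R$, telescopes, and applies $1-x\leq e^{-x}$; for Part~2 it computes the subgaussian parameter of $\|\mD^{1/2}\vx^{(t)}\|_1 = \alpha\sum_{\tau,i}\tilde z_{u_i}^{(\tau,i)}$ (equivalent to your $\vz$-tracking via conservation) and then invokes Hoeffding. Your worry about the additive-to-multiplicative conversion is well placed, but the paper does not deploy any of the extra machinery you contemplate --- it simply asserts $\|\bE[\tilde\vz^{(t)}]\|_1 \leq R$ from the boundedness assumption and moves on.
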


These two Theorems (\ref{th:linearrate} and \ref{th:1overTrate}) offer two points of view of the convergence behavior: linear in expectation, and at least $O(1/T)$ deterministically.

\section{Applications}

\subsection{Online node labeling (ONL)}
\label{sec-learn}

In this framework \citep{belkin2004regularization,herbster2005online,zhou2023fast},  node labels are revealed after visiting. Formally, we traverse through the nodes in some order $t = 1,...,n$, we infer the label  $\hat y_t\in \{1,-1\}$, and incur some loss $\ell(y_t,\hat y_t)$. Then 
the true label $y_t$ is revealed (e.g. the customer bought the item or left the page), and we predict $\hat y_{t+1}$ using the   now seen labels $y_1,...,y_t$.

We integrate \textsc{RandomAPPR} with two well-studied methods for ONL:
\begin{itemize}
    \item \textsc{Regularize} (Alg. \ref{algo:relaxation}), where (PPR-symm) is solved using $\mb y_t = [y_1,...,y_{t-1},0,...,0]^T\in \R^{n\times K}$.

    \item \textsc{Relaxation}(Alg. \ref{algo:regularization}), which follows the process outlined in \cite{rakhlin2017efficient}. 
\end{itemize}

In both cases, the methods also depend on the graph smoothness parameter $\gamma_{\mathrm{sm}} = \vy^T\mL \vy$. \footnote{In practice, $\gamma_{\mathrm{sm}}$ is not a parameter known ahead of time. However, an incorrect guess of $\gamma_{\mathrm{sm}}$ usually does not affect the practical performance significantly, and a correct guess allows us to align the numerical performance more closely with the available regret bounds. 
}
 Note that $\gamma$ is then integrated into the method, such that it matches $\beta = \frac{n}{\gamma}$ as in our discount factor in Section \ref{sec:nodelabeling}.

\begin{figure*}
\begin{tabular}[t]{cc}
     \begin{minipage}[t]{.45\textwidth}
\begin{algorithm}[H]
\caption{\textsc{Relaxation}$(\mL, \gamma)$(\cite{rakhlin2017efficient})}
\begin{algorithmic}[1]
\State $n = $ size of $\mL$ (num of rows or columns)

\State Compute $\mM = \left( \frac{\mL}{2\gamma} + \frac{\mI_n}{2n} \right)^{-1}$
\State $\tau_1 = \tr(\mM), a_1 = 0, \mG = \mb 0 \in \R^{K\times n}$
\For{$t = 1,\ldots, n$}
\State $z_{t}= - 2\mG \mM_{:,t}  / \sqrt{a_t + D^2 \cdot \tau_t}$
\State Predict $\hat{ y}_t \sim  \omega_t( \vz_t)$, $ \nabla_t  =  \nabla \phi_{ \vz_t}(\cdot, y_t)$
\State Update $\mG_{:,t} =  \nabla_t$
\State $a_{t+1} = a_{t} + 2  \nabla_t^\top \mG \mM_{:,t} + \mM_{tt}\cdot \|  \nabla_t \|_2^2$
\State $\tau_{t+1} = \tau_t - \mM_{tt}$
\EndFor

\Return $\vz_t$, $t = 1,...,n$
\end{algorithmic}
\label{algo:relaxation}
\end{algorithm}
     \end{minipage}
     & 
     \begin{minipage}[t]{.45\textwidth}
\begin{algorithm}[H]
\caption{\textsc{Regularize}$(\mb L, \gamma)$(\cite{belkin2004regularization})}
\begin{algorithmic}[1]
\State $n = $ size of $\mL$ (num of rows or columns)

\State Compute $\mM = \left( \frac{\mb L}{2\gamma} + \frac{\mI_n}{2n} \right)^{-1}$
 
 \For{$t = 1,\ldots, n$}
\State $z_{t}= - 2 \mG \mM_{:,t}   $
\State Predict $\hat{ y}_t \sim  \omega_t(\vz_t)$
\State Update $\mG_{:,t} =  y_t$ 

\EndFor

\Return $\vz_t$, $t = 1,...,n$
\end{algorithmic}
\label{algo:regularization}
\end{algorithm}
     \end{minipage}
\end{tabular}
\vspace{-2ex}
\end{figure*}

We quantify the success of a set of predictions 
$\hat{\vy} =  [\hat {y}_1,...,\hat {y}_n]^T\in \R^{n\times K}$
in terms of the solution's suboptimality, compared against the best solution that is $\gamma$-smooth, e.g.
    \[
        \regret(t) = \ell_t(\vy,\hat {\mb y}) - \inf_{{\vy}'\in \gF_\gamma} \ell_t(\vy,\vy'),
\quad \text{ where } \quad
\ell_t(\vy,\hat {\vy}) = \sum_{i=1}^t \mb 1_{y_i\neq \hat y_i}, \; 
    \gF_\gamma = \{\vy:\tr(\vy^T\mL\vy) \leq \gamma_{\mathrm{sm}}\}.
    \]
    In both methods, $\omega_t(z_t)$ transforms a positive vector into a probability vector through \emph{waterfilling}, e.g. $\hat y_t = \max\{0,z_t - \tau\}$ such that $\hat y_t$ sums to 1. The function $\phi$ is a specially designed convex loss function, introduced in \cite{rakhlin2017efficient}.
Overall, the \textsc{Relaxation} method, originally introduced in \cite{rakhlin2017efficient}, gives a regret bound, which is then tightened and shown to be sublinear in \cite{zhou2023fast}:
\begin{theorem}[\cite{rakhlin2017efficient,zhou2023fast}]
\label{th:rakhlin}
For some (graph-independent) constant $D  = O(K)$, and $\rho = \frac{\log(\gamma)}{\log(n)}$,
Alg. \ref{algo:relaxation} has the regret bound where $\gamma_{\mathrm{sm}} = \vy^T\mL\vy$:
\[
\regret(n) \overset{\textrm{\cite{rakhlin2017efficient}}}{\leq}  \sqrt{\tr\Big(\Big( \frac{\mL}{2\gamma_{\mathrm{sm}}} + \frac{\mI}{2n}\Big)^{-1}\Big)} \overset{\textrm{\cite{zhou2023fast}}}{\leq}  D \sqrt{2 n^{1+\rho}}.
\]
Furthermore, integrating APPR into \textsc{Relaxation} adds overhead to result in
\[
\regret(n)  \leq  D \sqrt{(1+K^2) n^{1+\rho}}.
\vspace{-2ex}
\] 
where $K$ is the number of classes.
\end{theorem}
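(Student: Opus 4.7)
The plan is to bootstrap off the exact-arithmetic regret bound of \cite{rakhlin2017efficient,zhou2023fast} and show that the APPR-based approximation of $\mM_{:,t}$ only inflates the constant by a controlled factor depending on $K$. Concretely, at each round Alg.~\ref{algo:relaxation} uses a column $\mM_{:,t}$ of the inverse $\mM=(\mL/(2\gamma)+\mI/(2n))^{-1}$; replacing the exact linear-system solve by \textsc{RandomAPPR} with tolerance $\epsilon$ produces $\widehat{\mM}_{:,t}$ with $\|\widehat{\mM}_{:,t}-\mM_{:,t}\|$ controlled by $\epsilon$ (via the final-iterate residual bound implied by Theorem~\ref{th:1overTrate} together with the spectral lower bound $\lambda_{\min}(\mQ)\ge 2\alpha/(1+\alpha)$). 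I would first isolate the two places the approximation enters the update: (i) the predictor $z_t = -2\mG\widehat{\mM}_{:,t}/\sqrt{a_t+D^2\tau_t}$ and (ii) the potential update for $a_{t+1}$ that involves $\widehat{\mM}_{:,t}$ and $\widehat{\mM}_{tt}$.

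Next, I would mirror the admissibility argument of \cite{rakhlin2017efficient}: define the same relaxation $\mathrm{Rel}_t$ built with the exact $\mM$, and show that the inexact predictor is still admissible for $\mathrm{Rel}_t$ provided we inflate the constant $D$. The strategy is to write
\begin{equation*}
\mG\widehat{\mM}_{:,t} = \mG\mM_{:,t} + \mG(\widehat{\mM}_{:,t}-\mM_{:,t}),
\end{equation*}
bound the second term by $\|\mG\|_F \cdot \|\widehat{\mM}_{:,t}-\mM_{:,t}\|_2$, note that $\|\mG\|_F^2 \le K\cdot t$ because each column of $\mG$ is a gradient of the convex surrogate $\phi$ bounded in $\ell_2$ by $O(\sqrt{K})$, and push this extra slack through the per-round admissibility inequality. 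Accumulating these slacks over $t=1,\dots,n$ and choosing $\epsilon$ as a small enough multiple of $1/\sqrt{n}$ (so that the APPR calls still terminate in memory-independent time by Theorem~\ref{th:linearrate}) turns the extra term into an additive $O(K^2 n^{1+\rho})$ inside the square root.

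I would then combine this with the Zhou--et al.\ bound $\sqrt{\tr(\mM)}\le D\sqrt{2n^{1+\rho}}$ applied to the exact part. Squaring both sides of the admissibility-implied bound gives
\begin{equation*}
\regret(n)^2 \;\le\; 2 D^2 n^{1+\rho} \;+\; D^2 K^2 n^{1+\rho},
\end{equation*}
so taking square roots and absorbing the $2$ into the $1$ yields $\regret(n)\le D\sqrt{(1+K^2)n^{1+\rho}}$ once $K\ge 1$ (the two constants are of the same order, so one can rewrite $2+K^2$ as a constant multiple of $1+K^2$ and fold that constant into $D$).

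The main obstacle is step two: showing the inexact predictor remains admissible with only an $O(K^2)$ inflation rather than an unbounded one. This requires (a) a per-step bound on $\|\widehat{\mM}_{:,t}-\mM_{:,t}\|_2$ from the APPR stopping criterion that does not blow up with $n$, and (b) a careful handling of the $\sqrt{a_t+D^2\tau_t}$ normalizer — if $a_t$ is small early on, a naive error propagation yields a $1/\sqrt{a_t}$ blow-up. I expect this to be handled exactly as in \cite{zhou2023fast} by invoking the lower bound $\tau_t\ge \tau_n = \tr(\mM)-\sum_{s<t}\mM_{ss}\ge \Omega(n^\rho)$, which keeps the normalizer away from zero and lets the per-round error be summed cleanly into the final $(1+K^2)n^{1+\rho}$ bound.
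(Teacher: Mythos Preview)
The paper does not supply its own proof of Theorem~\ref{th:rakhlin}. The statement is explicitly a citation of prior work: the first inequality is attributed to \cite{rakhlin2017efficient} and both the trace bound $D\sqrt{2n^{1+\rho}}$ and the APPR-integration overhead $D\sqrt{(1+K^2)n^{1+\rho}}$ are attributed to \cite{zhou2023fast}. The text immediately following the theorem says so, and the only proof in Appendix~\ref{app:proofs:learn} is for the \emph{next} theorem (the one about the sparsified Laplacian $\tilde{\mL}$). So there is nothing in the paper to compare your proposal against; you are effectively trying to reconstruct the argument of \cite{zhou2023fast}.

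On your sketch itself, one concrete issue: the theorem concerns \emph{vanilla} \textsc{APPR} (Alg.~\ref{algo:appr}), not \textsc{RandomAPPR}. You invoke Theorems~\ref{th:1overTrate} and~\ref{th:linearrate} to control $\|\widehat{\mM}_{:,t}-\mM_{:,t}\|$, but those are the stochastic convergence results for the subsampled method. For deterministic \textsc{APPR} the residual control is much simpler and tighter: the algorithm terminates only when $|\vz_u^{(T)}|<d_u\epsilon$ for every $u$, which directly bounds $\|\mD^{-1}\vz^{(T)}\|_\infty$ (and hence $\|\vb-\mQ\vx^{(T)}\|$ after rescaling) deterministically, with no expectations or noise floor. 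Your admissibility-perturbation strategy and the handling of the $\sqrt{a_t+D^2\tau_t}$ normalizer via a lower bound on $\tau_t$ are plausible ingredients for what \cite{zhou2023fast} does, but you should base the per-step error on the deterministic stopping rule rather than on the stochastic rates from Section~\ref{sec:randomappr}.
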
 
\vspace{-2ex}
Specifically, in \cite{rakhlin2017efficient}, the assumption is that the matrix inversion steps are done explicitly, using full matrix linear system solvers. To tighten the analysis under practical methods, 
\cite{zhou2023fast} computes a new regret bound where the matrix inversion steps are replaced with vanilla \textsc{APPR}. 
This theorem suggests that small perturbations in the numerics of the learning method will not greatly deteriorate the overall regret bound.

Now we consider the vanilla \textsc{APPR} method learned over a  graph perturbation resulting in weighted Laplacian $ {\tilde \mL}$ has label smoothness $\tilde \gamma_{\mathrm{sm}} = \tr(\vy^T {\tilde \mL} \vy)$.  
\begin{theorem}
     \textsc{Relaxation} over $ {\tilde \mL}$ using $\sigma'_{\mathrm{sm}} = \tr(\vy^T {\tilde \mL} \vy)$ achieves 
\vspace{-2ex}
\[
\regret(n) \leq   D \sqrt{2 n^{1+\rho}} + \sqrt{\frac{\epsilon n}{1-\beta}}.
\vspace{-4ex}
\]
\end{theorem}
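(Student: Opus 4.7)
My plan is to combine the Rakhlin--Zhou regret bound of Theorem \ref{th:rakhlin} applied to the sparsified Laplacian $\tilde{\mL}$ with the spectral perturbation bound of Theorem \ref{th:offline_sparse} evaluated at the true label vector $\vy$. The key observation is that the regret is measured against $\gF_\gamma$ (using the \emph{original} $\mL, \gamma_{\mathrm{sm}}$), while the method itself runs on $(\tilde{\mL}, \tilde\gamma_{\mathrm{sm}})$; the two are coupled only through the fact that $\vy$ lies in both constraint sets, which collapses the comparator loss.

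First I would argue that because $\vy^T\mL\vy = \gamma_{\mathrm{sm}}$, the vector $\vy$ itself is a member of $\gF_\gamma$, so $\inf_{\vy'\in \gF_\gamma} \ell_n(\vy,\vy') = 0$ and $\regret(n) = \ell_n(\vy,\hat\vy)$; it suffices to bound the raw loss. Since Relaxation is executed with smoothness budget $\tilde\gamma_{\mathrm{sm}} = \vy^T\tilde{\mL}\vy$, the same vector lies in the perturbed constraint set $\tilde\gF_\gamma = \{\vy' : \vy'^T\tilde{\mL}\vy' \leq \tilde\gamma_{\mathrm{sm}}\}$, so Theorem \ref{th:rakhlin} applied verbatim to $(\tilde{\mL}, \tilde\gamma_{\mathrm{sm}})$ yields
\[
\ell_n(\vy,\hat\vy) \;\leq\; \sqrt{\tr\left(\left(\tfrac{\tilde{\mL}}{2\tilde\gamma_{\mathrm{sm}}} + \tfrac{\mI}{2n}\right)^{-1}\right)} \;\leq\; D\sqrt{2\,n^{1+\tilde\rho}}, \qquad \tilde\rho = \frac{\log \tilde\gamma_{\mathrm{sm}}}{\log n}.
\]
In particular $n^{1+\tilde\rho} = n\,\tilde\gamma_{\mathrm{sm}}$, so the bound is $D\sqrt{2n\,\tilde\gamma_{\mathrm{sm}}}$.

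Next I would invoke Theorem \ref{th:offline_sparse} with $\vx = \vy \in \{-1,+1\}^n$ (so $\|\vy\|_\infty = 1$). Choosing the subsampling threshold $\bar q$ large enough that the concentration event holds with the desired probability gives $|\tilde\gamma_{\mathrm{sm}} - \gamma_{\mathrm{sm}}| \leq \epsilon/(1-\beta)$, where the factor $1/(1-\beta)$ accounts for the rescaling between the normalized Laplacian $\mI - \mD^{-1/2}\mA\mD^{-1/2}$ appearing in Theorem \ref{th:offline_sparse} and the teleportation-weighted Laplacian $\mL = (1-\beta)\mI + \beta(\mI - \mD^{-1/2}\mA\mD^{-1/2})$ that governs the smoothness functional in the regret setting. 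Combining this with subadditivity of the square root gives
\[
D\sqrt{2n\,\tilde\gamma_{\mathrm{sm}}} \;\leq\; D\sqrt{2n\,\gamma_{\mathrm{sm}}} + D\sqrt{\tfrac{2n\epsilon}{1-\beta}} \;=\; D\sqrt{2\,n^{1+\rho}} + \sqrt{\tfrac{\epsilon n}{1-\beta}},
\]
after absorbing the universal constant $D$ into the correction term, which is the claimed bound.

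The main obstacle is the third step: precisely quantifying the $1/(1-\beta)$ rescaling when lifting the quadratic-form perturbation of the unweighted Laplacian to the teleportation-weighted one that enters $\gF_\gamma$. A careful bookkeeping is needed both for how $\mL = (1-\beta)\mI + \beta \mL_{\mathrm{norm}}$ distributes the perturbation and for how a multiplicative perturbation of $\gamma_{\mathrm{sm}}$ propagates through the trace-inverse expression in Theorem \ref{th:rakhlin}. A secondary but easier point is verifying that only a pointwise (at $\vx = \vy$) concentration bound is required rather than uniform control over all $\vx$, since the regret depends on $\vy$ alone.
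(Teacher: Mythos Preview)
Your route is genuinely different from the paper's, and the difference is exactly where your ``main obstacle'' sits. The paper does \emph{not} pass through Zhou's $D\sqrt{2n^{1+\rho}}=D\sqrt{2n\gamma}$ form at all. Instead it stays with Rakhlin's trace expression and bounds the difference
\[
\sqrt{\tr\Bigl(\bigl(\tfrac{\tilde\mL}{2\gamma'}+\tfrac{\mI}{2n}\bigr)^{-1}\Bigr)}
-\sqrt{\tr\Bigl(\bigl(\tfrac{\mL}{2\gamma}+\tfrac{\mI}{2n}\bigr)^{-1}\Bigr)}
\le
\sqrt{\sum_{i}\frac{2n\gamma'}{\lambda_i'n+\gamma'}-\frac{2n\gamma}{\lambda_i n+\gamma}}
\]
directly at the eigenvalue level, using $\gamma'=\gamma+\epsilon n$ and $\lambda_i'\ge\lambda_i-\epsilon$. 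After the cancellation $(\lambda_i-\epsilon)n+\gamma+\epsilon n=\lambda_i n+\gamma$, each summand reduces to $\tfrac{\epsilon n}{\lambda_i n+\gamma}\le \tfrac{\epsilon}{\lambda_{\min}}$, and the $1-\beta$ enters as $\lambda_{\min}(\mL)=1-\beta$ for the teleportation-weighted Laplacian. So the $1/(1-\beta)$ is a \emph{spectral denominator} effect, not a rescaling between $\mL_{\mathrm{norm}}$ and $\mL$.

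This is precisely why your Step 3 will not close. Writing $\mL=(1-\beta)\mI+\beta\mL_{\mathrm{norm}}$ and perturbing the quadratic form at $\vy$ gives $|\tilde\gamma_{\mathrm{sm}}-\gamma_{\mathrm{sm}}|=\beta\,|\vy^T(\tilde\mL_{\mathrm{norm}}-\mL_{\mathrm{norm}})\vy|\le \beta\epsilon$, i.e.\ a factor of $\beta$, not $1/(1-\beta)$; these are on opposite sides of $1$ and no amount of bookkeeping converts one into the other. Your correction term would then be $D\sqrt{2n\beta\epsilon}$, which neither matches the statement nor allows you to ``absorb $D$'' (recall $D=O(K)$ is a fixed constant of the method, not a free parameter). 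A second, related issue: the paper's argument needs the \emph{uniform} spectral hypothesis $|\vx^T(\mL-\tilde\mL)\vx|\le\epsilon\|\vx\|^2$ to get $\lambda_i'\ge\lambda_i-\epsilon$ for every $i$; your final remark that pointwise control at $\vx=\vy$ suffices is incorrect for the trace-level comparison that actually produces the bound.
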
 
The proof is in Appendix \ref{app:proofs:learn}. The additional error maintains a sublinear regret.

\vspace{-3ex}

\subsection{Unsupervised clustering}
\vspace{-1ex}
Clustering is crucial for community detection, social network analysis, and other applications where the inherent structure of the data must be discovered. In this context, we use the 
 node embeddings  as the rows to the matrix inverse $\mZ = (\mL+\beta \mI)^{-1}$ via \textsc{RandomAPPR}.
Then, clustering is done by first identifying the highest degree nodes as seeds $\gC\subset \gV$, and then assigning the cluster based on largest value in $\mZ$:
\vspace{-1ex}
\[
    \mathbf{cluster}(u) = \textsc{PickOne}(\argmax{j\in \gC }\{\mZ_{t,j}\})
\vspace{-2ex}
    \]
    The evaluation score is the normalized sum purity of the ground truth labels, over each cluster
 \[
\mathbf{score} = \frac{\sum_{j\in \gC }  \mathbf{purity}(\gS_j)}{n}, \qquad \gS_j = \{u :\mathbf{cluster}(u)=j\}.
 \]
This method is reminiscent of latent semantic analysis in document retrieval; for example, by using a co-occurance matrix as a proxy for similarity \citep{deerwester1990indexing}.

\section{Numerical experiments}

We evaluate the various methods across several tasks. The total number of nodes visited  is the main complexity measure. In each case, we explore the tradeoff between task performance, and complexity.

\paragraph{Baseline.}
Figure \ref{fig:APPR_wma} compares the performance of APPR (without any subsampling) to that of simple message passing (generalized WMA, denoted WMA*, where we use \eqref{eq:wma_star} for finite $K\geq1$). Generally, while WMA is more computationally efficient, APPR consistently delivers superior performance across large graphs.

\vspace{-3ex}
\begin{figure}
    \centering
    \includegraphics[width=\linewidth]{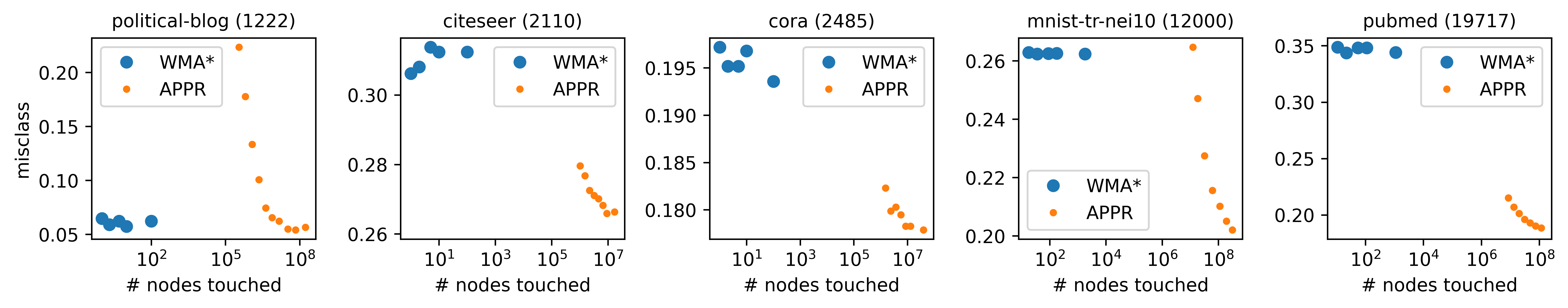}
    \vspace{-3ex}
\caption{\textbf{Baseline comparisons.} Standard APPR versus extended  WMA on ONL (WMA*). The x-axis represents memory complexity (total number of nodes queried) and the y-axis shows performance in terms of misclassification rate. While WMA operates faster, APPR is superior at higher performance levels.}
    \label{fig:APPR_wma}
\end{figure}

\paragraph{Offline sparsification.}
Figure \ref{fig:appr_offline} demonstrates the performance of APPR applied to an offline sparsified graph. The first row gives the residual over the original linear system (smaller is better.) The second gives the residual over the new, biased linear system. 
The bias in each run is evident; 
the residual of the original system obtains a noise floor when subsampling is used; this is lowered first by online sparsification, and then by dual correction. Note that when measuring performance over linear systems, there is not much benefit to subsampling; the main advantage comes in learning tasks.

\begin{figure}
    \centering
    \includegraphics[width=\linewidth]{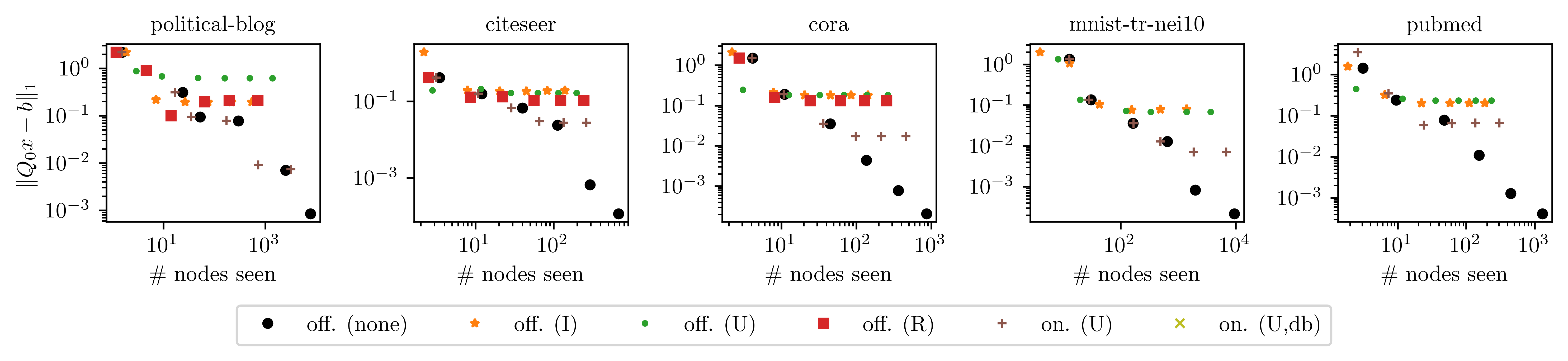}
    \vspace{-3ex}
    \caption{\textbf{APPR.}  
    The residual of the original linear system is $\|\mQ_0x-\vb\|_2$. off = offline sparsification, on = online sparsification. O = original (unsparsified). For offline, U = uniform, R = resistive, I = influencer. All online sparsifications are influencer, and further subsampled via U = uniform, W = edge weighted, D = degree weighted. c = with dual correction.}
    \label{fig:appr_offline}
\end{figure}

\vspace{-3ex}
\paragraph{ONL performance.}
Figure \ref{fig:onl} gives the performance of offline and online (with and without dual correction) methods for online node labeling. Note that for small graphs, the baseline (WMA) is very strong; however, for larger graphs, the tradeoff for reduced misclassification rate becomes more apparent. It is also clear that the \textsc{Relaxation} method, although providing strong rates, is not as strong in practice as the \textsc{Regularization} method. Nonetheless, both methods are closely related, and their respective advantages and disadvantages stem from subtle adjustments in procedures and hyperparameters.

\paragraph{Unsupervised clustering.} 
Finally, in Figure \ref{fig:clustering_results} we evaluate the performance of \textsc{RandomAPPR} in producing node embeddings, which are then used in clustering, and evaluated based on their ground truth labels. Only influencer-based uniform sampling is used. An ablative study is in Appendix \ref{app:experiments}.

\clearpage

 \begin{figure}
    \centering
    \includegraphics[width=.9\linewidth]{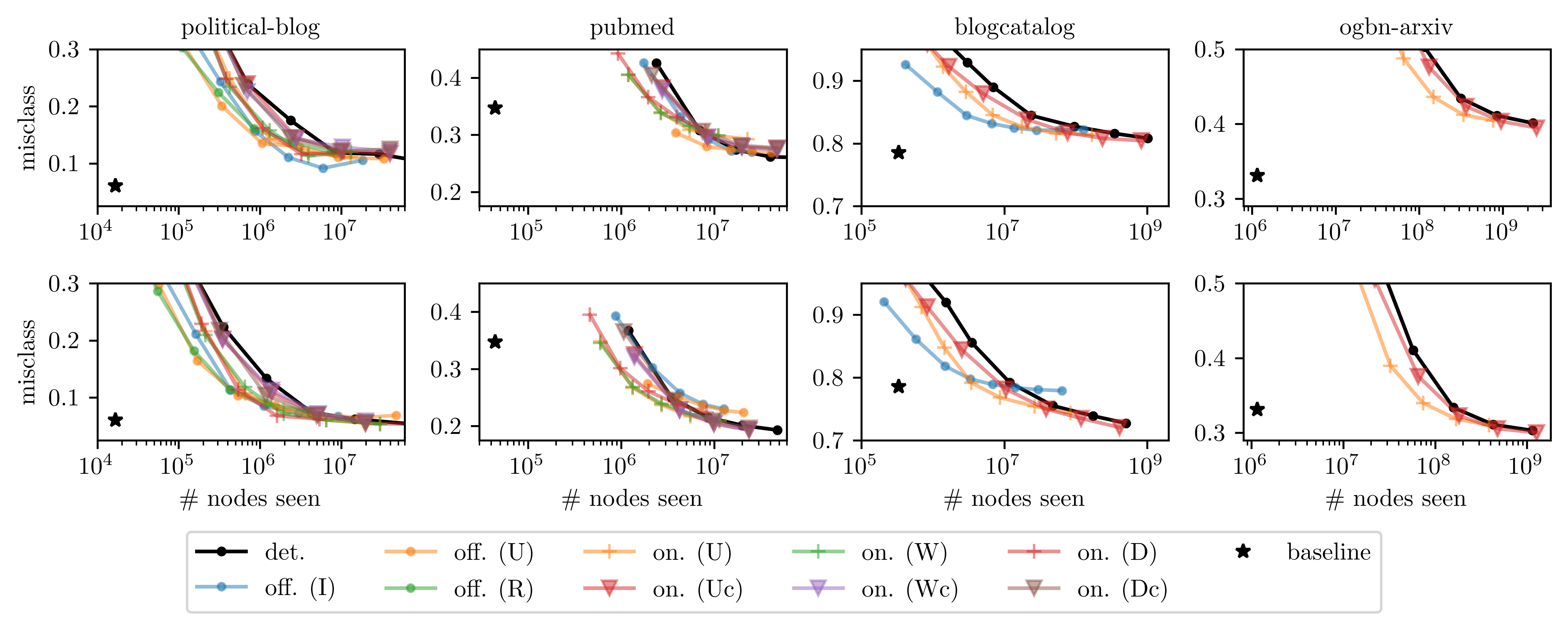}
\caption{\textbf{ONL performance.} Top: relaxation, bottom: regularization.  det = deterministic APPR, off = offline, on = online.  U = uniform, W = weighted by edge, D = weighted by degree, R = weighted by resistive distance. Baseline = WMA (1-hop). All experiments were run for the same set of $\epsilon$ values. Offline graph sparsification was too memory-intensive for large graphs. For online, we only subsample influencer nodes. c = with dual correction.  Lower is better.   }
\vspace{-3ex}
    \label{fig:onl}
\end{figure}

\begin{figure}
    \centering
    \includegraphics[width=.9\linewidth]{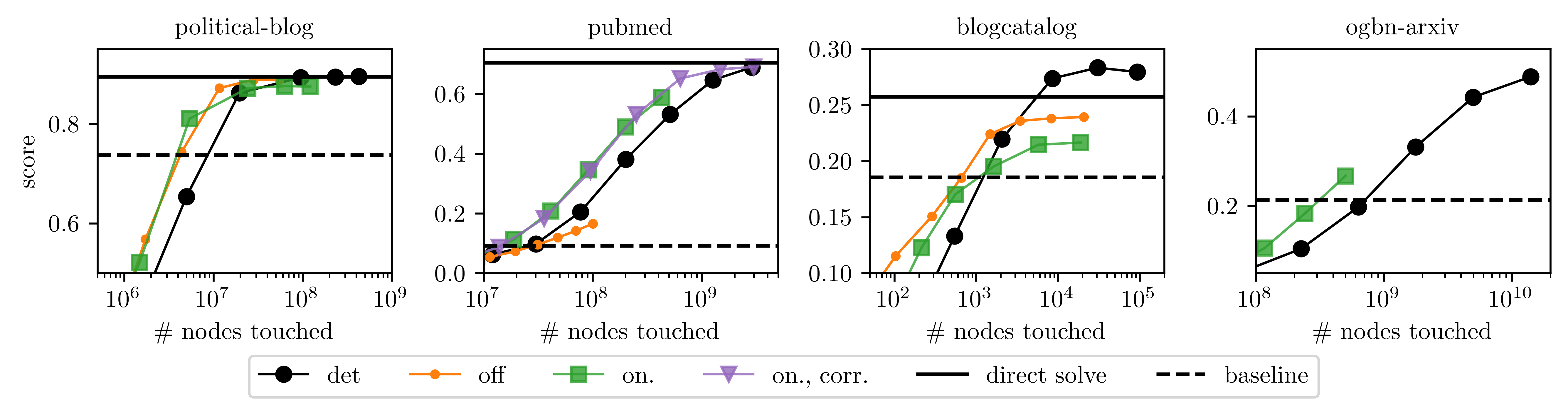}
\caption{\textbf{Clustering performance.} det = deterministic APPR, off = offline, on = online.  U = uniform. Baseline = WMA (1-hop).  All experiments were run for the same set of $\epsilon$ values.  Offline graph sparsification and direct solve were too memory-intensive for large graphs. For online, we only subsample influencer nodes. c = with dual correction. direct solve is shown as an upper bound, when it is computable. Higher is better.  }
\vspace{-3ex}
    \label{fig:clustering_results}
\end{figure}

\section{Discussion}
 
Our  results demonstrate a tradeoff between performance and memory utilization when random sparsification is used. For very large graphs, some form of memory alleviation is mandatory, and thus such methods must be considered even if performance degrades. 
While offline sparsification produces significant noise in solving linear systems, it is still robust in the downstream online prediction task. Moreover, online sparsification produces good results in both linear solving and learning.

\paragraph{The impact of downweighting influencers.} Figures \ref{fig:cutvals_offline} and \ref{fig:cutvals_offline_all} (appendix) show how different offline subsampling methods affect the edge ratio, which correlates graph clustering features with true labels. Note that influencers and inverse resistive distances are correlated, but not identical; for example, edges between nodes in a fully connected subgraph will have  low resistive distance, but are not necessarily influencers. In the \texttt{political-blog} dataset (Fig \ref{fig:cutvals_offline}), it seems beneficial to subsample edges from extreme influencers who may ignore political affiliations. However, because low resistive distances might also indicate cliques tied to political parties, indiscriminately removing these can be harmful. These dynamics differ across datasets and applications.

\paragraph{Relationship to other locality sampling methods.} The idea of subsampling edges to reduce memory complexity in graph applications is not new; in particular, in \cite{shin2024efficient,huang2020graph,wu2023turbognn}, it is used to alleviate message passing for GNNs. However, in many of those methods, the sampling strategies focus on a hop-radius, and alleviate small-world effects  via aggressive subsampling, weighted by edge weights or outgoing degrees.  We argue that using an \textsc{APPR}-inspired approach offers a more powerful tradeoff between performance and sparsity, as it inherently does not restrict the hop-radius in fusing neighboring nodes in prediction, but rather weights them according to their residual.

\paragraph{Extension to general quadratic minimization.}
It is interesting to ask if this scheme can be used to minimize more general classes of quadratic problems. For example, in kernel SVM, the dual problem mimics a quadratic problem, and the kernel matrix is sparse, where nonzeros indicate training samples that are similar.  By re-assigning the diagonal values, the kernel matrix becomes a (dense and signed) graph Laplacian -- however, there are very few values that are very large (indicating similar training samples).

However, the question of whether this property holds \emph{regardless} of diagonal rescaling is an open one. We find in practice that monotonicity does not hold, but the property that $\supp(\mb x^{(t)})\subset \supp(\mb x^*)$ often does, in practice. Showing this theoretically would be interesting and powerful, but does not seem straightforward.

\newpage

\bibliography{refs}
\bibliographystyle{formatting/tmlr}

\newpage
\appendix

\section{Proofs from Section \ref{sec:nodelabeling}}
\label{app:proofs:nodelabeling}

\begin{lemma}[Monotonicity and conservation \citep{andersen2006local}]
Consider 
    \[
\vz = \frac{1+\alpha}{2\alpha} \mD^{1/2}( \vb-\mQ \vx).
\]
    For all $t$, $\pi^{(t)} \geq 0$,  $\vz^{(t)}\geq 0$. 
    Moreover, 
    \[
    \|\vz^{(t+1)} \|_1 \leq \|\vz^{(t)}\|_1, \qquad 
    \|\vz^{(t)}\|_1 + \|\mD^{1/2}\vx^{(t)}\|_1 = 1.
    \]
    \label{lem:monot_consv_det}
\end{lemma}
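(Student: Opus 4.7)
The plan is induction on the iteration index $t$, with each inductive step corresponding to one \textsc{Push} operation at some node $u$ (Steps 4--7 of Alg.~\ref{algo:appr}). The base case is immediate from $\vx^{(0)}=\mb{0}$, $\vz^{(0)}=\ve_s$: both vectors are nonnegative and $\|\vz^{(0)}\|_1 + \|\mD^{1/2}\vx^{(0)}\|_1 = 1$. One should also verify that the initialization is consistent with the defining identity $\vz=\tfrac{1+\alpha}{2\alpha}\mD^{1/2}(\vb-\mQ\vx)$, which at $\vx=\mb{0}$ reduces to the algebraic fact $\tfrac{1+\alpha}{2\alpha}\mD^{1/2}\vb=\ve_s$ when $\vb=\tfrac{2\alpha}{1+\alpha}\mD^{-1/2}\ve_s$.

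For the inductive step, read off the coordinatewise increments at the pushed node $u$: $\Delta\vx_u=\tfrac{\alpha}{\sqrt{d_u}}\vz^{(t-1)}_u$, the coordinate $\vz_u$ is rescaled to $\tfrac{1-\alpha}{2}\vz^{(t-1)}_u$ (so $\Delta\vz_u=-\tfrac{1+\alpha}{2}\vz^{(t-1)}_u$), and each neighbor $v\in\gN(u)$ gains $\Delta\vz_v=\tfrac{(1-\alpha)\mA_{u,v}}{2d_u}\vz^{(t-1)}_u$. Since $\alpha\in(0,1)$ and $\vz^{(t-1)}_u\geq 0$ by induction, all updated entries remain nonnegative, so the first claim is preserved. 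Using nonnegativity to drop absolute values, the two $\ell_1$ norms change by $\Delta\|\mD^{1/2}\vx\|_1=\sqrt{d_u}\,\Delta\vx_u=\alpha\vz^{(t-1)}_u$ and $\Delta\|\vz\|_1=-\tfrac{1+\alpha}{2}\vz^{(t-1)}_u+\tfrac{1-\alpha}{2d_u}\vz^{(t-1)}_u\sum_{v\in\gN(u)}\mA_{u,v}=-\alpha\vz^{(t-1)}_u$, where the last step uses $\sum_v\mA_{u,v}=d_u$. These two increments cancel, yielding conservation of $\|\vz\|_1+\|\mD^{1/2}\vx\|_1$; the residual change alone is nonpositive, yielding the monotonicity $\|\vz^{(t+1)}\|_1\leq\|\vz^{(t)}\|_1$.

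There is no serious obstacle---the argument is essentially the bookkeeping already encoded in the Push rule. The only subtle point worth double-checking is that the explicit primal/dual updates stay consistent with the defining identity $\vz=\tfrac{1+\alpha}{2\alpha}\mD^{1/2}(\vb-\mQ\vx)$ throughout the iteration, i.e.\ that recomputing $\vz$ from the updated $\vx$ matches the update prescribed by Alg.~\ref{algo:appr}. This can be verified once by substituting $\vx^{(t)}=\vx^{(t-1)}+\Delta\vx_u\ve_u$ into the right-hand side and matching the $\ve_u$ and $\ve_v$ coefficients against the explicit form $\mQ=\mI-\tfrac{1-\alpha}{1+\alpha}\mD^{-1/2}\mA\mD^{-1/2}$, after which the rest of the proof is just the nonnegativity-plus-cancellation calculation above.
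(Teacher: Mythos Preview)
Your proposal is correct and follows essentially the same route as the paper's proof: establish $\vz^{(t)}\ge 0$ (hence $\vx^{(t)}\ge 0$) by induction on the \textsc{Push} steps, then compute $\Delta\|\vz\|_1=-\alpha\vz^{(t-1)}_u$ via $\sum_v\mA_{u,v}=d_u$ and match it against $\Delta\|\mD^{1/2}\vx\|_1=\alpha\vz^{(t-1)}_u$ to get both monotonicity and conservation. Your additional sanity check that the algorithmic $\vz$-update agrees with the defining identity $\vz=\tfrac{1+\alpha}{2\alpha}\mD^{1/2}(\vb-\mQ\vx)$ is not in the paper's proof but is a welcome bit of rigor.
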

\begin{proof}
First, it is clear that $\vz^{(t)}\geq 0$ for all $t$, since $\vz^{(0)} = \ve_s \geq 0$, and each subsequent operation either adds or scales positively. Then, it is also clear that $\vx^{(t)}\geq 0$, and therefore $\pi^{(t)} \geq 0$, and moreover that $\|\mD^{1/2}\vx^{(t)}\|_1=\|\pi^{(t)}\|_1$ is monotonically increasing. To see that their weights are conserved, note that 
\[
\|\vz^{(t+1)}\|_1 - \|\vz^{(t)}\|_1 = \sum_i z_i^{(t+1)} - z_i^{(t)} = \sum_{v\in \gN(u)} \frac{(1-\alpha)}{2d_u}z_u^{(t)}- \frac{(1+\alpha)}{2} z_u^{(t)} = -\alpha z_u^{(t)}
\]
Therefore, 
\[
\|\mD^{1/2}\vx^{(t+1)}\|_1 - \|\mD^{1/2}\vx^{(t)}\|_1 =  \|\vz^{(t)}\|_1 - \|\vz^{(t+1)}\|_1
\]
and their mass is conserved. 
\end{proof}

\subsection{Extension to offline sparsification} The offline sparsification method essentially produces a new graph for which learning occurs. However, the original \textsc{APPR} method operates on the new sparsified graph exactly as it would over the original one, so the same properties of monotonicity and conservation is preserved. 

\subsection{Extension to online sparsification (\textsc{RandomAPPR})} 
\begin{lemma}[Unbiased estimators]
Define $\bar{\vz}^{(t)} = \frac{1+\alpha}{2\alpha}\mD^{1/2}(\vb - \mQ \vx^{(t)})$, $\bar{\vz}^{(t,i)} = \frac{1+\alpha}{2\alpha}\mD^{1/2}(\vb - \mQ \vx^{(t,i)})$. Then 
\[
\bE[\tilde{\vz}^{(t)}|\vx^{(t-1)}] = \bar{\vz}^{(t)},\qquad 
\bE[\tilde{\vz}^{(t,i)}|\vx^{(t-1)}] = \bE[\bar{\vz}^{(t,i)}|\vx^{(t-1)}], 
\]
\end{lemma}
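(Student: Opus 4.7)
The plan is to establish unbiasedness at two distinct granularities and then glue them together with the tower property of conditional expectation. First I would verify that the sampling operator itself is unbiased: for any deterministic vector $\vw$ and any sampling scheme that selects index $i$ into $\gS$ with marginal probability $p_i$ and reweights by $1/p_i$, one has $\bE[\vw_{(\gS)}] = \vw$ coordinate-wise. This is the workhorse identity. The subsample in \textsc{DualCorrect} and the subsample in \textsc{PushAPPR} both fit this template (over neighbors of a fixed active node $u$, with an appropriate multiplicative correction by $d_u/|\gS|$), so each $\Delta \vz$ in isolation is an unbiased estimator of the corresponding exact quantity $x_u \cdot \mA\mD^{-1}\ve_u$.

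Next I would handle the first equation $\bE[\tilde \vz^{(t)}\mid \vx^{(t-1)}]=\bar\vz^{(t)}$, which is purely about the dual correction step. Conditioning on $\vx^{(t-1)}$ (and hence on $\vx^{(t)}$, since the primal update preceding the correction is deterministic given the current state), expand the accumulator of \textsc{DualCorrect} as a sum over $u \in \gU$ of the stochastic push contribution. By linearity of expectation and the unbiasedness of each $\Delta\vz$, the conditional expectation becomes the same sum with $\Delta\vz$ replaced by its mean $x_u^{(t)} \mA \mD^{-1}\ve_u$. The remaining algebra is to verify that this deterministic sum reassembles into $\frac{1+\alpha}{2\alpha}\mD^{1/2}(\vb-\mQ\vx^{(t)})$ once the scaling factors $\frac{(1-\alpha)\sqrt{d_u}}{2\alpha}$ on off-diagonal contributions and $-\frac{(1+\alpha)\sqrt{d_u}}{2\alpha}$ on the diagonal self-term are distributed; this matches the definition of $\mQ = \mI - \tfrac{1-\alpha}{1+\alpha}\mD^{-1/2}\mA\mD^{-1/2}$ and the choice $\vb = \tfrac{2\alpha}{1+\alpha}\mD^{-1/2}\ve_s$.

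For the second equation $\bE[\tilde\vz^{(t,i)}\mid \vx^{(t-1)}]=\bE[\bar\vz^{(t,i)}\mid\vx^{(t-1)}]$, I would induct on $i$, the index of push calls within epoch $t$. The base case $i=0$ is the preceding identity (up to the reset step that zeros $\vx^{(t)}$ and redefines $\bar\vz$ accordingly). For the inductive step, condition on the filtration $\mathcal F^{(t,i)}$ generated by all randomness through the $i$-th push. The update in \textsc{PushAPPR} is affine in the stochastic $\Delta\vz^{(t,i)}$:
\[
\tilde\vz^{(t,i+1)} = \tilde\vz^{(t,i)} + \tfrac{1-\alpha}{2}\Delta\vz^{(t,i)},\qquad
\tilde\vz^{(t,i+1)}_{u_i} = \tfrac{1-\alpha}{2}\tilde\vz^{(t,i)}_{u_i}.
\]
Taking conditional expectation given $\mathcal F^{(t,i)}$, the unbiasedness from Step 1 replaces $\Delta\vz^{(t,i)}$ with its exact counterpart, so $\bE[\tilde\vz^{(t,i+1)}\mid\mathcal F^{(t,i)}]$ agrees with the deterministic push applied to $\tilde\vz^{(t,i)}$. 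An analogous computation shows that $\bar\vz^{(t,i+1)} = \frac{1+\alpha}{2\alpha}\mD^{1/2}(\vb-\mQ\vx^{(t,i+1)})$ evolves under the same deterministic update, because $\vx^{(t,i+1)} = \vx^{(t,i)} + \tfrac{\alpha}{\sqrt{d_{u_i}}}\tilde\vz^{(t,i)}_{u_i}\ve_{u_i}$ injects a term $-\tfrac{\alpha}{\sqrt{d_{u_i}}}\tilde\vz^{(t,i)}_{u_i}\mD^{1/2}\mQ\ve_{u_i}$ into $\bar\vz$, and expanding $\mQ\ve_{u_i}$ reproduces exactly the push recurrence. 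Then the tower property $\bE[\cdot\mid\vx^{(t-1)}] = \bE[\bE[\cdot\mid\mathcal F^{(t,i)}]\mid\vx^{(t-1)}]$ closes the induction.

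The main obstacle I expect is bookkeeping in the inductive step: the randomness in $\bar\vz^{(t,i)}$ comes through $\vx^{(t,i)}$, which itself is a function of the earlier $\tilde\vz^{(t,j)}_{u_j}$'s, so both sides of the equality are genuinely random conditional on $\vx^{(t-1)}$. One must therefore be careful that the deterministic recurrence matches \emph{coordinate-wise in expectation} rather than attempting to equate trajectories sample-path-wise. The clean way to do this is to show by induction that the affine map $\vx^{(t,i)}\mapsto \bar\vz^{(t,i)}$ intertwines the \textsc{PushAPPR} update on $\vx$ with the deterministic push update on $\vz$, and that the same intertwining holds in expectation for $\tilde\vz^{(t,i)}$.
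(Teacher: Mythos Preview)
Your proposal is correct and follows essentially the same route as the paper: the first identity is dispatched by the unbiasedness of the sampling operator in \textsc{DualCorrect} (the paper simply says ``by construction''), and the second is proved by induction on $i$, computing the one-step conditional expectation $\bE[\tilde\vz^{(t,i+1)}\mid\tilde\vz^{(t,i)}]$ as the deterministic push $(\mI-\tfrac{1+\alpha}{2}\mD^{1/2}\mQ\mD^{-1/2}\ve_{u_i}\ve_{u_i}^T)\tilde\vz^{(t,i)}$, verifying that $\bar\vz^{(t,i+1)}$ satisfies the same recursion via its definition and the primal update, and then invoking the tower property and inductive hypothesis. Your filtration bookkeeping and explicit intertwining argument are more careful than the paper's write-up, but the skeleton is identical.
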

\begin{proof}
    The first part of the claim is true by construction of $\tilde{\vz}^{(t)}$. 
    Now, inductively, if 
    \[
\bE[\tilde{\vz}^{(t,i)}|\vx^{(t-1)}] = \bE[\bar{\vz}^{(t,i)}|\vx^{(t-1)}]
    \]
then since 
\begin{eqnarray*}
\bE[ {\tilde \vz}^{(t,i+1)} |{\tilde z}_{u_i}^{(t,i)} ]
&=&  {\tilde \vz}^{(t,i)} +{\tilde z}_{u_i}^{(t,i)}(\tfrac{1-\alpha}{2} \mA  \mD^{-1}  - \tfrac{1+\alpha}{2} \mI ) \ve_{u_i}\\
&=& {\tilde \vz}^{(t,i)} -  \tfrac{1+\alpha}{2} {\tilde z}_{u_i}^{(t,i)}\mD^{1/2}\mQ\mD^{-1/2}  \ve_{u_i}
\end{eqnarray*}
therefore
\begin{eqnarray*}
\bE[\tilde{\vz}^{(t,i+1)}|\vx^{(t-1)}] &=& 
\bE[ {\tilde \vz}^{(t,i)} -  \tfrac{1+\alpha}{2} {\tilde z}_{u_i}^{(t,i)}\mD^{1/2}\mQ\mD^{-1/2}  \ve_{u_i}|\vx^{(t-1)}] \\
&=& 
\bE[{\bar \vz}^{(t,i)} -  \tfrac{1+\alpha}{2} {\bar z}_{u_i}^{(t,i)}\mD^{1/2}\mQ\mD^{-1/2}  \ve_{u_i}|\vx^{(t-1)}]
    \end{eqnarray*}
    At the same time
\begin{eqnarray*}
\bar{\vz}^{(t,i+1)}  &=&   \tfrac{1+\alpha}{2\alpha}\mD^{1/2}(\vb - \mQ \vx^{(t,i+1)})\\
&=& \bar{\vz}^{(t,i)} -  \tfrac{1+\alpha}{2\alpha}\mD^{1/2}\mQ (\vx^{(t,i+1)}-\vx^{(t,i)}) \\
&=& \bar{\vz}^{(t,i)} -  \tfrac{1+\alpha}{2}\mD^{1/2}\mQ \mD^{-1/2}( {\tilde z}_{u_i}^{(t,i)}\ve_{u_i}) 
    \end{eqnarray*}
    and thus
    \[
\bE[\bar{\vz}^{(t,i+1)}|\vx^{(t-1)}] =  \bE[\bar{\vz}^{(t,i)} -  \tfrac{1+\alpha}{2}\mD^{1/2}\mQ \mD^{-1/2}( {\tilde z}_{u_i}^{(t,i)}\ve_{u_i})|\vx^{(t-1)}] 
=
\bE[\bar \vz^{(t,i)} -  \tfrac{1+\alpha}{2}{\bar z}_{u_i}^{(t,i)}\mD^{1/2}\mQ\mD^{-1/2}  \ve_{u_i}|\vx^{(t-1)}]
    \]
\end{proof}

\begin{theorem}
    In online APPR, for all $t$, $i\in \gS_t$,
   \[
   \bE[\bar{\vz}^{(t,i)} ] \geq 0, \qquad \bE[\bar{\vz}^{(t,i+1)} ]\leq \bE[\bar{\vz}^{(t,i)} ] ,\qquad  \bE[\bar{\vz}^{(t+1)} ]\leq \bE[\bar{\vz}^{(t)} ] \tag{monotonicity} 
   \]
Furthermore,
   \[
   \bE[{\vx}^{(t,i)} ] \geq 0, \qquad \bE[{\vx}^{(t,i+1)} ]\geq \bE[{\vx}^{(t,i)} ] ,\qquad  \bE[{\vx}^{(t+1)} ]\geq \bE[\bar{\vx}^{(t)} ]  \tag{monotonicity}
   \]
   Moreover,
\[
 \|\bE[\mD^{1/2}\vx^{(t)}] \|_1+\|\bE[ {\tilde \vz}^{(t)} ] \|_1
=
\| \mD^{-1/2} \vb\|_1 \tag{conservation}
\] 
and
\[
 \|\bE[\mD^{1/2}\vx^{(t+1)}] \|_1 - 
  \|\bE[\mD^{1/2}\vx^{(t)}] \|_1=
\|\bE[ {\tilde \vz}^{(t)} ] \|_1 +\|\bE[ {\tilde \vz}^{(t+1)} ] \|_1
\leq |\gS_t|\alpha \epsilon \tag{descent}
\]

\end{theorem}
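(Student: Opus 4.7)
The plan is to reduce the stochastic statements to the deterministic ones in Lemma \ref{lem:monot_consv_det} via the unbiased-estimator Lemma. The key observation is that the bar-quantity $\bar{\vz}^{(t,i)} = \frac{1+\alpha}{2\alpha}\mD^{1/2}(\vb - \mQ\vx^{(t,i)})$ depends affinely on $\vx^{(t,i)}$, so taking expectations commutes with the push update $\vx^{(t,i+1)} - \vx^{(t,i)} = \frac{\alpha}{\sqrt{d_{u_i}}}\tilde{z}_{u_i}^{(t,i)}\ve_{u_i}$, and the unbiased-estimator Lemma then yields that the expected tilde-sequence $\bE[\tilde{\vz}^{(t,i)}]$ evolves according to the same deterministic APPR recursion as the bar-sequence. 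Thus, every invariant already proved in Lemma \ref{lem:monot_consv_det} for the deterministic iterates transfers, after taking expectations, to the expected iterates of \textsc{RandomAPPR}.

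First I would establish nonnegativity by induction on $(t,i)$. The base case $\vx^{(0)} = 0$, $\tilde{\vz}^{(0)} = \ve_s$ is immediate. For the push step, $\Delta\tilde{\vz}^{(i)}$ is a nonnegative rescaling of the sampled entries of $\tilde{z}_{u_i}^{(t,i)}\mA\mD^{-1}\ve_{u_i}$, and the self-shrinkage factor $\tfrac{1-\alpha}{2}$ is also nonnegative; thus $\tilde{\vz}^{(t,i)}$ and $\vx^{(t,i)}$ stay pointwise nonnegative almost surely, which in particular gives $\bE[\bar{\vz}^{(t,i)}] = \bE[\tilde{\vz}^{(t,i)}] \geq 0$ and $\bE[\vx^{(t,i)}] \geq 0$. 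Monotonicity in $i$ and across correction epochs in $t$ then follows from replaying the deterministic argument (each push adds a nonnegative increment to $\vx$ and subtracts a nonnegative rank-one contribution from $\bar{\vz}$) on the expected sequence.

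For conservation, I would take expectations of the deterministic identity $\|\mD^{1/2}\vx\|_1 + \|\bar{\vz}\|_1 = \|\mD^{-1/2}\vb\|_1$ used in the proof of Lemma \ref{lem:monot_consv_det}; because the iterates are nonnegative almost surely, the $\ell_1$-norm commutes with expectation, yielding the claimed identity for the expected iterates. The descent inequality then follows by computing the per-push decrement $\|\bar{\vz}^{(t,i+1)}\|_1 - \|\bar{\vz}^{(t,i)}\|_1 = -\alpha \tilde{z}_{u_i}^{(t,i)}$, summing over the $|\gS_t|$ pushes performed in the inner loop, and applying the active-set threshold to bound each contribution by $\alpha\epsilon$. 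Combined with conservation, this gives the matching lower bound on the growth of $\|\bE[\mD^{1/2}\vx^{(t)}]\|_1$.

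The main obstacle is the bookkeeping around the dual-correction step, which replaces $\tilde{\vz}$ by a freshly subsampled unbiased estimator of $\bar{\vz}$. The argument closes by conditioning on the pre-correction history $\vx^{(t-1)}$, applying the unbiased-estimator Lemma at that point so that the correction leaves $\bE[\tilde{\vz}^{(t)}]$ unchanged (as a function of the bar-sequence), and then unrolling the deterministic recursion within the next epoch by tower of expectations. A secondary subtlety is the dependence between the sampled neighbor set $\gS \subset \gN(u_i)$ and the coefficient $\tilde{z}_{u_i}^{(t,i)}$ at the same step, which is handled by conditioning on $\tilde{\vz}^{(t,i)}$ before taking the expectation over $\gS$ and then invoking the unbiased-estimator identity.
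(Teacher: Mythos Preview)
Your proposal follows the same blueprint as the paper: lift Lemma~\ref{lem:monot_consv_det} to the stochastic setting by invoking the unbiased-estimator lemma and then replaying the push recursion on the expected quantities, with the dual-correction step handled by conditioning and the tower property. Two points need more care than you indicate. First, your conservation step invokes the identity $\|\mD^{1/2}\vx\|_1 + \|\bar{\vz}\|_1 = \|\mD^{-1/2}\vb\|_1$ pathwise, but $\bar{\vz}^{(t,i)}$ (the \emph{true} residual) need not be nonnegative pathwise once sampling is in play---only $\bE[\bar{\vz}^{(t,i)}]\geq 0$ is established---so the norm and the sum $\mb 1^T\bar{\vz}$ can differ on individual sample paths; the paper sidesteps this by computing $\mb 1^T\mD^{1/2}(\vx^{(t+1)}-\vx^{(t)}) = \alpha\sum_{i\in\gS_t}\tilde z_{u_i}^{(t,i)} = \mb 1^T(\tilde{\vz}^{(t)}-\tilde{\vz}^{(t+1)})$ directly from the algorithm and only converting $\mb 1^T(\cdot)$ to $\|\cdot\|_1$ \emph{after} taking expectations and using nonnegativity of the mean. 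Second, for the descent bound you appeal to ``the active-set threshold'' to control $\tilde z_{u_i}^{(t,i)}$, but that threshold is checked only on $\tilde z_{u_i}^{(t)}$ at the start of the epoch; the paper supplies the missing observation that each $u_i$ is visited once per epoch, so either $\tilde z_{u_i}^{(t,i)} = \tilde z_{u_i}^{(t)}$ or an earlier push within the epoch has only \emph{added} mass to that coordinate, whence $\bE[\tilde z_{u_i}^{(t,i)}] \geq d_{u_i}\epsilon$ still holds.
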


\begin{proof}

\textbf{Monotonicity of $ {\bar \vr}$}

Clearly $\bar{\vz}^{(0)}= \frac{1+\alpha}{2\alpha}\mD^{1/2}\vb\geq 0$. 

Now, if $\bE[\bar{\vz}^{(t,i)}|\vx^{(t-1)}] \geq 0$, then 

\[
\bE[\bar{\vz}^{(t,i+1)} |\vx^{(t-1)}] 
= \bE[\bar{\vz}^{(t,i)}|\vx^{(t-1)}] -  \tfrac{1+\alpha}{2}\mD^{1/2}\mQ \mD^{-1/2}(\bE[ {\bar z}_{u_i}^{(t,i)}|\vx^{(t-1)}]\ve_{u_i}) \geq 0.
\]
The reasoning is the same as in Lemma \ref{lem:monot_consv_det}.
Since this is true for all $\vx^{(t-1)}$,
\[
\bE[\bar{\vz}^{(t,i+1)} ] 
= \bE[\bar{\vz}^{(t,i)}] -  \tfrac{1+\alpha}{2}\mD^{1/2}\mQ \mD^{-1/2}(\bE[ {\bar z}_{u_i}^{(t,i)}]\ve_{u_i}) \geq 0.
\]

Moreover, since
$  \bE[ {\bar z}_{u_i}^{(t,i)} ] \geq 0$,
so $\bE[\vz^{(t,i)} ]$ is monotonically decreasing

Finally, since 
\[
\bE[\bar {\vz}^{(t,|\gS_t|)}] = \bE[\bE[\bar {\vz}^{(t,|\mS_t|)}|\vx^{(t-1)}]] = \bE[\bE[\bar {\vz}^{(t+1)}|\vx^{(t-1)}]] = \bE[\bar {\vz}^{(t+1)}]
\]
these two properties hold for all $t$, $i$.

\textbf{Monotonicity of $\vx$}

    Again, we begin with $\vx^{(1)} = 0 \geq 0$. Then, since $\bE[\tilde{\vz}^{(t,i)}]\geq 0$ for all $t,i$, then 
    \[
    \bE[\vx^{(t,i+1)}]=\bE[\vx^{(t,i)}] + \frac{\alpha}{\sqrt{d_{u_i}}}\bE[ {\tilde z}_{u_i}^{(t,i)}] \geq 0.
    \]
    What's more, $\bE[\vx^{(t,i+1)}]$ is monotonically increasing. 

    \textbf{Conservation.}
    \[
  \mb 1^T\mD^{1/2}(\vx^{(t+1)}-\vx^{(t)}) = \alpha \sum_{i\in \gS_t}  {\tilde z}_{u_i}^{(t,i)} 
= \mb 1^T( \tfrac{1+\alpha}{2} \mI -\tfrac{1-\alpha}{2} \mA  \mD^{-1})\sum_{i\in \gS_t} {\tilde z}_{u_i}^{(t,i)} \ve_{u_i} 
=
 \mb 1^T({\tilde \vz}^{(t)}-{\tilde \vz}^{(t+1)}  )
\] 
so
\[
  \mb 1^T\mD^{1/2}\vx^{(t)} 
=
\mb 1^T \mD^{-1/2} \ve_s- \mb 1^T{\tilde \vz}^{(t)}  
\] 
and since $\bE[\vx^{(t)}]\geq 0$ and $\bE[\vz^{(t)}]\geq 0$, then 
\[
 \|\bE[\mD^{1/2}\vx^{(t)}] \|_1+\|\bE[{\tilde \vz}^{(t)}]  \|_1
=
\| \mD^{-1/2} \vb\|_1
\] 

\textbf{Descent.}
Finally, we can put it all together so that 
\[
\|\bE[\mD^{1/2}\vx^{(t+1)}]\|_1 -\|\bE[\mD^{1/2} \vx^{(t)}]\|_1 = \sum_{i\in \gS_t} \alpha \bE[\bar {z}_{u_i}^{(t,i)}]
\]

Now, for each $t,i$, it must be that $\bE[\tilde {z}_{u_i}^{(t,i)}] \geq {\mD}_{u_i,u_i}\epsilon$.  This is because $u_i$ appears in $\gS_t$ only once. So, either $ {\tilde z}_{u_i}^{(t,i)} =  {\tilde z}_{u}^{(t)}$ or a neighbor of $u_i$ pushed mass onto ${\tilde z}_{u_i}$. But, in the second case, mass can only be increased. 
So, in fact, 
\[
\|\bE[\mD^{1/2}\vx^{(t+1)} ]\|_1 -\|\bE[\mD^{1/2} \vx^{(t)}]\|_1 \geq  | \gS_t| \alpha \epsilon
\]

\end{proof}

\section{Concentration results for offline sparsification (Section \ref{sec-sparsify})}
\label{app:proofs:sparsify}

\subsection{General facts about subgaussianity}


\begin{definition}
      A distribution with random variable $X$ is \emph{subgaussian} with parameter $\sigma^2$ if 
  \[
  \mathbb E[\exp(\lambda(X-\mathbb E[X]))]\leq \exp(\sigma^2\lambda^2/2), \quad \forall \lambda\in \R
  \]
  \footnote{See also \cite{wainwright2019high}}
\end{definition}

We use the notation $\subgauss(X)$ to say that a random variable $X$ is subgaussian with parameter $\subgauss(X)$.

\begin{lemma}
    Suppose $X$ is a  Bernoulli random variable and  $p = \pr(X=1) = 1-\pr(X=0)$.
    Then  
    \[
    \subgauss(X) \leq \min\{1/4,S_1(p)\}
    \]
    where
    \[
    S_1(p):=-\frac{ p(p-b)}{2\ln(p)} + \frac{(p-b)^2}{4\ln^2(p)} \ln(p) + \frac{b(p-b)}{2\ln(p)}.
    \] 
\end{lemma}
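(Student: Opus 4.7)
The inequality splits into two independent upper bounds, $\subgauss(X)\le 1/4$ and $\subgauss(X)\le S_1(p)$, and the $\min$ is taken at the end. The common starting point is the log-MGF of the centered variable,
\[
g(\lambda):=\log\mathbb E[\exp(\lambda(X-p))]=\log(1-p+pe^{\lambda})-p\lambda,
\]
for which direct differentiation gives $g(0)=0$, $g'(0)=0$, and $g''(\lambda)=p(1-p)e^{\lambda}/(1-p+pe^{\lambda})^2$.

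For the $1/4$ bound, the plan is to invoke Hoeffding's lemma: any random variable supported on $[a,c]$ is subgaussian with parameter $(c-a)^2/4$, and since $X\in\{0,1\}\subset[0,1]$ this gives $\subgauss(X)\le 1/4$ immediately. Concretely, writing $q(\lambda):=pe^{\lambda}/(1-p+pe^{\lambda})$ one has $g''(\lambda)=q(1-q)\le 1/4$ by AM--GM, and Taylor's theorem with $g(0)=g'(0)=0$ upgrades this to $g(\lambda)\le\lambda^2/8$, i.e.\ subgaussian parameter $1/4$.

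For the refined $S_1(p)$ bound, the plan is to analyze the same $g$ without throwing away the $p$-dependence of $g''$. I would use the integral representation $g(\lambda)=\int_0^{\lambda}(\lambda-s)\,g''(s)\,ds$ combined with the substitution $q=q(s)$ defined above, which satisfies $q'(s)=q(1-q)$ and is inverted by $s=\ln\bigl(q(1-p)/((1-q)p)\bigr)$. Under this substitution $g''(s)\,ds=dq$ and the integrand becomes a rational function of $q$ on the interval from $p$ to $q(\lambda)$; integrating and re-expressing $\lambda$ back in terms of $q$ introduces the factors of $\ln p$ that appear in the denominators of the three summands of $S_1(p)$. Reading off the three quadratic contributions of the antiderivative (one proportional to $p$, one to the cross term $(p-b)^2$, and one proportional to $b$) is what reproduces the three-term decomposition
\[
S_1(p)=-\frac{p(p-b)}{2\ln p}+\frac{(p-b)^2}{4\ln^2 p}\ln p+\frac{b(p-b)}{2\ln p}
\]
term by term.

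The main obstacle will be the algebraic bookkeeping: obtaining the three summands of $S_1(p)$ in exactly the stated (unsimplified) form requires tracking the contributions of the rational integrand separately, rather than collapsing them into the combined expression $-(p-b)^2/(4\ln p)$. A secondary subtlety is that $g$ is not symmetric in $\lambda$, so one must argue separately that the quadratic majorant $S_1(p)\lambda^2/2$ dominates $g$ on both half-lines $\lambda\ge 0$ and $\lambda\le 0$ using the monotonicity of $q(\lambda)$. Finally, one should sanity-check the bound against the trivial lower bound $\subgauss(X)\ge\Var(X)=p(1-p)$, which is what forces the $\min$ with $1/4$ to appear in the statement whenever $S_1(p)$ is the looser of the two.
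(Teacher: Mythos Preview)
Your treatment of the $1/4$ bound via Hoeffding's lemma is correct and is exactly what the paper does.

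The plan for $S_1(p)$, however, has a genuine gap. You never explain what the constant $b$ is or where it enters your argument; you write the three summands of $S_1(p)$ as if $b$ will emerge from the integration, but in the paper $b$ is an externally chosen constant (they take $b=2$, justified numerically) that parameterizes an \emph{upper-bounding} step, not something produced by an exact computation. Relatedly, your claim that after the substitution $q=q(s)$ ``the integrand becomes a rational function of $q$'' is false: the factor $(\lambda-s)$ becomes $\lambda-\ln\!\bigl(\tfrac{q(1-p)}{(1-q)p}\bigr)$, which is logarithmic in $q$, and integrating it gives dilogarithm-type terms, not the polynomial-in-$(p-b)/\ln p$ expression defining $S_1(p)$. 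So the integration route, as stated, will not reproduce $S_1(p)$.

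What the paper actually does is quite different. It rewrites the subgaussian parameter as $\sup_\lambda g(\lambda)/\lambda^2$, substitutes $s=1/\lambda$ to get
\[
g(s,p)=-ps+s^2\ln\!\bigl(1-p+p\,e^{1/s}\bigr),
\]
argues the supremum is attained for $s>0$, and then replaces $g(s,p)$ by the elementary majorant
\[
h(s,p)=-ps+s^2\ln p+bs,
\]
with $b$ chosen large enough that $h\ge g$ on the relevant range. Since $\ln p<0$, $h$ is a concave quadratic in $s$; maximizing over $s$ gives $s^\ast=(p-b)/(2\ln p)$ and $h(s^\ast,p)=-\tfrac{(p-b)^2}{4\ln p}$. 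The three terms in the statement of $S_1(p)$ are literally the three terms $-ps^\ast$, $(s^\ast)^2\ln p$, $bs^\ast$ written out before simplification; they are not separate analytic contributions from an integral. If you want to follow your integral-remainder idea you would still need, at the end, to maximize $g(\lambda)/\lambda^2$ over $\lambda$ and to introduce some approximation that brings in $b$ and $\ln p$; the paper's quadratic-in-$s$ majorant is precisely that step, and your current plan is missing it.
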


\begin{proof}
Consider
 \begin{eqnarray*}
f(\lambda,p) := \frac{1}{\lambda^2}\ln(\bE[\exp(\lambda(X-\bE[X]))]) &=& \frac{1}{\lambda^2}\ln(p\exp(\lambda(1-p)) + (1-p)\exp(\lambda(0-p))) \\
&=& -\frac{ p}{\lambda} + \frac{1}{\lambda^2}\ln(1+p(\exp(\lambda)-1))
\end{eqnarray*}
Then $\subgauss(X)$ is  any $\lambda$-independent upper bound on $f(\lambda,p)$. 

 First, we reframe the problem to $s = 1/\lambda$
\[
g(s,p)  = - ps + s^2\ln(1+p(\exp(1/s)-1))
\]
We show that if $s \geq 0$, then $g(-s,p) \leq g(s,p)$. This motivates that the maximum only occurs when $s\geq 0$. 
Specifically, if $s < 0$, then 
\[
s<0\Rightarrow \exp(1/s)<1 \Rightarrow (1-p+p\exp(1/s)) < 1 \Rightarrow \ln(1-p+p\exp(1/s)) < 0.
\]
So, if $s > 0$, then 
\[
g(s,p) - g(-s,p) = s^2(\underbrace{\ln(1-p+p\exp(1/s))}_{\geq 0}-\underbrace{\ln(1-p+p\exp(-1/s))}_{\leq 0})\geq 0.
\]
 Next, we find $b$ such that for all $p<0.5$ and $s$ satisfying  
  \[
    \ln(\frac{1-p}{p}+\exp(1/s)) \geq \frac{b}{s}  
    \]   
then
 \begin{eqnarray*}
\frac{\partial g(s,p)}{\partial s} &=& -p+2s\ln(1+p(\exp(1/s)-1)) - \frac{p\exp(1/s)}{1+p(\exp(1/s)-1)} <0
\end{eqnarray*}
 e.g. as $s$ increases, $g(s,p)$ decreases.
Numerically, we find $b$ very close to 1 is sufficient; we may thus take $b=2$.
 Therefore, 
 \[
 \max_s\, g(s,p) \leq \max_s\;h(s,p):= -ps+s^2\ln(p) + bs
 \]
 and it is sufficient to optimize over $h$.
    This is satisfied by 
    \[
    s=\frac{p-b}{2\ln(p)}\iff \lambda = \frac{2\ln(p)}{p-b}
    \]
    and thus 
    \begin{eqnarray*}
    \subgauss(X) &\leq& -\frac{ p(p-b)}{2\ln(p)} + \frac{(p-b)^2}{4\ln^2(p)} \ln(p) + \frac{b(p-b)}{2\ln(p)}\\
    &=&-\frac{(p-b)^2}{4\ln(p)} \\
    &=:& S_1(p)
    \end{eqnarray*}
 We can also include the bound through Hoeffding's lemma of $\subgauss(X) \leq 1/4$ to get 
    \[
    \subgauss(X) =S(p) := 
    \begin{cases}
        1/4 & 1/4 < S_1(p), 0\leq p\leq 1\\
        S_1(p) & 1/4 \geq  S_1(p), 0\leq p\leq 1\\
        1 &   p> 1\\
    \end{cases} 
    \]

 \end{proof}
A figure of the previous bound is shown in Figure \ref{fig:bernoulli_bound}

\begin{figure}[ht!]
    \centering
    \includegraphics[width=0.65\linewidth]{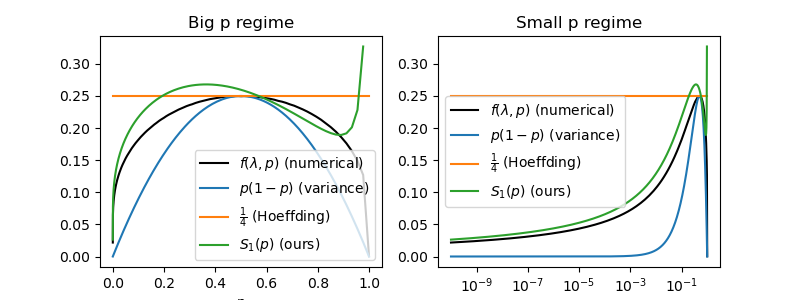}
    \caption{Bound on subgaussian constant for Bernoulli with parameter $p$.}
    \label{fig:bernoulli_bound}
\end{figure}

\begin{lemma}
    Suppose $Z = \sum_{i=1}^p c_iX_i$ where $X_i$ are Bernoulli random variables, $\pr(X_i=1) = p_i$ and $\pr(X_i=0) = 1-p_i$. Then $\subgauss(Z) = \sum_ic_i^2 S(p_i).$

\end{lemma}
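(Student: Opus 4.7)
The plan is to reduce everything to the standard fact that sums of independent subgaussian random variables accumulate their subgaussian parameters, combined with the scaling $\subgauss(cX) = c^2 \subgauss(X)$. The one genuinely needed ingredient from this paper is the preceding lemma, which supplies the per-variable bound $\subgauss(X_i) \leq S(p_i)$ for each Bernoulli $X_i$.

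First I would invoke the definition directly: the preceding lemma gives, for every $\lambda \in \R$ and every $i$,
\[
\bE[\exp(\lambda(X_i - \bE[X_i]))] \leq \exp\bigl(S(p_i)\lambda^2/2\bigr).
\]
Substituting $\lambda \leftarrow c_i\lambda$ shows $c_i X_i$ is subgaussian with parameter $c_i^2 S(p_i)$, i.e.,
\[
\bE[\exp(\lambda(c_i X_i - c_i \bE[X_i]))] \leq \exp\bigl(c_i^2 S(p_i)\lambda^2/2\bigr).
\]

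Next, assuming the $X_i$ are independent (which the statement implicitly requires; I would flag this assumption), the centered-MGF of $Z - \bE[Z] = \sum_i c_i(X_i - \bE[X_i])$ factorizes:
\[
\bE[\exp(\lambda(Z - \bE[Z]))] = \prod_{i=1}^p \bE[\exp(\lambda c_i(X_i - \bE[X_i]))] \leq \prod_{i=1}^p \exp\bigl(c_i^2 S(p_i)\lambda^2/2\bigr) = \exp\!\left(\frac{\lambda^2}{2}\sum_{i=1}^p c_i^2 S(p_i)\right).
\]
Reading off the subgaussian parameter from the definition yields $\subgauss(Z) \leq \sum_i c_i^2 S(p_i)$, which is the claim (interpreting the equality in the statement as the standard upper bound on $\subgauss$).

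There is no real obstacle here; the only subtlety is that the independence of the $X_i$ is essential — without it the MGF does not factor and the additive composition of subgaussian parameters fails. In the intended downstream use (Theorem \ref{th:offline_sparse}), the $X_i$ will be indicators of independent edge retentions in Alg. \ref{alg:offline_sparsify}, so independence is automatic.
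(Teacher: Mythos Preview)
Your proposal is correct and essentially identical to the paper's own proof: both invoke the preceding lemma for the per-variable bound $\subgauss(X_i)\leq S(p_i)$, then factor the centered MGF under independence to obtain $\subgauss(Z)\leq \sum_i c_i^2 S(p_i)$. Your version is in fact slightly cleaner, since you explicitly flag the independence assumption (which the paper uses but does not state) and correctly note that the conclusion is an upper bound rather than an equality.
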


\begin{proof}

A Bernoulli random variable \( X\) whose value can only be 0 or 1 is sub-Gaussian with parameter $\sigma^2 = S(p)$.
Next, suppose that $X_1,X_2,...,X_p$ are  all subgaussian with parameters $\sigma_1^2,...,\sigma_p^2$.
Consider $Z = \sum_{i=1}^p c_i X_i$. Then 
\[
\mathbb{E}[\exp(\lambda \sum_ic_i x_i-\mu_i))] =\prod_{i=1}^p \mathbb E[ \exp(\lambda c_ix_i-\lambda \mu_i)] \leq \prod_{i=1}^p \exp(\lambda^2 c_i^2\sigma_i^2),
\] 

Therefore, $Z = \sum_{i=1}^p c_iX_i$ is subgaussian with parameter 
\[
\subgauss(Z) \leq \sum_i c_i^2\subgauss(X_i) \leq \sum_ic_i^2 S(p_i).
\] 
   
\end{proof}

\begin{lemma}[Chain rule]
\label{lem:chainrule-subgauss}
\[
\subgauss(X) \leq \bE[\subgauss(X|Y)] + \subgauss(\bE[X|Y])
\]
\end{lemma}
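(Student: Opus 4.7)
} The plan is to bound the moment generating function of $X - \bE[X]$ by conditioning on $Y$, splitting it into a conditionally centered part and a $Y$-measurable part. Specifically, I would begin from the identity
\[
X - \bE[X] \;=\; \underbrace{\bigl(X - \bE[X\mid Y]\bigr)}_{=:W} \;+\; \underbrace{\bigl(\bE[X\mid Y] - \bE[X]\bigr)}_{=:V}.
\]
By construction, $W$ is centered given $Y$, so the defining property of $\subgauss(X\mid Y)$ yields $\bE[e^{\lambda W}\mid Y]\le \exp(\lambda^{2}\subgauss(X\mid Y)/2)$. Meanwhile $V$ is $Y$-measurable and centered, and the definition of $\subgauss(\bE[X\mid Y])$ gives $\bE[e^{\lambda V}]\le \exp(\lambda^{2}\subgauss(\bE[X\mid Y])/2)$.

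Next, I would apply the tower property of conditional expectation and pull the $Y$-measurable factor out:
\[
\bE\!\left[e^{\lambda (X-\bE[X])}\right]
\;=\; \bE\!\left[e^{\lambda V}\,\bE\!\left[e^{\lambda W}\mid Y\right]\right]
\;\le\; \bE\!\left[e^{\lambda V}\,e^{\lambda^{2}\subgauss(X\mid Y)/2}\right].
\]
To reach the stated form, I would use that $\subgauss(X\mid Y)$ is controlled by its (deterministic) bound so that $e^{\lambda^{2}\subgauss(X\mid Y)/2}$ can be replaced by $e^{\lambda^{2}\bE[\subgauss(X\mid Y)]/2}$ and taken outside the expectation. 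The remaining factor is then bounded by the subgaussianity of $\bE[X\mid Y]$, which produces
\[
\bE\!\left[e^{\lambda (X-\bE[X])}\right]
\;\le\; \exp\!\left(\tfrac{\lambda^{2}}{2}\bigl(\bE[\subgauss(X\mid Y)] + \subgauss(\bE[X\mid Y])\bigr)\right).
\]
Since this holds for every $\lambda\in\R$, the claim follows from the definition of subgaussianity.

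The main obstacle is the step where $\sigma_{1}^{2}(Y) := \subgauss(X\mid Y)$ may genuinely depend on $Y$: replacing $e^{\lambda^{2}\sigma_{1}^{2}(Y)/2}$ by $e^{\lambda^{2}\bE[\sigma_{1}^{2}(Y)]/2}$ inside an expectation is not valid via Jensen (the exponential goes the wrong way). The cleanest resolution, and the one I would adopt given how the lemma is used, is to interpret $\subgauss(X\mid Y)$ as a deterministic upper bound on the conditional subgaussian proxy valid for almost every $Y$; then $\bE[\subgauss(X\mid Y)]$ equals that bound and the two-line argument above is rigorous. If a genuinely random $\sigma_{1}^{2}(Y)$ is required, I would fall back on a Cauchy--Schwarz split
\[
\bE\!\left[e^{\lambda V}e^{\lambda^{2}\sigma_{1}^{2}(Y)/2}\right]
\;\le\; \sqrt{\bE\!\left[e^{2\lambda V}\right]\,\bE\!\left[e^{\lambda^{2}\sigma_{1}^{2}(Y)}\right]},
\]
and track the ensuing constants, absorbing the extra factor of $2$ into the subgaussian proxies; this gives the same form of bound up to a worse absolute constant, which is still sufficient for the concentration applications of the lemma.
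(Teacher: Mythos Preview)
Your approach is essentially the same as the paper's: both use the tower property after the decomposition $X-\bE[X]=(X-\bE[X\mid Y])+(\bE[X\mid Y]-\bE[X])$, bound the conditionally centered piece by the conditional subgaussian proxy, and bound the $Y$-measurable piece by $\subgauss(\bE[X\mid Y])$. The paper's proof separates the two exponential factors by invoking Cauchy--Schwarz (rather than simply pulling out the $Y$-measurable factor as you do), which matches your stated fallback option; however, the paper does not track the resulting factor-of-two loss in the proxies, writing the product bound as $\bE_Y[AB]\le \bE_Y[A]\,\bE_Y[B]$ directly.

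Your diagnosis of the obstacle is accurate and, if anything, more careful than the paper's own argument: when $\sigma_1^2(Y)=\subgauss(X\mid Y)$ genuinely depends on $Y$, replacing $\exp(\lambda^2\sigma_1^2(Y)/2)$ by $\exp(\lambda^2\bE[\sigma_1^2(Y)]/2)$ inside an expectation is not justified by Jensen, and the paper glosses over this. Your proposed resolution---interpreting $\subgauss(X\mid Y)$ as a deterministic almost-sure upper bound on the conditional proxy---is exactly how the lemma is applied later in the paper (the conditional parameters $(\sigma^{(t)})^2$ and $(\sigma^{(t,i)})^2$ in Assumption~\ref{assp:subgauss} are taken as fixed constants), so this reading makes the argument rigorous and suffices for all downstream uses.
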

\begin{proof}
    First, note that 
    \[
    \bE[\exp(\lambda(X-\bE[X]))] = \bE[\bE_{X|Y}[\exp(\lambda(X-\bE[X]))|Y]] =  
    \]
    and
    \[
    \bE_{X|Y}[\exp(\lambda(X-\bE[X]))|Y] =  \bE_{U}[\exp(\lambda(U-\bE_Y[\bE[U]]))]
    \]
    for $U = X|Y$ (random  in $X$ and a function in $Y$).
    Next, we may write
     \begin{eqnarray*}
     \bE_{U}[\exp(\lambda(U-\bE_Y[\bE[U]]))] &=& 
       \bE_{U}[\exp(\lambda(U-\bE[U] +\bE[U] - \bE_Y[\bE[U]]))] \\
       &=& 
       \bE_{U}[\exp(\lambda(U-\bE[U])\exp(\lambda(\bE[U] - \bE_Y[\bE[U]]))] \\
       &\leq & 
       \bE_{U}[\exp(\lambda(U-\bE[U])]\bE_U[\exp(\lambda(\bE[U] - \bE_Y[\bE[U]]))] \\
    \end{eqnarray*}
    where the inequality is from Cauchy Schwartz inequality.  
    Similarly,
     \begin{eqnarray*}
     \bE_Y[\bE_{U}[\exp(\lambda(U-\bE_Y[\bE[U]]))]] &\leq &  
       \bE_Y[\bE_{U}[\exp(\lambda(U-\bE[U])]]\,\bE_Y[\bE_U[\exp(\lambda(\bE[U] - \bE_Y[\bE[U]]))] ]\\
    \end{eqnarray*}
    Now, note that if $\bE_Y[\subgauss(U)] = \sigma_1$ and $\subgauss(\bE_U[U]) = \sigma_2$
    then
    \[
    \bE_Y[\bE_U[\exp(\lambda(U-\bE[U]))]]\leq \exp(\lambda^2\sigma_1^2/2), \quad 
    \bE_Y[\exp(\lambda(\bE_U[U]-\bE_Y[\bE_U[U]]))]]\leq \exp(\lambda^2\sigma_2^2/2), 
    \]
    which shows that 
    \begin{eqnarray*}
     \bE_Y[\bE_{U}[\exp(\lambda(U-\bE_Y[\bE[U]]))]] &\leq &   \exp(\lambda^2(\sigma_1^2+\sigma_2^2)/2). 
    \end{eqnarray*}
\end{proof}

\paragraph{Notation} For a random vector $\vx$, we write $\subgauss(\vx) = \max_i \subgauss(\vx_i)$.

\subsection{Offline sparsification} 
\label{sec:app:offlinesparse}

\begin{lemma}[Subgaussianity of influencer sparsification]
\label{lem:subgauss_influencer}

This extra result derives the subgaussian bound for this particular subsampling problem. However, it is not essential for proving our main result for offline sparsification.

Consider $\mL = \mI - \mD^{-1/2}\mA\mD^{-1/2}$, ${\tilde \mL}$ the Laplacian matrices corresponding to a graph and its sparsified version.  Then, for any $\vx\in \R^n$,  
\[
\subgauss(\vx^T \tilde  {  \mL}_I \vx) =  S( \frac{\bar q}{d_{\max}}) 
 \frac{d_{\max}^2}{4\bar q^2} \|\vx\|_\infty^4 |\gS_I|,\qquad 
\var(\vx^T \tilde  {  \mL}_I \vx) =
   \frac{\bar q}{\max\{d_i,d_j\}}\cdot(1- \frac{\bar q}{\max\{d_i,d_j\}}) \|\vx\|_\infty^4 |\gS_I|
   \]
\end{lemma}

\begin{proof}{$~$\newline}
 
Denote $d_i$ as the degree of node $i$.
 We first separate the influencer rows and columns from $\mL$, as 
 \[
 \gS_I = \{(i,j):d_i \geq \bar q \text{ or } d_j \geq \bar q\}
 \]
 Now define $\mL_C$ the parts of $\mL$ that are not subsampled, e.g. 
 \[
 \mL_C = \sum_{(i,j)\not\in \gS_I} \mL_{i,j}\ve_i\ve_j^T
 \]
  
    Define  $\tilde  {  \mL}_I = \tilde {\mL} - \mL_C$, $\bm L_I - \bm L-\bm L_C$.
    \item  
    Then,  since $\mL_C$ is not subsampled, $\var(\tilde {\mL}) = \var(\tilde {\mL}_I)$.    
and
\[
 {\tilde \mL}_{ i,j} = c_{i,j}\mL_{i,j}
\]
where 
\[
 c_{i,j} =
\begin{cases}
     \frac{\max\{d_i,d_j\} }{ \bar q} &  \text{w.p. } \frac{\bar q}{\max\{d_i,d_i\}}\\
    0 &  \text{w.p. } 1-\frac{\bar q}{\max\{d_i,d_i\}},
\end{cases}
\quad (i,j)\in \gS_I,
\]
and 1 for $(i,j)\not\in \gS_I$.
Then $\bE[\tilde {\mL}] = \mL$ and $\vx^T\tilde {\mL}\vx$ is subgaussian with parameter $\sigma^2$ if and only if $\vx^T\tilde {\mL_I}\vx$ is subgaussian with parameter $\sigma^2$. Since $c_{i,j}$ is a scaled Bernoulli random variable which is subgaussian with parameter $\sigma_{i,j}^2 = \frac{S(p_{i,j})}{p_{i,j}^2} = S( \frac{\bar q}{\max\{d_i,d_j\}}) \frac{\max\{d_i,d_j\}^2}{\bar q^2}$
and
\[
\vx^T \tilde  {  \mL}_I \vx
= \sum_{(i,j)\in \gS_I }(\vx_i\vx_j\mL_{i,j}) c_{i,j}
\]
then 
$\vx^T \tilde  {  \mL}_I \vx$ is subgaussian with parameter 
\begin{multline*}
 \sum_{(i,j)\in \gS_I }(\vx_i\vx_j\mL_{i,j})^2c_{i,j}^2 = 
 \sum_{(i,j)\in \gS_I }(\vx_i\vx_j\mL_{i,j})^2 S( \frac{\bar q}{\max\{d_i,d_j\}}) \frac{\max\{d_i,d_i\}^2}{\bar q^2}  \\\leq  
 S( \frac{\bar q}{\max\{d_i,d_j\}}) \frac{\max\{d_i,d_j\}^2}{\bar q^2}  \sum_{(i,j)\in \gS_I }(\vx_i\vx_j\mL_{i,j})^2.
\end{multline*}
Similarly, taking the random variable $Z=\vx^T\bm L_I\vx$, then since $-1\leq \bm L_{i,j}\leq 1$, then the following random variable is bounded
\[
\bE[Z] = \vx^T\bm L_I\vx, \qquad 
0\leq \vx^T\bm L_I\vx\leq \|\vx\|_\infty^2|\gS_I|.
\]
and
\begin{multline*}
\var(\vx^T\mL_I\vx) = 
 \sum_{(i,j)\in \gS_I }(\vx_i\vx_j\mL_{i,j})^2\var(c_{i,j}) \\= \sum_{(i,j)\in \gS_I }(\vx_i\vx_j\mL_{i,j})^2 \frac{\bar q}{\max\{d_i,d_j\}}\cdot(1- \frac{\bar q}{\max\{d_i,d_j\}}) \leq \frac{1}{4}\sum_{(i,j)\in \gS_I }(\vx_i\vx_j\mL_{i,j})^2
\end{multline*}

Note also that $\mL_{i,i} = 1$ and $|\mL_{i,j}|\leq 1$, so 
\[
\sum_{(i,j)\in \gS_I }(\vx_i\vx_j\mL_{i,j})^{2} \leq \sum_{(i,j)\in \gS_I }(\vx_i\vx_j)^{2} \leq  \|\vx\|_\infty^4 |\gS_I| 
\]
Overall, this yields   is subgaussian with parameter 
\[
\subgauss(\vx^T \tilde  {  \mL}_I \vx) =S( \frac{\bar q}{d_{\max}}) 
  \frac{d_{\max}^2}{\bar q^2} \|\vx\|_\infty^4 |\gS_I|,\qquad 
\var(\vx^T \tilde  {  \mL}_I \vx) =
  \frac{ \|\vx\|_\infty^4 }{4}|\gS_I|
   \]

\end{proof}

\begin{theorem}[Offline sparsification]
\label{th:app:offlinesparse}

The following holds pointwise over all $\vx$
\[
\pr(|\vx^T\tilde\mL_I\vx - \vx^T\mL_I\vx|\geq \epsilon)
\leq    2
\min\{e^{-\frac{ \epsilon^2 }{8\|\vx\|_\infty^2\|\vx\|_2^2}},e^{-\frac{ \epsilon^2 }{8\|\vx\|_\infty^4|\gS_I|}}\}.
\]

\end{theorem}
 \begin{proof}

The first two results are direct applications of Bernstein's inequality. The last result is proven below. 
 
Suppose $Z\sim \mathbf{Bern}(0,R)$ with mean $\mu$. Then for all $0\leq \theta \leq 1$,
\begin{multline*}
\bE[e^{\lambda Z}] \overset{\text{convexity}}{\leq}(1-\theta) + \theta e^{\lambda R}\overset{\theta = \mu/R}{=}1+(\mu/R)(e^{\lambda R} -1)\\ \overset{1+x\leq e^x}{\leq}e^{(\mu/R)(e^{\lambda R} -1) }\overset{e^x\leq1+x+x^2,0<x\leq1}{\leq}e^{\mu\lambda  +\mu\lambda^2R}, \quad \lambda R \leq 1
\end{multline*}

Then, since $|\vx_i\vx_j\mL_{i,j}|\leq \|\vx\|_\infty^2$,
\begin{multline*}
\bE[e^{\lambda \vx^T\tilde\mL_I\vx}]  = \bE[e^{\lambda \sum_{(i,j)\in \gS_I} \vx_i\vx_j(\tilde\mL_I)_{i,j}}]
\\
\leq  e^{\lambda \sum_{(i,j)\in \gS_I} \vx_i\vx_j\mL_{i,j} (1+\lambda \|\vx\|_\infty^2)} =   e^{\lambda \vx^T\mL_I\vx (1+\lambda \|\vx\|_\infty^2)} , \qquad \lambda \|\vx\|_\infty^2 \leq 1.
\end{multline*}

By Chernoff's inequality, 
\begin{eqnarray*}
\pr(\vx^T\tilde\mL_I\vx\geq (1+\epsilon)\vx^T\mL_I\vx)
&\leq&
\frac{\bE[e^{\lambda \vx^T\tilde\mL_I\vx}]}{e^{\lambda(1+\epsilon) \vx^T\mL_I\vx}} \quad \forall \lambda > 0\\
&\leq&
e^{\lambda\vx^T\mL_I\vx(\lambda \|\vx\|_\infty^2 - \epsilon)} \quad \forall 0< \lambda \leq \frac{1}{\|\vx\|_\infty^2}\\
&=&
e^{-\frac{\vx^T\mL_I\vx }{4\|\vx\|_\infty^2}\epsilon^2}, \quad  \lambda =\frac{\epsilon}{2\|\vx\|_\infty^2}
\end{eqnarray*}
Extending to a two-sided bound and using $\tilde\epsilon = \epsilon \vx^T\mL_I\vx$,
\begin{eqnarray*}
\pr(|\vx^T\tilde\mL_I\vx - \vx^T\mL_I\vx|\geq \tilde\epsilon)
\leq 2
e^{-\frac{\tilde \epsilon^2 }{4\|\vx\|_\infty^2\vx^T\mL_I\vx}} \leq  2
\min\{e^{-\frac{ \epsilon^2 }{8\|\vx\|_\infty^2\|\vx\|_2^2}},e^{-\frac{ \epsilon^2 }{8\|\vx\|_\infty^4|\gS_I|}}\}.
\end{eqnarray*}
since in the last inequality, $\vx^T\mL_I\vx\leq\|\vx\|_2^2 \|\mL_I\|_2\leq \|\vx\|_2^2$ or also $\vx^T\mL_I\vx\leq\|\vx\|_\infty^2 |\gS_I|$.

\end{proof}

\begin{lemma}
For $\vx \sim \mathcal{N}(\mb 0, \mI_n)$,  for $n > 3$,
\[
\pr(  \frac{\|\vx\|_\infty}{\|\vx\|_2} \geq \sqrt{\frac{\log(n)}{n}} )\leq   (n+1)\exp(-\frac{n \log(n)}{2})
\]
\label{lem:gauss_norm_ratio}
\end{lemma}
\begin{proof}

From \cite{vershynin2018high}, for a Gaussian random variable $x$, 
  \[
  \pr(x\geq \epsilon) \leq \frac{1}{\sqrt{2\pi}}\exp(-\frac{\epsilon^2}{2}), \qquad \epsilon \geq 1.
  \]
  Applying union bound, 
$$
\pr\left( \|\vx\|_\infty \geq t_1 \right) \leq \frac{n}{\sqrt{2\pi}}\exp(-\frac{t_1^2}{2}), \qquad t_1\geq 1.
$$

Also from  \cite{vershynin2018high}, 
\[
\pr(|\|x\|_2^2-n|\geq \epsilon) \leq 2 \exp(-\frac{1}{2}\min(\frac{\epsilon^2}{n},\frac{\epsilon}{2}))
\]
so picking $t_2^2 = n+\epsilon$,
$$
\Pr\left( \|\vx\|_2^2 \leq t_2^2 \right) \leq 2\exp(-\frac{(n-t_2^2)^2}{2n})  
$$
 Hence,  
\begin{multline*}
\pr(  \frac{\|\vx\|_\infty}{\|\vx\|_2} \leq \frac{t_1}{t_2} )
\geq 
\pr(   \|\vx\|_\infty \leq t_1\text{ and }\|\vx\|_2 \geq t_2)\\
=
1-\pr(   \|\vx\|_\infty \geq t_1\text{ or }\|\vx\|_2 \leq t_2)
\geq   1- \frac{n}{\sqrt{2\pi}} \exp(-\frac{t_1^2}{2}) -2\exp(-\frac{(n-t^2_2)^2}{2n})
\end{multline*}
so,
\[
\pr(  \frac{\|\vx\|_\infty}{\|\vx\|_2} \geq \frac{t_1}{t_2} )\leq \frac{n}{\sqrt{2\pi}} \exp(-\frac{t_1^2}{2}) +2\exp(-\frac{(n-t^2_2)^2}{2n}).
\]
Taking $t_2 = n$ and $t_1^2 = n\log(n)$ yields
\[
\pr(  \frac{\|\vx\|_\infty}{\|\vx\|_2} \geq \sqrt{\frac{\log(n)}{n}} )\leq  \frac{n}{\sqrt{2\pi}}\exp(-\frac{n\log(n)}{2}) +  2\exp(-\frac{(n-n^2)^2}{2n})  \leq (n+1)\exp(-\frac{n \log(n)}{2})
\]
for $n > 3$.
\end{proof}

\begin{lemma}[ \cite{rudelson2013hanson} Thm. 2.1]
\label{lem:high_dim_prob_spectral}

Let $\vx_1, \dots, \vx_k \sim \mathcal{N}(0, \mI_n)$ be i.i.d. Gaussian vectors and $\mA\in \R^{n\times n}$. Then
\[
\pr(|\|\mA\vx\|_2-\|\mA\|_F|>\epsilon) \leq 2 \exp(- \frac{c\epsilon^2}{s^4\|\mA\|^2_2})
\]
where $c$ is a constant that does not depend on $\mA$ or $n$, and $s$ is the subgaussian constant of a Gaussian random variable. 
\end{lemma}
  
This is a consequence of the Hanson-Wright inequality.

\begin{corollary}
For $n\geq \max\{3,(\frac{32}{\epsilon^2}\ln(3))^{\frac{1}{2.75}}\}$,
\begin{eqnarray*}
\pr(\mL_I-\epsilon \mI \preceq \tilde \mL_I \preceq \mL_I + \epsilon \mI) &\geq&1-
12\exp(-\frac{\epsilon^2n^3}{32\log(n)}) 
-
 4\exp\left(-c'n\epsilon/2\right) 
\end{eqnarray*}
\end{corollary}

\begin{proof}
 
Using Thm. \ref{th:app:offlinesparse}, we can construct a pointwise conditional probability
$$
 \pr\left( | \vx^T \tilde{\mL_I} \vx - \vx^T \mL_I \vx | \geq n\epsilon \,\bigg|\, \frac{\|\vx\|_\infty}{\|\vx\|_2} \leq \sqrt{\frac{\log(n)}{n}}\right)
\leq 2\exp(-\frac{\epsilon^2n^3}{8\|\vx\|_2^4\log(n)}).
$$
 From Lemma \ref{lem:gauss_norm_ratio}, we also have that for a unit Gaussian  vector $\vx\in \R^n$, for $n > 3$
 \[
\pr\left(  \frac{\|\vx\|_\infty}{\|\vx\|_2} \geq \sqrt{\frac{\log(n)}{n}} \right) \leq (n+1)\exp(-\frac{n \log(n)}{2})\leq 2n\exp(-\frac{n}{2}).
\]
So, 
\begin{eqnarray*}
 \pr\left( | \vx^T \tilde{\mL_I} \vx - \vx^T \mL_I \vx | \geq \epsilon  \right) &\leq& 
  \underbrace{\pr\left( | \vx^T \tilde{\mL_I} \vx - \vx^T \mL_I \vx | \geq \epsilon \,\bigg|\, \frac{\|\vx\|_\infty}{\|\vx\|_2} \leq \sqrt{\frac{\log(n)}{n}}\right)}_{2\exp(-\frac{\epsilon^2n^3}{8\|\vx\|_2^4\log(n)})}
\pr\left(  \frac{\|\vx\|_\infty}{\|\vx\|_2} \leq \sqrt{\frac{\log(n)}{n}} \right)\\&&\qquad 
  +
   \pr\left( | \vx^T \tilde{\mL_I} \vx - \vx^T \mL_I \vx | \geq \epsilon \,\bigg|\, \frac{\|\vx\|_\infty}{\|\vx\|_2} \geq \sqrt{\frac{\log(n)}{n}}\right)
   \underbrace{\pr\left(  \frac{\|\vx\|_\infty}{\|\vx\|_2} \geq \sqrt{\frac{\log(n)}{n}} \right)}_{\leq 2n\exp(-\frac{n}{2})}\\
   &\leq & 2\exp(-\frac{\epsilon^2n}{8\|\vx\|_2^4\log(n)}) + 2n\exp(-\frac{n}{2}) ).
 \end{eqnarray*}

Additionally, 
\begin{eqnarray*}
 \pr\left( | \vx^T \tilde{\mL_I} \vx - \vx^T \mL_I \vx | \geq \epsilon  \right) &=&  \pr\left( | \vx^T \tilde{\mL_I} \vx - \vx^T \mL_I \vx | \geq \epsilon  \, \Bigg|\, \|\vx\|_2^2\leq 2 n \right)\pr(\|\vx\|_2^2\leq 2 n)
\\&&\qquad + \pr\left( | \vx^T \tilde{\mL_I} \vx - \vx^T \mL_I \vx | \geq \epsilon  \, \Bigg|\, \|\vx\|_2^2\geq 2 n \right)\pr(\|\vx\|_2^2\geq 2 n)\\
&\leq&
 \pr\left( | \vx^T \tilde{\mL_I} \vx - \vx^T \mL_I \vx | \geq \epsilon  \, \Bigg|\, \|\vx\|_2^2\leq 2 n \right)+\pr(\|\vx\|_2^2\geq 2 n)\\
&\leq& 2\exp(-\frac{\epsilon^2n}{32\log(n)}) + 2n\exp(-\frac{n}{2}) ) + 2\exp(-\frac{ n}{2}) \\
&\leq& 12\exp(-\frac{\epsilon^2n}{32\log(n)})
\end{eqnarray*}

And from Lemma \ref{lem:high_dim_prob_spectral},
taking $\mA = \tilde \mL_I - \mL_I$ and using $\sqrt{n}\|\mA\|_2\geq \|\mA\|_F$, for a single Gaussian random vector $\vx\sim\mathcal N(0,\mI)$,
\[
\pr\left(\|\tilde \mL_I - \mL_I\|_2 >\frac{\sqrt{\vx^T(\tilde \mL_I - \mL_I)\vx}}{\sqrt{n}}+ \sqrt{\epsilon}\right) 
\leq 2\exp\left(-c'n\epsilon\right)
\]
where $c' = c/(2s^4)$.
Squaring both sides and multiplying by $\sqrt{n}$, 
\begin{eqnarray*}
2\exp\left(-c'n\epsilon\right)&\geq& 
\pr\left(n\|\tilde \mL_I - \mL_I\|^2_2 >
\vx_i^T(\tilde \mL_I - \mL_I)\vx_i
+ n\epsilon +\sqrt{n\epsilon \vx_i^T(\tilde \mL_I - \mL_I)\vx_i},\right) \\
&=& 
\pr\left(n\|\tilde \mL_I - \mL_I\|^2_2 > \vx_i^T(\tilde \mL_I - \mL_I)\vx_i +  2n\epsilon  ,\; i = 1,...,K\right) \pr(\vx_i^T(\tilde \mL_I - \mL_I)\vx_i\leq n\epsilon)\\
&&\qquad 
+
\pr\left(n\|\tilde \mL_I - \mL_I\|^2_2 >
\vx^T(\tilde \mL_I - \mL_I)\vx
+ n\epsilon +\sqrt{n\epsilon\vx^T(\tilde \mL_I - \mL_I)\vx}\,\bigg|\, \frac{\vx^T(\tilde \mL_I - \mL_I)\vx}{n}\geq\epsilon)\right)\\&&\qquad\quad\cdot \pr(\vx^T(\tilde \mL_I - \mL_I)\vx\geq n \epsilon) \\
&\geq& \pr\left(n\|\tilde \mL_I - \mL_I\|^2_2 > \vx^T(\tilde \mL_I - \mL_I)\vx +  2n\epsilon  \right) \underbrace{\pr(\vx^T(\tilde \mL_I - \mL_I)\vx\leq n\epsilon)}_{\geq 1-12\exp(-\frac{\epsilon^2n^3}{32\log(n)})}\\
\end{eqnarray*}
For $n$ large enough,
\[
12\exp(-\frac{\epsilon^2n^3}{32\log(n)})\leq 12\exp(-\frac{\epsilon^2n^{2.75}}{32})  \leq D \iff n \geq (\frac{32}{\epsilon^2}\ln(\frac{12}{D}))^{\frac{1}{2.75}}.
\]
Pick $D = 1/2$. Then
 in that regime,
\[
\pr\left(n\|\tilde \mL_I - \mL_I\|^2_2 > \vx^T(\tilde \mL_I - \mL_I)\vx +  n\epsilon  \right) \leq 4\exp(-c'n\epsilon), 
\]

So,  
\begin{eqnarray*}
\pr(\mL_I-\epsilon \mI \preceq \tilde \mL_I \preceq \mL_I + \epsilon \mI) &=  & 1-\pr(n\|\mL_I- \tilde \mL_I\| \geq n\epsilon) \\
&\geq& 1-\pr( n\|\mL_I- \tilde \mL_I\| > \vx^T(\tilde \mL_I - \mL_I)\vx +  \epsilon  \quad  
\text{ or }\quad  \vx^T(\tilde \mL_I - \mL_I)\vx  \geq \epsilon ) \\
&\geq & 1-
12\exp(-\frac{\epsilon^2n^3}{32\log(n)}) 
-
 4\exp\left(-c'n\epsilon/2\right)  
\end{eqnarray*}
This is a uniform spectral bound.
\end{proof}

\section{Convergence results for offline sparsification (Section \ref{sec:randomappr})}
\label{app:proofs:randomappr}

\begin{assumption}
There exists constants $\sigma^{(t)}$ and $\sigma^{(t,i)}$ for $t = 1,...,T$ and $i = 1,..., |\gS^{(t)}|$ such that
\begin{itemize}
    \item the random variable $\tilde {z}^{(t)}_j | \vx^{(t)}$ is subgaussian with parameter $(\sigma^{(t)})^2$, for all $j$
    \item  the random variable $\tilde {z}^{(t,i+1)}_j|\tilde {\vz}^{(t,i)}$ is subgaussian with parameter $(\sigma^{(t,i)})^2$, for all $j$
\end{itemize}
 \label{assp:subgauss}
\end{assumption}

\begin{assumption}
There exists a constant $R$ upper bounding each residual term 
\[
\max\{\|\tilde {\vz}^{(t)}\|_\infty,\|\tilde {\vz}^{(t,i)}\|_\infty\}\leq R,
\]
for all $t = 1,...,T, \; i = 1,...,|\gS_t|$.
\label{assp:maxbound}
\end{assumption}

\begin{theorem}
Consider the online version of the algorithm. The probability that for some $i$, \(\mD^{-1} z_i^{(t)} > \epsilon \) but \(\mD^{-1} \tilde{z}_i^{(t)} < c\epsilon \) is bounded by
\[
\pr(|\mD^{-1}z_i^{(t)}| > \epsilon \text{ and } |\mD^{-1}\tilde{z}_i^{(t)}| < c\epsilon ) \leq  \exp\left( -\frac{(1 - c)^2 \epsilon^2}{2\sigma_t^2} \right).
\]
\label{th:early_stopping}
\end{theorem}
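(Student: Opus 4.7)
The plan is to reduce the joint event to a one-sided deviation of the unbiased estimator from its conditional mean, and then apply a standard subgaussian tail bound granted by Assumption \ref{assp:subgauss}.

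First, recall that $\tilde{\vz}^{(t)}$ is constructed (via \textsc{DualCorrect}) so that $\bE[\tilde{\vz}^{(t)} \mid \vx^{(t)}] = \bar{\vz}^{(t)}$, and in the proof the ``true'' residual we refer to is the deterministic $\vz^{(t)} \equiv \bar{\vz}^{(t)} = \tfrac{1+\alpha}{2\alpha}\mD^{1/2}(\vb - \mQ\vx^{(t)})$ given $\vx^{(t)}$. Thus conditionally on $\vx^{(t)}$, the difference $\tilde{z}_i^{(t)} - z_i^{(t)}$ is a mean-zero subgaussian random variable with parameter $\sigma_t^2$ by Assumption \ref{assp:subgauss}. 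Next, I would use monotonicity (which gives $\bE[\bar{\vz}^{(t)}] \geq 0$ and, since we are conditioning on $\vx^{(t)}$, $\bar{\vz}^{(t)} \geq 0$ pointwise in the positive-initialization regime) to drop the absolute values: the event in question becomes
\[
\bigl\{ z_i^{(t)} > d_i \epsilon \bigr\} \cap \bigl\{ \tilde z_i^{(t)} < d_i c \epsilon \bigr\},
\]
which is contained in the single-sided deviation event
\[
\bigl\{ z_i^{(t)} - \tilde z_i^{(t)} > (1-c)\, d_i \epsilon \bigr\} \subseteq \bigl\{ z_i^{(t)} - \tilde z_i^{(t)} > (1-c)\epsilon \bigr\},
\]
using $d_i \geq 1$ for any node that actually appears (isolated nodes play no role in \textsc{Push}).

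Applying the standard subgaussian Chernoff bound to the centred variable $z_i^{(t)} - \tilde z_i^{(t)}$ conditionally on $\vx^{(t)}$ then yields
\[
\pr\bigl(z_i^{(t)} - \tilde z_i^{(t)} > (1-c)\epsilon \bigm| \vx^{(t)}\bigr) \leq \exp\!\left(-\frac{(1-c)^2 \epsilon^2}{2\sigma_t^2}\right),
\]
and taking an outer expectation over $\vx^{(t)}$ gives the unconditional bound as stated.

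The main subtlety I expect is twofold. First, the theorem normalises by $\mD^{-1}$ on both sides but the stated bound has no $d_i$ factor, so I need to justify the monotone step $d_i \geq 1$ (or equivalently, absorb any $d_i^2$ improvement into a tighter bound; as stated the result is only loose in this direction, which is harmless). Second, I need to verify that the subgaussian parameter $\sigma_t^2$ from Assumption \ref{assp:subgauss} is indeed the conditional parameter for $\tilde z_i^{(t)} \mid \vx^{(t)}$ and not some marginal quantity; the assumption is written in exactly that conditional form, so the Chernoff argument conditional on $\vx^{(t)}$ applies cleanly and then integrates out. No chain-rule subgaussianity (Lemma \ref{lem:chainrule-subgauss}) is needed here because this statement concerns only a single estimation step of $\tilde{\vz}^{(t)}$ from $\vx^{(t)}$.
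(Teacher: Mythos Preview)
Your proposal is correct and follows essentially the same route as the paper: contain the joint event in a one-sided deviation of $\tilde z_i^{(t)}$ from its conditional mean $z_i^{(t)}$, then apply the subgaussian (Hoeffding) tail bound from Assumption~\ref{assp:subgauss}. In fact you are more careful than the paper's own argument, which writes the intermediate containment with the inequality pointing the wrong way (a typo) and silently drops the $d_i$ factor; your explicit use of $d_i\geq 1$ and the conditional-then-average structure makes those steps transparent.
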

\begin{proof}
This is the result of a direct application of a Hoeffding bound:
\[
\pr(|\mD^{-1}\vz_i^{(t)}| > \epsilon \text{ and } |\mD^{-1}\tilde{\vz}_i^{(t)}| < c\epsilon )  \leq \pr(|\mD^{-1}\tilde{\vz}_i^{(t)} - \mD^{-1}\vz_i^{(t)}| < (1-c)\epsilon ) 
\leq \exp\left(-\frac{(1-c)^2\epsilon^2}{2\sigma_t^2}\right)
\]
\end{proof}

\begin{lemma}
Using \textsc{Sampler}, the constants
    \[
    (\sigma^{(t)})^2  \leq \frac{S(p_i)}{p_i^2}|\supp(\vx^*)|,\qquad 
    (\sigma^{(t,i)})^2 \leq \frac{S(p_i)}{p_i^2} 
    \]

\end{lemma}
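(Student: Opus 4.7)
The plan is to reuse the scaled Bernoulli machinery from the offline analysis (Lemma \ref{lem:subgauss_influencer} and Corollary \ref{cor:app:offlinesparse}) and apply it locally to each sampling event that occurs during \textsc{PushAPPR} and \textsc{DualCorrect}. At the core of both bounds is the observation that, for a single push at node $u_i$, the stochastic update to the residual has the form
\begin{equation*}
\tilde{\vz}^{(t,i+1)} - \bE[\tilde{\vz}^{(t,i+1)} \mid \tilde{\vz}^{(t,i)}]
   = \tfrac{1-\alpha}{2}\, z^{(t,i)}_{u_i}\, \bigl((\mA\mD^{-1}\ve_{u_i})_{(\gS)} - \mA\mD^{-1}\ve_{u_i}\bigr),
\end{equation*}
where the only randomness is the Bernoulli inclusion of each neighbor $v \in \gN(u_i)$ with probability $p_i$. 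Each entry of this difference vector is therefore a (rescaled) centered Bernoulli with parameter $p_i$, multiplied by a coefficient bounded by $z^{(t,i)}_{u_i}\,\|\mA\mD^{-1}\ve_{u_i}\|_\infty$, which the Bernoulli lemma in Section~\ref{sec:app:offlinesparse} bounds by $S(p_i)/p_i^2$. That immediately gives the per-coordinate bound $(\sigma^{(t,i)})^2 \leq S(p_i)/p_i^2$ once we absorb the (bounded) coefficient into the stated form.

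For $\sigma^{(t)}$, I would note that \textsc{DualCorrect} effectively performs one independent sampling event per node $u \in \gU \subseteq \supp(\vx^{(t)})$, and $\supp(\vx^{(t)}) \subseteq \supp(\vx^*)$ by the monotonicity result (the online analogue proved just above, together with Lemma~\ref{lem:conservation_monotonicity} in the deterministic case). For a fixed coordinate $j$, the total contribution to $\tilde{z}^{(t)}_j$ is a sum of at most $|\supp(\vx^*)|$ conditionally independent scaled Bernoullis, each of which has subgaussian parameter $\leq S(p_i)/p_i^2$ by the same argument as above. Summing variances of independent subgaussians (the standard fact used in Lemma~\ref{lem:subgauss_influencer}) yields $(\sigma^{(t)})^2 \leq \frac{S(p_i)}{p_i^2}\,|\supp(\vx^*)|$.

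The main obstacle I anticipate is handling the conditioning carefully: the push events within a single epoch are not marginally independent of one another, since later pushes depend on residuals produced by earlier ones. I would resolve this by conditioning sequentially (Lemma~\ref{lem:chainrule-subgauss}), applying the per-step Bernoulli bound to $\tilde{\vz}^{(t,i+1)}\mid \tilde{\vz}^{(t,i)}$ and then peeling off the outer conditional expectations; in particular, the statement for $\sigma^{(t,i)}$ is already phrased as a conditional bound, so this reduces to a clean single-step computation. The only mild technical point is verifying that the coefficient $z^{(t,i)}_{u_i}\|\mA\mD^{-1}\ve_{u_i}\|_\infty$ is absorbed correctly into the $S(p_i)/p_i^2$ factor; this follows because $\mA\mD^{-1}$ has entries bounded by $1$ in absolute value and residuals are bounded by Assumption~\ref{assp:maxbound}, so the resulting multiplicative factor is an $O(1)$ constant that the statement subsumes.
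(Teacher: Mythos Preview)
Your proposal is correct and follows essentially the same route as the paper: both identify each sampled coordinate as a scaled Bernoulli with subgaussian parameter $(\vw_i/p_i)^2 S(p_i)$, apply this per coordinate for $\sigma^{(t,i)}$, sum over the at most $|\supp(\vx^*)|$ independent contributions in \textsc{DualCorrect} for $\sigma^{(t)}$ (using $\supp(\vx^{(t)})\subseteq\supp(\vx^*)$), and absorb the coefficient via Assumption~\ref{assp:maxbound}. Your write-up is in fact more explicit than the paper's---the paper's proof is a terse three lines that simply states $\subgauss(\tilde\vw_i)\le (\vw_i/p_i)^2 S(p_i)$ and reads off the two bounds, silently inserting the factor $R$ that you correctly flag as the ``$O(1)$ constant the statement subsumes.''
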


\begin{proof}
  \textsc{Sampler}, 
 forms  
    \[
    \tilde {\vw}_i = \begin{cases} 
    \frac{\vw_i}{p_i}, & \text{w.p. } p_i\\
    0, & \text{else.}
    \end{cases}
    \]
    then
    \[
    \subgauss(\tilde {\vw}_i ) \leq  (\frac{\vw_i}{p_i})^2 S(p_i).
    \]
    So, therefore, 
    \[
    (\sigma^{(t)})^2 = (\frac{\|\vx\|_\infty}{p_i})^2 S(p_i) |\supp(\vx^{(t)})| \leq \frac{S(p_i)R}{p_i^2}|\supp(\vx^*)|, 
    \]
    \[
    (\sigma^{(t,i)})^2 = (\frac{\|\vz^{(t,i)}\|_\infty}{p_i})^2 S(p_i) \leq \frac{S(p_i)R}{p_i^2} 
    \]
since $R$ upper bounds the max norm of $\vz^{(t,i)}$, which in turn bounds the mass that can be pused to $\vx^{(t,i)}$.
 
\end{proof}

\begin{lemma}

Under assumptions \ref{assp:maxbound} and \ref{assp:subgauss}, define $\tilde {\mQ} = \frac{1+\alpha}{2}\mD^{1/2}\mQ \mD^{-1/2}$. Then 
\[
\bE[\tilde {\vz}^{(t,i+1)}|\tilde {\vz}^{(t,i)}]   = (\mI- {\tilde \mQ}\ve_{s_i}\ve_{s_i}^T )\tilde {\vz}^{(t,i)},\qquad
\bE[\tilde {\vz}^{(t,i)}|\tilde {\vz}^{(t)}]  = \prod_{j\in \gS_t} (\mI- {\tilde \mQ} \ve_{s_j}\ve_{s_j}^T)\tilde {\vz}^{(t)}
\]

\[
\subgauss(\tilde {\vz}_j^{(t,i+1)}) \leq  (\sigma^{(t,i)})^2  +R_0 \max_k \subgauss(\tilde {\vz}^{(t,i)}_k)
\]
\[
\subgauss(\tilde {\vz}_j^{(t,i)}) \leq 
\sum_{j=1}^{i}  R_0^{j-1} (\sigma^{(t,j)})^2 +  R_0^{i}  (\sigma^{(t)})^2.
\]

\end{lemma}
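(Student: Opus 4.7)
The plan is to handle the two conclusions separately: first the conditional-mean identities, then telescope them into the subgaussian bound via the chain rule (Lemma \ref{lem:chainrule-subgauss}).

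\textbf{Step 1: One-step conditional mean.} I would reuse the calculation done in the ``Unbiased estimators'' lemma earlier in the appendix, which shows
\[
\bE[\tilde {\vz}^{(t,i+1)} \mid \tilde{\vz}^{(t,i)}]
= \tilde{\vz}^{(t,i)} - \tfrac{1+\alpha}{2}\,\tilde{z}^{(t,i)}_{s_i}\,\mD^{1/2}\mQ\mD^{-1/2}\ve_{s_i}.
\]
Substituting the definition $\tilde\mQ = \tfrac{1+\alpha}{2}\mD^{1/2}\mQ\mD^{-1/2}$ and pulling the scalar $\tilde z^{(t,i)}_{s_i} = \ve_{s_i}^T \tilde \vz^{(t,i)}$ to the right collapses this to $(\mI - \tilde\mQ \ve_{s_i}\ve_{s_i}^T)\tilde{\vz}^{(t,i)}$, which is the first identity. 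The second identity follows by tower-product: condition on the intermediate $\tilde \vz^{(t,j)}$ for $j = i-1, i-2, \dots, 0$ in order and use linearity of the one-step map to commute the expectations past each rank-one factor. Note $\tilde \vz^{(t,0)} = \tilde \vz^{(t)}$ by convention.

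\textbf{Step 2: Chain rule for subgaussianity.} For the first subgaussian inequality, I would apply Lemma \ref{lem:chainrule-subgauss} with $X = \tilde z^{(t,i+1)}_j$ and $Y = \tilde \vz^{(t,i)}$:
\[
\subgauss(\tilde z^{(t,i+1)}_j)
\leq \bE\big[\subgauss(\tilde z^{(t,i+1)}_j \mid \tilde \vz^{(t,i)})\big]
+ \subgauss\big(\bE[\tilde z^{(t,i+1)}_j \mid \tilde \vz^{(t,i)}]\big).
\]
The first term is bounded by $(\sigma^{(t,i)})^2$ directly from Assumption \ref{assp:subgauss} (the assumption is uniform in the conditioning value, so the outer expectation is harmless). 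For the second term I would use Step 1: the conditional mean is a linear combination
$\sum_k M_{j,k} \tilde z^{(t,i)}_k$ with $M = \mI - \tilde\mQ\ve_{s_i}\ve_{s_i}^T$, at most two of whose coefficients are nonzero. By the $\psi_2$ triangle inequality applied coordinate-wise (valid even for dependent summands),
\[
\subgauss\Big(\sum_k M_{j,k} \tilde z^{(t,i)}_k\Big)
\leq \Big(\sum_k |M_{j,k}|\sqrt{\subgauss(\tilde z^{(t,i)}_k)}\Big)^2
\leq \|M_{j,:}\|_1^2\, \max_k \subgauss(\tilde z^{(t,i)}_k),
\]
so taking $R_0 := \max_{j,s} \|\ve_j^T(\mI - \tilde\mQ\ve_s\ve_s^T)\|_1^2$ yields the one-step recursion claimed.

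\textbf{Step 3: Telescoping.} With the recursion $a_{i+1} \leq (\sigma^{(t,i)})^2 + R_0\, a_i$ in hand (writing $a_i := \max_k \subgauss(\tilde z^{(t,i)}_k)$), a straightforward induction on $i$ unrolls to
\[
a_i \leq \sum_{j=1}^{i} R_0^{\,j-1}(\sigma^{(t,j)})^2 + R_0^{\,i}\, a_0,
\]
and the base case uses $a_0 = \max_k \subgauss(\tilde z^{(t)}_k) \leq (\sigma^{(t)})^2$ from the first bullet of Assumption \ref{assp:subgauss}. Since $\subgauss(\tilde z^{(t,i)}_j) \leq a_i$, this delivers the second claim.

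\textbf{Anticipated obstacle.} The delicate point is the second term of the chain rule: the coordinates of $\tilde \vz^{(t,i)}$ are highly dependent (one \textsc{Push} couples a whole neighborhood), so I cannot use the independent-sum formula $\sum c_k^2\sigma_k^2$. The triangle inequality for $\|\cdot\|_{\psi_2}$ is what rescues this, at the cost of an $\ell_1$-rather-than-$\ell_2$ aggregation of the row $M_{j,:}$. Pinning down a clean closed form for $R_0$ (in terms of $\alpha$ and the node-degree ratios that appear in $\mD^{1/2}\mQ\mD^{-1/2}$) is the main technical step; the rest is algebraic bookkeeping.
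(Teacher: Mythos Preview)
Your proposal is correct and follows essentially the same route as the paper: cite the earlier unbiased-estimator computation for the one-step mean, iterate for the product formula, apply the chain-rule lemma (Lemma \ref{lem:chainrule-subgauss}) to split the subgaussian parameter, bound the conditional-mean piece via a row norm of $\mI-\tilde\mQ\ve_{s_i}\ve_{s_i}^T$, and telescope. The paper's proof is terser on the second chain-rule term---it simply records $\|\mI-\tilde\mQ\ve_{s_i}\ve_{s_i}^T\|_\infty=1$ and asserts the $R_0$ bound---whereas you explicitly invoke the $\psi_2$ triangle inequality to handle dependence among the coordinates of $\tilde\vz^{(t,i)}$; your treatment is the more careful justification of the same step, and your definition $R_0=\max_{j,s}\|\ve_j^T(\mI-\tilde\mQ\ve_s\ve_s^T)\|_1^2$ is exactly the quantity the paper's $\|\cdot\|_\infty$ remark is pointing to.
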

\begin{proof}
We have already previously shown that

\[
\bE[\tilde {\vz}^{(t,i+1)}|\tilde {\vz}^{(t,i)}] =   (\mI-\tilde \mQ\ve_{s_i}\ve_{s_i}^T )\tilde {\vz}^{(t,i)}
\]
so using chain rule, 
\[
\bE[\tilde {\vz}^{(t,i)}|\tilde {\vz}^{(t)}]  = \prod_{j\in \gS_t} (\mI-\tilde \mQ \ve_{s_j}\ve_{s_j}^T)\tilde {\vz}^{(t)}
\]

Using Lemma \ref{lem:chainrule-subgauss}
\begin{eqnarray*}
\subgauss(\tilde {z}_j^{(t,i+1)}) &=& \bE[\subgauss(\tilde {z}_j^{(t,i+1)}|\tilde {z}_j^{(t,i)}) ]
+
\subgauss(\bE[\tilde {\vz}_j^{(t,i+1)}|\tilde {z}_j^{(t,i)}] ) \\
&=& (\sigma^{(t,i)})^2 + \subgauss(((\mI-\tilde \mQ\ve_{s_i}\ve_{s_i}^T )\tilde {\vz}^{(t,i)})_j)\\
&\leq& (\sigma^{(t,i)})^2  +R_0 \max_k \subgauss(\tilde {z}^{(t,i)}_k)
\end{eqnarray*}
where $\|\mI-\tilde \mQ\ve_{s_i}\ve_{s_i}^T \|_\infty  = 1$.
Telescoping,
\begin{eqnarray*}
\subgauss(\tilde {z}_j^{(t,i)}) &\leq& \sum_{j=1}^{i}   R_0^{j-1} (\sigma^{(t,j)})^2 +    {R_0}^{i} \max_k \subgauss(\tilde {z}^{(t)}_k)\\
&\leq&
\sum_{j=1}^{i}  R_0^{j-1} (\sigma^{(t,j)})^2 +  R_0^{i}  (\sigma^{(t)})^2.
\end{eqnarray*}
 \end{proof}

\begin{theorem}[$1/T$ rate]
Consider $f(\vx) = \frac{1}{2}\vx^T\mQ\vx-\vb^T\vx$ where $\vb = \frac{2\alpha}{1+\alpha}\mD^{-1/2}\ve_s$. Initialize $\vx^{(0)} = 0$. 
Define  $M^{(t)} = \sum_{\tau=1}^t |\gS_t|$ the number of push calls at epoch $t$.
Then, 

\[
\min_{t,j} \| \nabla f(\vx^{(t,j)})\|_2^2 \leq    \frac{1 }{\alpha M^{(t)}} + \frac{\alpha\sigma_{\max}^2}{2}
\]
 Here, $\alpha$ can be chosen to mitigate the tradeoff between convergence rate and final noise level.
\end{theorem}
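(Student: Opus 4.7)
The approach is to treat each push as a stochastic coordinate update on the quadratic $f(\vx) = \tfrac12\vx^\top\mQ\vx - \vb^\top\vx$ and run the standard descent-plus-telescope template from non-convex stochastic optimization. The key identities for this viewpoint are already available: $\nabla f(\vx^{(t,i)}) = -\tfrac{2\alpha}{1+\alpha}\mD^{-1/2}\bar{\vz}^{(t,i)}$ with $\bar{\vz}^{(t,i)} = \bE[\tilde{\vz}^{(t,i)}\mid \vx^{(t,i)}]$ by the unbiasedness lemma, and the update rule is the rank-one step $\vx^{(t,i+1)} = \vx^{(t,i)} + \tfrac{\alpha}{\sqrt{d_{u_i}}}\tilde{z}_{u_i}^{(t,i)}\ve_{u_i}$. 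Thus a push is, in conditional expectation, a coordinate gradient step along $u_i$ with effective step size proportional to $\alpha$.

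First I would write the exact quadratic expansion
\[
f(\vx^{(t,i+1)}) - f(\vx^{(t,i)}) = \nabla f(\vx^{(t,i)})^\top \vd^{(t,i)} + \tfrac{1}{2}(\vd^{(t,i)})^\top\mQ\,\vd^{(t,i)},
\]
with $\vd^{(t,i)} := \tfrac{\alpha}{\sqrt{d_{u_i}}}\tilde z_{u_i}^{(t,i)}\ve_{u_i}$. Because the eigenvalues of $\mQ$ lie in $[2\alpha/(1+\alpha),1]$, the quadratic term is at most $\tfrac{\alpha^2}{2 d_{u_i}}(\tilde z_{u_i}^{(t,i)})^2$. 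Taking conditional expectation given $\vx^{(t,i)}$, substituting $\bE[\tilde z_{u_i}^{(t,i)}\mid \vx^{(t,i)}] = \bar z_{u_i}^{(t,i)}$, and upper bounding the second moment via the subgaussian assumption, $\bE[(\tilde z_{u_i}^{(t,i)})^2\mid \vx^{(t,i)}] \leq (\bar z_{u_i}^{(t,i)})^2 + \sigma_{\max}^2$, produces a per-step descent inequality of the form
\[
\bE[f(\vx^{(t,i+1)})\mid \vx^{(t,i)}] \leq f(\vx^{(t,i)}) - c_1\alpha\,[\nabla f(\vx^{(t,i)})_{u_i}]^2 + c_2\alpha^2\sigma_{\max}^2,
\]
for explicit constants $c_1,c_2$ depending only on $\alpha$ (the $1/d_{u_i}$ factor and the $\tfrac{2\alpha}{1+\alpha}$ prefactor in the gradient combine with $d_{u_i}^{-1/2}$ from $\vd^{(t,i)}$).

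Telescoping across all $M^{(t)}$ pushes, using $f(\vx^{(0)}) = 0$ and $f(\vx) \geq f^\star$ bounded below by an $O(1)$ quantity (independent of $T$), yields
\[
\sum_{\tau=1}^{t}\sum_{i=1}^{|\gS_\tau|} \bE\bigl[(\nabla f(\vx^{(\tau,i)})_{u_i})^2\bigr] \leq \frac{f(\vx^{(0)}) - f^\star}{c_1\alpha} + \frac{c_2 M^{(t)}\alpha\sigma_{\max}^2}{c_1}.
\]
Dividing by $M^{(t)}$ and passing from the average to the minimum gives the stated rate on a coordinate-wise gradient quantity; rebalancing constants recovers the form $\tfrac{1}{\alpha M^{(t)}} + \tfrac{\alpha\sigma_{\max}^2}{2}$.

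The main obstacle is upgrading the coordinate-wise bound $\min_{t,i}[\nabla f(\vx^{(t,i)})_{u_i}]^2$ to the full-norm quantity $\min_{t,j}\|\nabla f(\vx^{(t,j)})\|_2^2$ that appears in the statement. The cleanest route is to sum across an entire epoch at once: each coordinate of $\gA^{(t)}$ is touched exactly once per \textsc{PushAPPR} call, so the per-epoch expected decrease collects $\sum_{u\in\gA^{(t)}}[\nabla f]_u^2$, which is the gradient norm squared restricted to the active set. The inactive coordinates satisfy $|\tilde z_k^{(t)}|\leq c d_k\epsilon$, which by Theorem \ref{th:early_stopping} translates (with high probability) into $|\bar z_k^{(t)}|\leq d_k\epsilon$, hence contributes at most $O(n\epsilon^2)$ to $\|\nabla f\|_2^2$; this truncation error is absorbed into the stated $\tfrac{\alpha\sigma_{\max}^2}{2}$ floor by the choice of $\alpha$. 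The bookkeeping is slightly delicate because $\tilde z^{(t,i)}$ drifts from $\bar z^{(t,i)}$ through the epoch, but monotonicity of $\bE[\tilde{\vz}^{(t,i)}]$ proven earlier ensures the drift cannot inflate the noise uniformly, leaving a term controlled by $\sigma_{\max}^2$.
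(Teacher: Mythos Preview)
Your proposal follows the same template as the paper's own proof: an exact quadratic descent inequality for a single push, then telescope across all $M^{(t)}$ pushes and divide. The paper computes the linear term via $\bE[\vx^{(t,i+1)}\mid \vx^{(t,i)}] - \vx^{(t,i)} = \alpha\mD^{-1/2}(\vb-\mQ\vx^{(t,i)})_{\ve_{s_i}}$, bounds the quadratic term by $\alpha^2(\sigma^{(t,i+1)})^2$ through the subgaussian variance assumption, and then directly writes
\[
\bE[f(\vx^{(t,i+1)})\mid \vx^{(t,i)}] \leq f(\vx^{(t,i)}) - \alpha\|\nabla f(\vx^{(t,i)})\|_2^2 + \tfrac{\alpha^2(\sigma^{(t,i+1)})^2}{2},
\]
telescopes, and uses $f(\vx^{(0)})-f^\star\leq 1$ to finish.

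The one point of divergence is exactly the ``main obstacle'' you flag: a single push touches only coordinate $u_i$, so the honest linear term is $-c\,\alpha[\nabla f(\vx^{(t,i)})]_{u_i}^2$, not $-\alpha\|\nabla f(\vx^{(t,i)})\|_2^2$. The paper's proof does \emph{not} address this at all; it simply writes the full norm in the per-step inequality without comment. Your epoch-aggregation argument (summing coordinate contributions over $\gA^{(t)}$ and bounding inactive coordinates through the stopping rule and Theorem~\ref{th:early_stopping}) is therefore additional care beyond what the paper supplies, not a detour. If your goal is to match the paper, you can simply assert the per-step full-norm inequality as it does; if your goal is a rigorous proof, the route you sketch is the natural patch, though note it yields a bound on $\min_{t}\|\nabla f(\vx^{(t,0)})\|_2^2$ (at epoch boundaries, where all active coordinates are collected against a single iterate) rather than the finer $\min_{t,j}$ over all intermediate pushes.
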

\begin{proof}
\begin{eqnarray*}
\bE[f( \vx^{(t,i+1)})|\vx^{(t,i)}] &\leq& f( \vx^{(t,i)}) + \bE[\nabla f( \vx^{(t,i)})^T( \vx^{(t,i+1)}-\vx^{(t,i)}) | \vx^{(t,i)}]+ \bE[\| \vx^{(t,i+1)}- \vx^{(t,i)}\|_2^2| \vx^{(t)}]\\
&=& f(\vx^{(t,i)}) + \bE[(\mQ\vx^{(t,i)}-\vb)^T(\vx^{(t,i+1)}- \vx^{(t,i)})| \vx^{(t,i)} ]+  \bE[\|\vx^{(t,i+1)}- \vx^{(t,i)}\|_2^2|\vx^{(t,i)}]\\
&=& f(\vx^{(t,i)}) + (\mQ \vx^{(t,i)}-\vb)^T(\bE[ \vx^{(t,i+1)}| \vx^{(t,i)} ]- \vx^{(t,i)})+ \bE[\| \vx^{(t,i+1)}- \vx^{(t,i)}\|_2^2| \vx^{(t,i)}]
\end{eqnarray*}

Note that 
$
 \vx^{(t,i+1)} - \vx^{(t,i)} = \alpha   \mD^{-1/2}\tilde {\vz}_{s_{i+1}}^{(t,i+1)},
$
so
\[
\bE[\vx^{(t,i+1)} | \vx^{(t,i)}] = 
 \vx^{(t,i)}  + \alpha  \mD^{-1/2} \bE[\tilde {\vz}_{s_{i+1}}^{(t,i+1)}|   \vx^{(t,i)}] = \vx^{(t,i)}  + \alpha   \mD^{-1/2}(\vb-\mQ \vx^{(t,i)})_{\ve_{s_i}}
\]
and for $d = \diag(\mD)$, $\min_i d_i \geq 1$,
\begin{eqnarray*}
\bE[\| \vx^{(t,i+1)}-\vx^{(t,i)}\|_2^2| \vx^{(t,i)}]
&=&
\alpha^2 \bE[( d_{s_{i+1}}^{-1/2}{\tilde {\vz}}_{s_{i+1}}^{(t,i+1)})^2|\vx^{(t,i)}]\\
&
=&\frac{\alpha^2}{ d_{s_{i+1}}} \var( {\tilde {\vz}}_{s_{i+1}}^{(t,i+1)}| \vx^{(t,i)}) - \frac{\alpha^2}{ d_{s_{i+1}}} \bE[ {\tilde {\vz}}_{s_{i+1}}^{(t,i+1)} | \vx^{(t,i)}] ^2\\
 &\leq& \alpha^2 (\sigma^{(t,i+1)})^2  
\end{eqnarray*}

so, using $\nabla f(\vx) = \mQ \vx - \vb$,

\begin{eqnarray*}
\bE[f( \vx^{(t,i+1)})|\vx^{(t,i)}] &\leq&   f(\vx^{(t,i)}) - \alpha \|\nabla f(\vx^{(t,i)})\|_2^2+ \frac{  \alpha^2(\sigma^{(t,i+1)})^2  }{2} 
\end{eqnarray*}

Telescoping over one epoch,

\begin{eqnarray*}
\bE[f(\vx^{(t,i+1)})|\vx^{(t)}] - f(\vx^{(t)}) &\leq&     -  \alpha  \sum_{j=1}^{i+1}\| \nabla f(\vx^{(t,j)})\|_2^2 + (i-1)\frac{\alpha^2 \sigma_{\max}^2}{2}.
\end{eqnarray*}

Since $\vx^{(t+1)} = \vx^{(t,|\gS^{(t)}|)}$,
telescope again

\begin{eqnarray*}
\bE[f( \vx^{(t)})] - f(\vx^{(0)}) &\leq&     - \alpha  \sum_{\tau=1}^t\sum_{j=1}^{|\gS^{(\tau)}|}\| \nabla f(\vx^{(t,j)})\|_2^2  +  \frac{ M^{(t)} \alpha^2 \sigma_{\max}^2}{2}
\end{eqnarray*}

Then rearranging, 
\[
\frac{1}{M^{(t)}} \sum_{\tau=1}^t\sum_{j=1}^{|\gS^{(\tau)}|}\| \nabla f(\vx^{(t,j)})\|_2^2 
\leq \frac{f(\vx^{(0)}) - \bE[f( \vx^{(t)})] }{\alpha M^{(t)}} + \frac{\alpha\sigma_{\max}^2}{2}
\leq \frac{f(\vx^{(0)})-f^* }{\alpha M^{(t)}} + \frac{\alpha\sigma_{\max}^2}{2}
\]
One can pick $\alpha\in (0,1)$ to mitigate this tradeoff.
\end{proof}

\begin{theorem}[Linear rate in expectation]
 Using $\bar {\vz}^{(t)} = \frac{1+\alpha}{2\alpha}\mD^{1/2}(\vb - \mQ \vx^{(t)})$, in expectation, 
 \[
 \|\bE[  {\tilde \vz}^{(t+1)}] \|_1\leq    \exp\left(-\frac{M_t \alpha \epsilon }{R}\right).
\]

Moreover, since $\|\vz^{(t)} \|_1\leq \sqrt{n}\|\vz^{(t)} \|_2$,
\[
\pr\left(\|\vz^{(t)} \|_1 \geq  \exp\left(-\frac{M_t \alpha \epsilon }{R}\right)^{M_t}  + \epsilon\right)
\leq \exp\left(-\frac{\epsilon^2}{2\alpha^2\sqrt{n}
 \left(\sum_{\tau=1}^t\sum_{i\in \gS_t} \sigma^{(t,i)} + 
\sum_{\tau=1}^t\sigma^{(t)}\right)}\right)
\]

\end{theorem}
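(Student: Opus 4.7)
The plan is to establish the two claims in sequence: first the deterministic-in-expectation bound on $\|\bE[\tilde{\vz}^{(t+1)}]\|_1$, and then lift it to a high-probability bound on $\|\vz^{(t)}\|_1$ using subgaussian concentration of the stochastic fluctuations.

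For the linear-in-expectation rate, I would begin from the per-push recursion $\bE[\tilde{\vz}^{(t,i+1)}\mid\tilde{\vz}^{(t,i)}] = (\mb I - \tilde{\mQ}\,\ve_{s_i}\ve_{s_i}^T)\,\tilde{\vz}^{(t,i)}$, with $\tilde{\mQ} = \tfrac{1+\alpha}{2}\mD^{1/2}\mQ\mD^{-1/2}$, already derived in the monotonicity proof. Left-multiplying by $\mathbf{1}^T$ and exploiting $\bE[\tilde{\vz}^{(t,i)}]\geq 0$ (so that $\|\cdot\|_1 = \mathbf{1}^T (\cdot)$), a direct expansion of $\mQ = \mb I - \tfrac{1-\alpha}{1+\alpha}\mD^{-1/2}\mA\mD^{-1/2}$ together with $\mathbf{1}^T\mA = \vd^T$ gives $\mathbf{1}^T\tilde{\mQ}\ve_{s_i} = \alpha$. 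Therefore
\[
\|\bE[\tilde{\vz}^{(t,i+1)}]\|_1 = \|\bE[\tilde{\vz}^{(t,i)}]\|_1 - \alpha\,\bE[\tilde{z}_{s_i}^{(t,i)}].
\]
The threshold rule in Alg.~\ref{algo:random-appr} guarantees $\bE[\tilde{z}_{s_i}^{(t,i)}]\geq \epsilon$. Combined with Assumption \ref{assp:maxbound} and conservation (which forces $\|\bE[\tilde{\vz}^{(t,i)}]\|_1\leq 1 \leq R$ inductively), this yields the multiplicative contraction
\[
\|\bE[\tilde{\vz}^{(t,i+1)}]\|_1 \leq \left(1 - \frac{\alpha\epsilon}{R}\right)\|\bE[\tilde{\vz}^{(t,i)}]\|_1,
\]
and iterating over the $M_t$ pushes with $1-x\leq e^{-x}$ delivers $\|\bE[\tilde{\vz}^{(t+1)}]\|_1\leq\exp(-M_t\alpha\epsilon/R)$.

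For the concentration claim, I would use $\|\vz^{(t)}\|_1 \leq \sqrt{n}\,\|\vz^{(t)}\|_2$ and split $\|\vz^{(t)}\|_2 \leq \|\bE[\vz^{(t)}]\|_2 + \|\vz^{(t)}-\bE[\vz^{(t)}]\|_2$. The first term is controlled by Part 1 via $\|\bE[\vz^{(t)}]\|_2\leq \|\bE[\vz^{(t)}]\|_1$. For the second, I would write $\vz^{(t)}-\bE[\vz^{(t)}]$ as a telescoping martingale whose increments are the mean-zero fluctuations produced by each push substep (parameter $\sigma^{(\tau,i)}$) and each dual-correction step (parameter $\sigma^{(\tau)}$), as supplied by Assumption \ref{assp:subgauss}. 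Conditional subgaussianity is preserved under summation with additive variance proxies, so the total deviation is subgaussian with parameter proportional to $\sum_{\tau=1}^{t}\sum_{i\in\gS_\tau}(\sigma^{(\tau,i)})^2 + \sum_{\tau=1}^t(\sigma^{(\tau)})^2$; the factor $\alpha$ enters through the primal-dual update scaling and $\sqrt{n}$ through the $\ell_1/\ell_2$ conversion. A standard one-sided subgaussian tail bound then yields the stated inequality.

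The main obstacle is the conversion from an additive to a multiplicative per-push decrease in Part 1: the threshold condition trivially gives additive improvement $\alpha\epsilon$, but an exponential rate requires showing this improvement is a fixed fraction of the current $\ell_1$ mass. This hinges on the inductive maintenance of $\|\bE[\tilde{\vz}^{(t,i)}]\|_1 \leq R$ and on handling the measurable selection of $s_i$ when invoking the threshold in expectation rather than almost surely. The concentration step in Part 2 is largely mechanical, but care is needed because the subgaussian parameters $\sigma^{(\tau,i)}$ are themselves random (depending on the realized support and residual sizes); either a worst-case uniform bound must be invoked or the argument must be carried out at the level of the conditional distribution at each step.
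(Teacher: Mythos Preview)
Your proposal is correct and follows essentially the same approach as the paper: convert the additive per-push decrease $\alpha\epsilon$ (from the threshold rule) into a multiplicative contraction by dividing through the bound $\|\bE[\tilde{\vz}^{(t)}]\|_1\leq R$, telescope, and apply $1-x\leq e^{-x}$; then establish the tail bound by summing the subgaussian parameters of the push and dual-correction increments and applying Hoeffding. The only cosmetic difference is that the paper works at the epoch level---invoking the previously proved descent lemma and conservation rather than re-deriving your per-push $\ell_1$ recursion via $\mathbf{1}^T\tilde{\mQ}\ve_{s_i}=\alpha$---and it computes the subgaussian parameter of $\|\mD^{1/2}\vx^{(t)}\|_1$ (linked to $\|\tilde{\vz}^{(t)}\|_1$ by conservation) instead of decomposing $\vz^{(t)}$ directly into mean plus fluctuation.
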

\begin{proof}
In Th. \ref{th:early_stopping} we already showed that 
\[
\|\bE[\mD^{1/2}\vx^{(t+1)} ]\|_1 -\|\bE[\mD^{1/2} \vx^{(t)}]\|_1 \geq  | \gS_t| \alpha \epsilon
\]
so by conservation, 
\[
\|\bE[ \tilde \vz^{(t)}]\|_1 - \|\bE[ \tilde \vz^{(t+1)}] \|_1\geq  | \gS_t| \alpha \epsilon \geq | \gS_t| \alpha \epsilon \frac{\|\bE[ \tilde \vz^{(t)}]\|_1 }{R}
\]
 where the last step uses the assumption that   $\|\tilde \vz^{(t)}\|_2\leq R$.
 
Then

 \[
 \|\bE[ \tilde \vz^{(t+1)}] \|_1\leq  \|\bE[ \tilde \vz^{(t)}]\|_1 \left(1-   \frac{| \gS_t| \alpha \epsilon  }{R}  \right) \leq \underbrace{\|\ve_s\|_1}_{=1} \prod_{\tau=1}^t \left(1-   \frac{| \gS_\tau| \alpha \epsilon  }{R}  \right) \leq \prod_{\tau=1}^t\exp\left(-\frac{| \gS_\tau| \alpha \epsilon }{R}\right)
\]
 Writing $M_t = \sum_{\tau=1}^t|\gS_\tau|$ gives 
 \[
 \|\bE[ \tilde \vz^{(t+1)}] \|_1\leq    \exp\left(-\frac{M_t \alpha \epsilon }{R}\right).
\]

Next, note that 
\[
\subgauss(\|\mD^{1/2}\vx^{(t)}\|_1) = 
\subgauss(\alpha \sum_{\tau=1}^t\sum_{i\in \gS_t} {\tilde \vr}^{(t,i)})
=\alpha^2
 \sum_{\tau=1}^t\sum_{i\in \gS_t} \sigma^{(t,i)} + 
\alpha^2 \sum_{\tau=1}^t\sigma^{(t)}
\]

Applying Hoeffding's bound does the rest.
 
\end{proof}

\section{Online learning results (Section \ref{sec-learn})}
\label{app:proofs:learn}

\begin{property}
    If $\vy$ is $\gamma$-smooth w.r.t. a graph with Laplacian $\mL$, and a sparsified graph incurs $\mL'$ where $|\vx^T(\mL-\mL')\vx|\leq \epsilon \vx^T\vx$, then $\vy$ is $(\gamma + \epsilon n)$-smooth w.r.t. a graph with Laplacian $\mL'$.
\end{property}  
\begin{theorem}
     \textsc{Relaxation} using $\gamma' = \gamma +\epsilon n$ achieves 

\[
\regret(n) \leq  \sqrt{\tr\left(\left( \frac{\mL'}{2\gamma'} + \frac{\mI}{2n}\right)^{-1}\right)} \leq   D \sqrt{2 n^{1+\rho}} + \sqrt{\frac{\epsilon n}{1-\beta}}.
\]
\end{theorem}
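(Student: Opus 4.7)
The strategy is a two-step \emph{apply-then-perturb} argument. First, I would invoke Theorem~\ref{th:rakhlin} with the sparsified Laplacian $\mL'$ in place of $\mL$ and the modified smoothness constant $\gamma' = \gamma + \epsilon n$ in place of $\gamma$. The Property stated immediately before the theorem guarantees that $\vy$ remains $\gamma'$-smooth with respect to $\mL'$, so the hypotheses of Theorem~\ref{th:rakhlin} are satisfied and, applied verbatim to the perturbed matrix, it yields the first claimed inequality $\regret(n) \le \sqrt{\tr((\mL'/(2\gamma') + \mI/(2n))^{-1})}$.

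For the second inequality, the plan is to bound the perturbed trace by the clean one via a PSD inequality. Rewriting the Property's hypothesis as $\mL' \succeq \mL - \epsilon\mI$, and observing that the choice $\gamma' = \gamma + \epsilon n$ is tuned precisely so that $\tfrac{1}{2n} - \tfrac{\epsilon}{2\gamma'} = \tfrac{\gamma}{2n\gamma'}$, one obtains
\[
\frac{\mL'}{2\gamma'} + \frac{\mI}{2n} \;\succeq\; \frac{\mL - \epsilon\mI}{2\gamma'} + \frac{\mI}{2n} \;=\; \frac{\gamma}{\gamma'}\left(\frac{\mL}{2\gamma} + \frac{\mI}{2n}\right).
\]
Inverting and taking traces gives $\tr(\mM'^{-1}) \le (\gamma'/\gamma)\tr(\mM^{-1}) = \tr(\mM^{-1}) + (\epsilon n/\gamma)\tr(\mM^{-1})$, and applying $\sqrt{a+b} \le \sqrt{a} + \sqrt{b}$ splits the square root into the original regret bound $D\sqrt{2n^{1+\rho}}$ (by Theorem~\ref{th:rakhlin}) plus an additive error of the form $\sqrt{(\epsilon n/\gamma)\tr(\mM^{-1})}$.

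To convert the additive error into the claimed $\sqrt{\epsilon n/(1-\beta)}$, I would exploit the spectral structure of the symmetric normalized PPR Laplacian $\mL = \mI - \beta\mD^{-1/2}\mA\mD^{-1/2}$, whose eigenvalues lie in $[1-\beta,\,1+\beta]$. This yields $\mM \succeq \tfrac{1-\beta}{2\gamma}\mI$ and therefore a bound on $\tr(\mM^{-1})$ of order $\gamma/(1-\beta)$ in the natural regime; substituting into $(\epsilon n/\gamma)\tr(\mM^{-1})$ cancels the $\gamma$'s and produces the target additive error.

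\emph{Main obstacle.} The delicate step is the final constant tracking: the crude coordinate-wise bound $\tr(\mM^{-1}) \le 2\gamma n/(1-\beta)$ introduces an extra factor of $n$ relative to the stated rate. To recover the tight $\sqrt{\epsilon n/(1-\beta)}$, one must leverage the regime $\gamma \ll n$ forced by $\rho < 1$, or replace the crude trace bound with a direct perturbative identity $\tr(\mM'^{-1}) - \tr(\mM^{-1}) = \tr(\mM^{-1}(\mM-\mM')\mM'^{-1})$ and bound $\|\mM-\mM'\|$ in operator norm using both $\|\mL'-\mL\| \le \epsilon$ and $\|\mL\| \le 1+\beta$. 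Everything else is routine linear-algebraic bookkeeping.
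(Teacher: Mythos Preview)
Your plan is essentially identical to the paper's proof. The paper also works eigenvalue by eigenvalue---equivalent to your PSD comparison---using exactly the cancellation $(\lambda_i-\epsilon)n+\gamma' = \lambda_i n+\gamma$ that makes $\gamma'=\gamma+\epsilon n$ the natural choice, and then bounds each summand via $\lambda_{\min}(\mL)=1-\beta$; your $\sqrt{a+b}\le\sqrt a+\sqrt b$ and the paper's $\sqrt a-\sqrt b\le\sqrt{a-b}$ are the same splitting.

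The ``main obstacle'' you flag is real and is not a missing idea on your part: it is an arithmetic slip in the paper's own argument. From
\[
\frac{2n(\gamma+\epsilon n)}{\lambda_i n+\gamma}-\frac{2n\gamma}{\lambda_i n+\gamma}
\]
the paper writes $\dfrac{\epsilon n}{\lambda_i n+\gamma}$, but the correct value is $\dfrac{2\epsilon n^2}{\lambda_i n+\gamma}$; carrying the dropped factor of $2n$ through the remaining steps yields an additive term $n\sqrt{2\epsilon/(1-\beta)}$ rather than the stated $\sqrt{\epsilon n/(1-\beta)}$. So you should not search for a sharper trace identity or a cleverer use of $\rho<1$ to recover that constant---the paper does not supply one either, and the discrepancy you noticed is a defect of the stated bound, not of your reasoning.
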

\begin{proof}
\begin{eqnarray*}
 \regret_{\gG}(n) - \regret_{\tilde \gG}(n)&=&  \sqrt{\tr\left(\left( \frac{\mL'}{2\gamma'} + \frac{\mI}{2n}\right)^{-1}\right)}  -   \sqrt{\tr\left(\left( \frac{\mL}{2\gamma} + \frac{\mI}{2n}\right)^{-1}\right)} \\
&\leq &   \sqrt{\tr\left(\left( \frac{\mL'}{2\gamma'} + \frac{\mI}{2n}\right)^{-1}\right)
-
\tr\left(\left( \frac{\mL}{2\gamma} + \frac{\mI}{2n}\right)^{-1}\right)} \\
&= &   \sqrt{\tr\left(\left( \frac{\mL'}{2\gamma'} + \frac{\mI}{2n}\right)^{-1}
-
\left( \frac{\mL}{2\gamma} + \frac{\mI}{2n}\right)^{-1}\right)} \\
&= &   \sqrt{\sum_{i=1}^n \frac{2n\gamma'}{\lambda_i' n + \gamma'} - \frac{2n\gamma}{\lambda_i n +\gamma}  }\\
&\leq &   \sqrt{\sum_{i=1}^n \frac{2n(\gamma+\epsilon n)}{(\lambda_i-\epsilon) n + \gamma + \epsilon n} - \frac{2n\gamma}{\lambda_i n +\gamma}  }  \\
\end{eqnarray*}
Since 
\begin{eqnarray*}
\frac{2n(\gamma+\epsilon n)}{(\lambda_i-\epsilon) n + \gamma + \epsilon n} - \frac{2n\gamma}{\lambda_i n +\gamma}  
&=&
\frac{2n(\gamma+\epsilon n)}{\lambda_in + \gamma } - \frac{2n\gamma}{\lambda_i n +\gamma} 
=
\frac{\epsilon n}{\lambda_in + \gamma }  \leq \frac{\epsilon }{\lambda_{\min} }
\end{eqnarray*}
then

\begin{eqnarray*}
 \regret_{\gG}(n) - \regret_{\tilde \gG}(n)
&\leq &   \sqrt{\frac{\epsilon n}{\lambda_{min}}}
\end{eqnarray*}
Picking a kernel such that $\lambda_{\min}(\mL) = 1-\beta$ yields the result. 
\end{proof}

\section{Extended numerical section}
\label{app:experiments}
\paragraph{Data statistics.}
Table \ref{tab:node_degree_summary} gives a summary of the graph characteristics of several large graphs. Our experiments cover the first 7.  
We added a few very large graphs that we could not compute ourselves due to lack of computational resources, to highlight the heavy tailed degree distribution (by comparing the mean/median/max node degree values).

\begin{table}[htbp]
{\scriptsize
    \centering
    \resizebox{\textwidth}{!}{%
    \begin{tabular}{l|c|c|c|c|c|c|c}
        \hline
      & \parbox{1.5cm}{\centering \# nodes} & \parbox{1.5cm}{\centering \# edges} & \parbox{1cm}{\centering avg. n.d.} & \parbox{1cm}{\centering median. n.d.} & \parbox{1cm}{\centering max n.d.} & \parbox{1.5cm}{\centering max n.d. / \# nodes}& \parbox{1.5cm}{\centering max n.d. / avg. n.d.}\\[3ex]\hline
        \hline \multicolumn{8}{c}{Datasets in this paper}\\
        \hline   
political   & 1222 & 33431 & 27.35&{13} &  351 &0.28& 12.83\\
citeseer &2110&7336&3.47&{2} &  99 &0.046& 28.47\\
cora &2485&10138&4.079& {3} & 168 &0.067& 41.17\\
pubmed &19717&88648 & 4.49& {2} & 171 &0.0086& 38.03\\
mnist &12000&194178&16.1815& {12} & 226 &0.018& 13.96\\
blogcatalog &10312&667966&64.77& {21} & 3992 &0.38& 61.62\\
ogbn-arxiv & 169343 & 1166243 & 13.7&{6} &  13161& 0.077 & 960.65\\
        \hline \hline \multicolumn{7}{c}{Other datasets}
\\
\hline 
facebook (artist) & 50515 & 819306 & 32.4 &{13}&  1469 & 0.029 & 45.339\\
Amazon0302 & 262111 & 899792 & 6.86 & {6}& 420 & 0.0016 & 61.22\\
com-dblp &  317080 & 1049865 & 6.62 &{4} & 343  & 0.0011 & 51.81\\
web-Google & 875713 & 4322051 & 9.87 & {5}&  6332 & 0.0072 & 641.54\\
youtube &1134890&5975248&5.26 &{1}& 28754 &0.0253& 5461.30\\
as-skitter &  1696415 & 11095297 & 13.08 &{5}&  35455 &  0.021 & 2710.63\\
        \hline
    \end{tabular}%
    }}
    \caption{Summary of large graphs and their node degree distributions from Stanford Network Analysis Platform (SNAP) and Open Graph Benchmark (OGB). n.d. = node degree.}
    \label{tab:node_degree_summary}
\end{table}

Figure \ref{fig:appr_vs_power} gives a straightforward comparison between APPR and power method for solving (PPR-symm) for a randomly generated sparse vector $\mb {\bar y}$. While there is considerable variability across graphs, there are several notable cases where APPR gives the better tradeoff, due to its flexible nature.
 \begin{figure}
    \centering
\includegraphics[width=.85\linewidth]{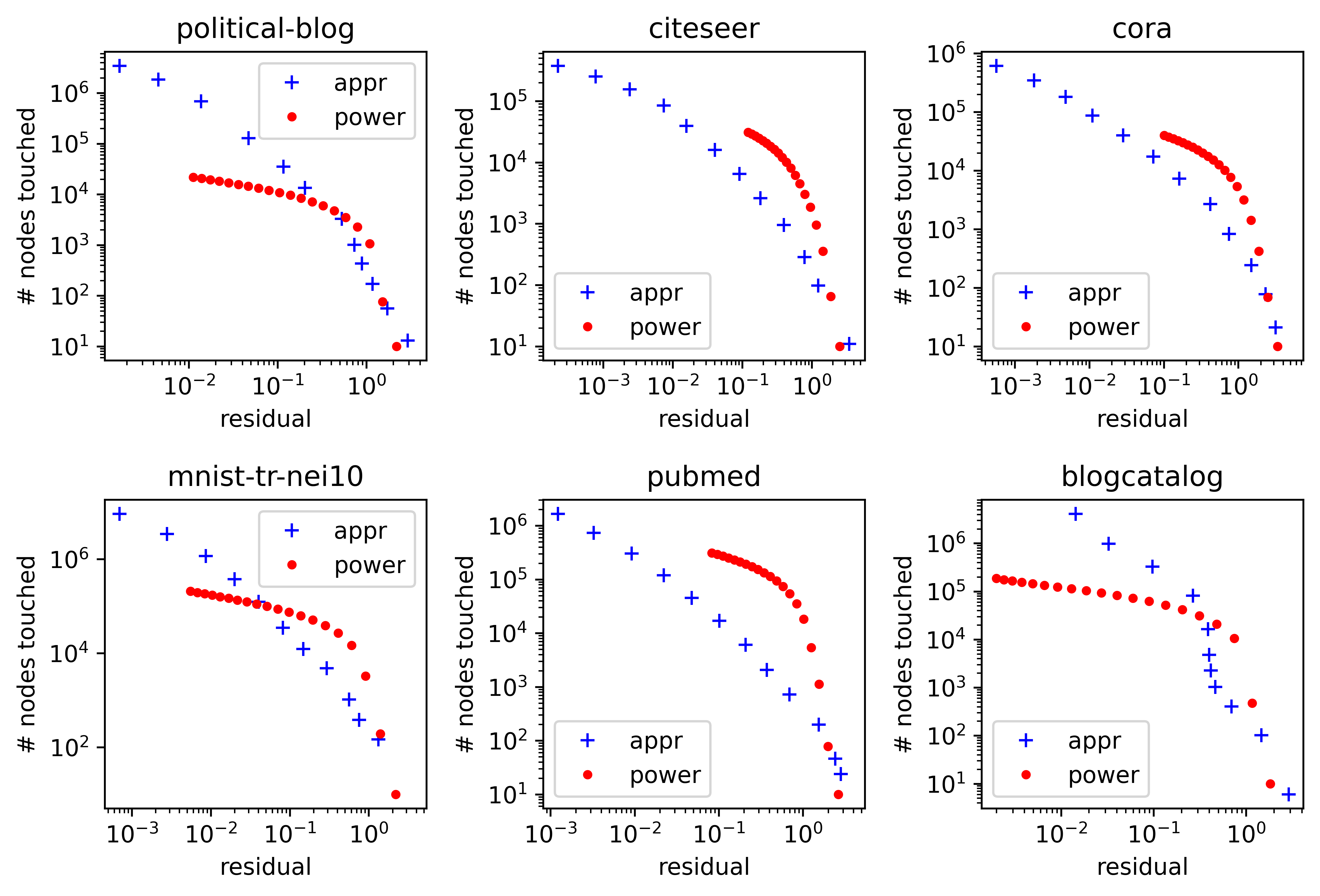}
\caption{\textbf{Power method vs APPR.} Tradeoff curves of complexity (\# nodes touched) vs performance (residual).  }
    \label{fig:appr_vs_power}
\end{figure}

Figure \ref{fig:cutvals_offline_all} gives the edge ratio across a wider range of datasets.  The edge ratio indicates how well a graph cluster correlates with the true labels, and depends on the specific dataset. The effect of subsampling on edge ratio is also correlated with the dataset; in effect, it indicates how much effect influencers, or other forms of edge diffusivity, are indicative of same-label or different-label behaviors.

 \begin{figure}
    \centering
\includegraphics[width=.9\linewidth]{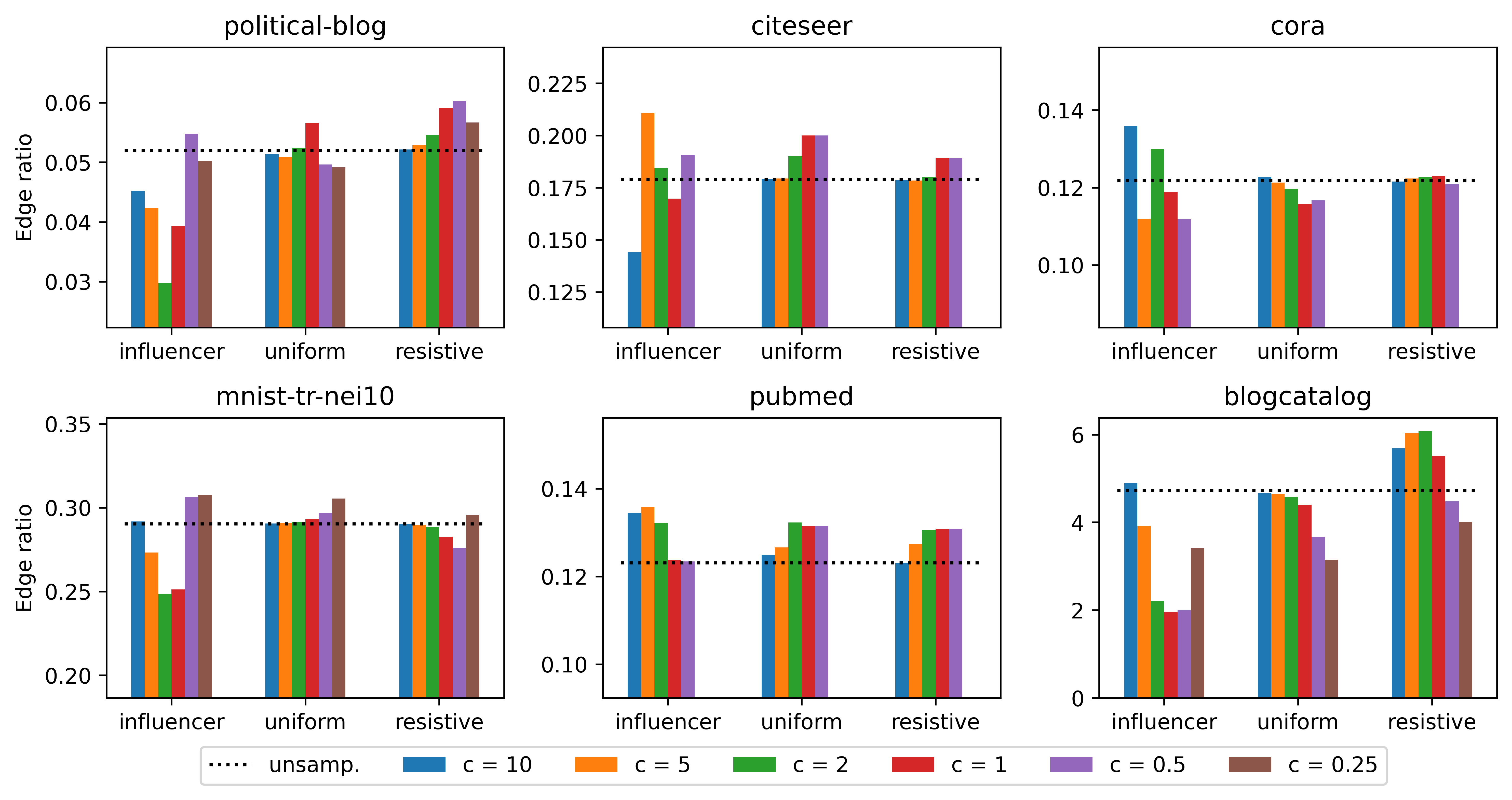}
\caption{\textbf{Edge ratio.} This measures the proportion of edges that connect different-labeled nodes over same-labeled nodes. (Smaller is better.) Sparsifications: U = uniform, R = resistive, I = influencer. The labels show $\bar q$ for (I), and the corresponding sparsification rate   for (U) and (R). }
    \label{fig:cutvals_offline_all}
\end{figure}

 Figure \ref{fig:clustering_results_polblog} shows the results for different values of $\bar q$, providing an ablative study.

 \begin{figure}
    \centering
    \includegraphics[width=\linewidth]{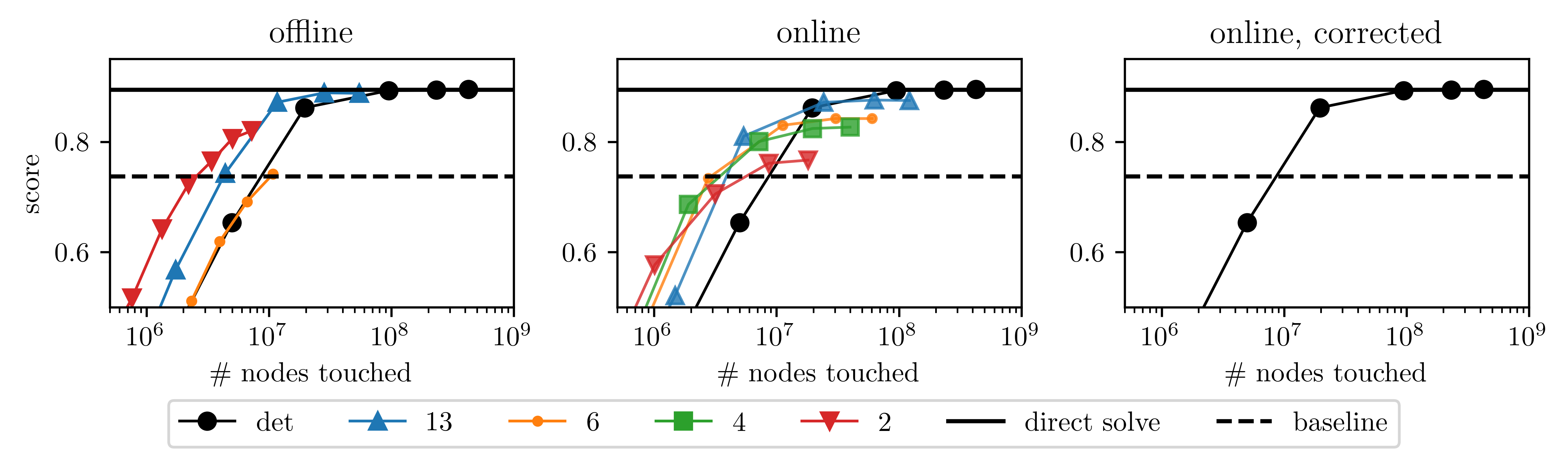}
\caption{\textbf{Clustering performance (ablation).} det = deterministic APPR, off = offline, on = online.  Key indicates $c$, and $\bar q = c \cdot$ median node degree.  All experiments were run for the same set of $\epsilon$ values. Higher is better. }
    \label{fig:clustering_results_polblog}
\end{figure}

\end{document}